\providecommand{\tabularnewline}{\\}
\theoremstyle{remark}
\newtheorem{rem}{\protect\remarkname}
\theoremstyle{definition}
\newtheorem{defn}{\protect\definitionname}
\theoremstyle{plain}
\newtheorem{thm}{\protect\theoremname}
\theoremstyle{definition}
 \newtheorem{example}{\protect\examplename}
\theoremstyle{plain}
\newtheorem{prop}{\protect\propositionname}
\theoremstyle{plain}
\newtheorem{cor}{\protect\corollaryname}
\theoremstyle{plain}
\newtheorem{lem}{\protect\lemmaname}
\theoremstyle{plain}
\newtheorem{claim}{Claim}
\definecolor{green}{RGB}{0, 150, 0}
\definecolor{blue}{RGB}{0, 0, 200}
\definecolor{red}{RGB}{200, 0, 0}
\providecommand{\corollaryname}{Corollary}
\providecommand{\definitionname}{Definition}
\providecommand{\examplename}{Example}
\providecommand{\lemmaname}{Lemma}
\providecommand{\propositionname}{Proposition}
\providecommand{\remarkname}{Remark}
\providecommand{\theoremname}{Theorem}
\begin{document}
\title{{Communication, Renegotiation and Coordination with Private Values}\thanks{We have benefited greatly from discussions with Srinivas Arigapudi, Tilman Borgers, Michael Greinecker, Jonathan Newton, Bill Sandholm, and Joel Sobel. We would like to express our gratitude to participants of various audiences for many useful comments: LEG2018 \& LEG2019 conferences (in Lund and Bar-Ilan university, respectively), Bielefeld Game Theory 2018 workshop, Israeli Game Theory 2018 conference in IDC, and seminar audiences at Caltech, Tel Aviv University, University of Cyprus, Haifa University, and UC San Diego. Yuval Heller is grateful to the European Research Council for its financial support (\#677057).}}
\author{Yuval Heller\thanks{Department of Economics, Bar Ilan University, Israel. Email: yuval.heller@biu.ac.il.} and Christoph Kuzmics\thanks{Department of Economics, University of Graz, Austria. Email: christoph.kuzmics@uni-graz.at. }}
\maketitle
\begin{abstract}
An equilibrium is communication-proof if it is unaffected by new opportunities to communicate and renegotiate. We characterize the set of equilibria of coordination games with pre-play communication in which players have private preferences over the 
coordinated outcomes. 
{The set of communication-proof equilibria is a small and relatively homogeneous subset of the set of qualitatively diverse Bayesian Nash equilibria.}
Under a communication-proof equilibrium, players never miscoordinate, play their jointly preferred outcome whenever there is one, and communicate only the ordinal part of their preferences. Moreover, such equilibria are robust to changes in players' beliefs and interim Pareto efficient.
\end{abstract}
Final preprint of a manuscript accepted for publication in \emph{Games and Economic Behavior.}\\
\noindent \textbf{Keywords}: 
Secret handshake, evolutionary
robustness, cheap talk,
communication-proofness, renegotiation-proofness, incomplete information. \ \ \ \textbf{JEL codes}: C72, C73, D82\\

\section{Introduction}
We characterize communication-proof equilibria for a class of coordination games with pre-play cheap-talk communication in which all agents have private information about what action they would prefer to coordinate on. A Bayesian Nash equilibrium is communication-proof if, after the pre-play cheap talk and given the information that this reveals, an opportunity for additional communication 
{cannot lead the players to jointly switch} to a Pareto-improving equilibrium.\footnote{The notion of communication-proofness was introduced by \citet{blume1995communication} in their study of sender-receiver games with one-sided private information.}

We are interested in two typical kinds of situations for which communication-proofness is an appropriate solution concept, albeit
for different reasons in the two situations. The first kind of situation is one in which agents are sophisticated and keep (strategically)
communicating until they reach a mutually beneficial solution. {Communication}-proofness is defined to capture this idea, similarly to the  notions of renegotiation-proofness in contract theory (\citealp{hart1988contract}) or in the repeated games literature (\citealp{farrell1989renegotiation}). As an example, consider a situation of two firms trying to collude by implementing a market-sharing agreement. The agreement is such that each firm is allowed to sell only in specified regions and each firm has private information about which regions they prefer to serve. Another example is two co-authors working on a joint paper and each has to choose whether to write their part of the paper in LaTex or Microsoft Word. Each co-author has private information about their preferred word processor and the intensity of preference, yet both co-authors will gain from coordinating on working with the same word processor. {A similar} situation occurs when firms, when for instance collaborating on research and development, have to choose one of two possible standards. Both firms would benefit from agreeing on the same standard, but each firm has private preferences about which of the options should be the standard. Similarly, {the successful merger of two firms in practice hinges on their ability to efficiently consolidate multiple processes or data sources into a unified system.}

The second kind of situation is one in which communication is feasible and in which behavior is governed by a long-run learning (or evolutionary) process. Communication-proofness here corresponds to a requirement of evolutionary stability at an interim level, when agents can experiment with new behavior that is contingent on the use of additional communication (\citealp{robson1990efficiency}). As an example, consider the problem of two pedestrians suddenly finding themselves face-to-face and trying to get past each other, when they have private information about the direction they want to take after the encounter.\footnote{{A pedestrian can communicate using body gestures. These gestures signal information about the direction, rate, and resoluteness of their proposed course (see, e.g., }\citet[p. 11]{Goffman71}.}$^,$\footnote{{Although the importance of each specific pedestrian encounter is small, the fact that they occur frequently makes their aggregate importance significant, and may justify giving more focus to this class of understudied interactions.}}

The standard solution concept of Bayesian Nash equilibrium is not helpful in predicting whether players can achieve coordination in such incomplete-information settings, how efficient it is if they do, and how communication is used to achieve it. Coordination games with pre-play communication have a wide range of qualitatively very different equilibria.\footnote{{The standard evolutionary refinements are also not very helpful for games with pre-play communication. In a complete information setup, for instance, no strategy satisfies evolutionary stability, while ``too many'' strategies satisfy neutral stability (see \cite{Banerjee00})}.} 
{In addition to fully coordinated} {equilibria, there are babbling equilibria with a high likelihood of miscoordination. Some of these are evolutionarily stable in the absence of communication. There are also equilibria in which agents reveal some information about the intensity of their preferences, and these equilibria often also lead to miscoordination.}

Casual observation, however, suggests that players {typically}
manage to coordinate in such situations. {An illustrative example of this phenomenon can be found in the 1997 series of regional FCC auctions, where licenses for segments of the electromagnetic spectrum were allocated. During this auction, the 'competing' firms used the extremely limited public communication avenues embedded within the trailing digits of their bids to reveal information about their preferred regions. Through this tactic, the firms successfully achieved coordinated collusion, with each one focusing its bids exclusively on its preferred region} (\citealp{cramton2000collusive}).\footnote{In fact, the result of this paper that communication that relies on each player simultaneously sending either 0 or 1 is all that is needed for successful coordination provides another argument against allowing even a brief form of explicit communication between oligopolistic competitors.}

Players also typically coordinate effectively in our {pedestrian} example: Pedestrians typically are able to avoid bumping into each other, even though there is no uniform social norm such as ``always stay on the right'' as there is for cars (\citealp{young1998individual}). Pedestrians often use brief nonverbal communication to signal their preferred direction (e.g., a slight movement to the left or right, a tilt of the head, a glance in a certain direction). The (coordinated) direction in which they pass each other depends on this communication.\footnote{This is motivated by \citet[p. 6]{Goffman71}: ``Take, for example, techniques that pedestrians employ in order to avoid bumping into one another. [...] There are an appreciable number of such devices; they are constantly in use and they cast a pattern on street behavior. Street traffic would be a shambles without them.''}

We show that communication-proof equilibria have a specific structure that is consistent with these casual observations. We show that a strategy is a communication-proof equilibrium if and only if it satisfies the following three independent and easy to verify properties: players never miscoordinate, they play their jointly preferred outcome whenever there is one, and they communicate only {their preferred coordinated outcome without revealing how strongly they prefer this outcome over other coordinated outcomes}. 

The equilibria that satisfy these properties have a simple structure. In all these equilibria communication induces the agents to endogenously face games in which their ordinal preferences are common knowledge. In cases in which agents agree about the optimal joint action, they coordinate efficiently, i.e., on the action that both prefer. In cases where they disagree, they
still coordinate. {However, since the coordinated outcome is not influenced by the strength of each player's preference for one coordinated outcome over another, this coordination typically does not result in ex-ante efficiency.}

We show that communication-proof equilibria do not depend on the distribution of private preferences and are, thus, robust to changes in players' (first- or higher-order) beliefs. In particular, communication-proof equilibrium strategies remain communication-proof even in setups in which the players' distributions of types are interdependent. Also, communication-proof equilibria do not depend on the exact timing of the renegotiation (relative to the communication). We further show that our communication-proof equilibria satisfy appealing efficiency properties  (Section \ref{sec:Efficiency}): {They are Pareto efficient at the interim stage, and they Pareto dominate all equilibria of the game without communication.}

Next, we explore the boundaries of our main result within the class of 2 by 2 coordination games with private information. Our baseline model deals with simple coordination games, where all miscoordinated outcomes {induce the same payoff}. We demonstrate in Section \ref{subsec:Multi-Dimensional-Set}, that there are general coordination games that have 
a communication-proof equilibrium strategy {in which players sometimes miscoordinate}. This makes clear that our main result does not follow from the simple intuition that surely communication must lead to coordination as this is always more efficient. {Nonetheless, we can establish a} 
condition for general coordination games that guarantees that 
communication-proof equilibria are coordinated. {This condition requires that the payoff-dominant
equilibrium of each type coincides with their risk-dominant equilibrium.} 

Finally, we extend our results to more general setups. In Section \ref{sec:Asymmetric-Coordination-Games} we adapt our model and results to asymmetric coordination games. Section \ref{sec:Extreme-Types-with} allows a minority of the players to have non-coordination preferences for which one of the actions is dominant. It shows that an essentially unique strategy that satisfies the above three properties remains communication-proof in this setup. Appendix \ref{subsec:Multiple-Rounds-of} extends our results to multiple rounds of communication.

\paragraph{Relationship to the literature}

Game theorists have long recognized that coordination is an important aspect of successful economic and social interaction, that it requires an explanation even in complete-information coordination games, and that it does not occur in all circumstances. One possible explanation for some, fairly simple, examples of coordination is the concept of a focal point, due to \citet{schelling1960strategy}. This is, loosely speaking, a strategy profile that jumps out at players as clearly the right way to play a game. Perhaps one of the situations in which we most plausibly expect coordination is when people play the same coordination game many times with different people and there is some evolutionary (or learning) process. This approach is already present in the ``mass action'' interpretation of equilibrium given by \citet{nash1950non}. It is then taken up more formally in \citet{smith1973lhe} who define the notion of evolutionary stability. It is well known that all pure equilibria in (complete information) coordination games are evolutionarily stable (whereas mixed equilibria are not stable). This literature thus supports the view that while play in the long run will be coordinated, it is not necessarily efficiently coordinated.\footnote{\citet{kandori1993learning} and \citet{young1993evolution} show that in the long run and under persistent low-probability errors an evolutionary process leads to the risk-dominant, not
necessarily Pareto-dominant, equilibrium.}

Another explanation for coordination is that it is achieved through communication, even if it is simply cheap talk as in \citet{crawford1982strategic}. Early seminal contributions in this direction are \citet{Farrell87} and \citet{rabin1994model}. Communication alone, however, only adds equilibria: the equilibria of the game without communication ``survive'' the introduction of communication as babbling equilibria. The problem, therefore, of how play focuses on the coordinated equilibria does not go away, and one can again appeal to one of the above-mentioned criteria to explain why this might happen.

There is a literature that studies the evolutionary outcome of coordination games with cheap talk, initiated by \citet{robson1990efficiency}. If play is stuck in an inferior equilibrium, a small group of experimenting agents can recognize each other by means of a ``secret handshake.'' They can then play a Pareto-optimal strategy with each other and the inferior equilibrium strategy with agents who are not part of this group, thereby outperforming the agents outside the group.

The above-mentioned literature focuses on complete-information games. However, one of the main reasons why people communicate is that they have privately known preferences that they feel useful to share at least partially before finally choosing actions, as seen in the above examples. One of the main stumbling blocks of studying how communication helps achieve coordination in the presence of incomplete information is that it ``requires overcoming formidable multiple-equilibrium problems'' (\citealp[p. 592]{CH90}). 

We identify \citeauthor{blume1995communication}'s (\citeyear{blume1995communication}) notion of communication-proofness, adapted to our two-sided private information setting, as the appropriate extension of the secret-handshake argument to incomplete-information games. With our characterization result we then show that the plausible refinement of communication-proof equilibria removes this multiplicity problem to a large extent: All communication-proof equilibria, in contrast to Bayesian Nash equilibria, make very similar predictions. {Thus, a key contribution of our paper is showing that the refinement of communication-proofness can be helpful in analyzing games combining incomplete information and pre-play communication.}

\paragraph{Structure}
Section \ref{sec:Model} presents our model. Section \ref{sec:Equilibrium-Strategies} defines Bayesian Nash equilibria and the three key properties that communication-proof equilibria have. Section \ref{sec:Solution-Concept} defines the concept of communication-proofness. Section \ref{sec:Essentially-Unique-Rengotiation-} presents the main result and a sketch of its proof. Section \ref{sec:Efficiency} studies the efficiency properties of communication-proof equilibria. { Next we extend our results to more general setups: multi-dimensional set of types (Section \ref{subsec:Multi-Dimensional-Set}), asymmetric coordination games (Section \ref{sec:Asymmetric-Coordination-Games}), and the presence of a minority of non-coordination types (Section \ref{sec:Extreme-Types-with}). Section \ref{sec:Related-Literature} concludes with a discussion.}
The formal proofs 
are presented in Appendix \ref{sec:Proofs}. {Appendix \ref{app:properties} further studies the relations between the three key properties of Section \ref{sec:Equilibrium-Strategies}. In Appendix \ref{subsec:Multiple-Rounds-of} we extend our analysis to multiple rounds of communication.}

\section{Model\label{sec:Model}}

We consider a setup in which two agents with private idiosyncratic preferences play an (ex-ante) symmetric two-action coordination game that is preceded by pre-play cheap talk.

\paragraph{Players and types}

There are two players, each of which can choose one of two actions, $L$ and $R$. Each player has a privately known ``value'' or ``type.'' The two players' values are independently drawn from a common atomless distribution with a continuous cumulative distribution function $F$ with full support on the unit interval $U=[0,1]$ and with density $f$ (i.e., $f(u)>0$ for each $u\in U$).\footnote{Allowing distributions without full support induces a minor difference in our results: in this setup communication-proofness implies binary communication (as defined in Section \ref{sec:Equilibrium-Strategies}) only of messages that are used with positive probability. With full support it implies binary communication also of unused messages.}

\paragraph{Payoff matrix}

For any realized pair of types, $u$ and $v$, the players play a coordination game given by the payoff matrix given in Table \ref{tab:Payoff-Matrix-of}, where the first entry is the payoff of the player of type $u$ (choosing row) and the second entry is the payoff of the player of type $v$ (choosing column). We call this game the \emph{coordination game without communication} and denote it by $\Gamma$.

\begin{table}[h]
\begin{centering}
\caption{Payoff Matrix of the Coordination Game\label{tab:Payoff-Matrix-of}}
\par\end{centering}
\medskip{}

\centering{} %
\begin{tabular}{cc|cc}
 & \multicolumn{1}{c}{} & \multicolumn{2}{c}{\textcolor{red}{Type }\textcolor{red}{\emph{v}}}\tabularnewline
 &  & \textcolor{red}{\emph{L }}\textcolor{red}{{} }  & \textcolor{red}{\emph{R }}\tabularnewline
\hline
\multirow{2}{*}{\textcolor{blue}{Type }\textcolor{blue}{\emph{u}}} & \textcolor{blue}{\emph{L}}\textcolor{blue}{{} }  & \textcolor{blue}{1-}\textcolor{blue}{\emph{u}},~~\textcolor{red}{1-}\textcolor{red}{\emph{v}}  & \textcolor{blue}{0},~~\textcolor{red}{0} \tabularnewline
 & \textcolor{blue}{\emph{R }}\textcolor{blue}{{} }  & \textcolor{blue}{0},~~\textcolor{red}{0}  & \textcolor{blue}{\emph{u}},~~\textcolor{red}{\emph{v}} \tabularnewline
\end{tabular}
\end{table}

\begin{rem}
{In order to simplify the exposition of our model and to ease its notation, our baseline model focuses on symmetric coordination games with a one-dimensional set of coordination types. We extend our results to more general setups in later sections: general coordination games with a multi-dimensional set of types (Section \ref{subsec:Multi-Dimensional-Set}), asymmetric coordination games (Section \ref{sec:Asymmetric-Coordination-Games}), and allowing a minority of non-coordination (or dominant action) types (Section \ref{sec:Extreme-Types-with}).}
\end{rem}
\paragraph{Interpretation of the model and the motivating examples}

In the example of two firms trying to collude by market-sharing, choosing the same action corresponds to dividing the market such that each firm is a monopolist in one of the two regions. Choosing different actions corresponds to the firms competing in the same region, which yields a low profit normalized to zero. A firm's type $u$ corresponds to how profitable it is for the firm to be a monopolist in one region relative to being one in the other region. Similarly, in the second example of co-authors coordinating on the word processor, the players get a low payoff (normalized to 0) if they use different word processors. The difference between a player's type $u$ and $1-u$ corresponds to how much the author's ease of use in one word processor is larger than in the other one.

In the final motivating example of pedestrians suddenly finding themselves face to face and trying to get past each other, each action corresponds to the direction in which the pedestrians turn to avoid bumping into each other. When both pedestrians choose the same side (say, each pedestrian chooses her left), the pedestrians do not bump into each other. When they choose different sides they do bump into each other, in which case they get a low payoff normalized to zero. A pedestrian's type reflects her private preference for the direction in which she would like to turn to avoid a collision due to the direction she plans to take after the encounter. That is, a type $u>\sfrac12$ corresponds to a pedestrian who plans to head right after the encounter. For such a type choosing \emph{R} is more convenient than choosing \emph{L} as it induces a shorter walking path.

\paragraph{Pre-play communication}

After learning their type, but before playing this coordination game, the two players each simultaneously send a publicly observable message from a finite set of messages $M$ (satisfying $4\leq\left|M\right|<\infty$). We denote by $\Delta(M)$ the set of all probability distributions over messages in $M$.\footnote{Our results essentially remain the same if $M$ is countably infinite. The assumption of $\left|M\right| \geq 4$ implies that a single round of communication during the renegotiation stage can achieve a sufficient degree of communication for our main results to hold (see Section \ref{sec-CP}). Our results remain the same for $M=2$ if one allows the players during the renegotiation stage to either have two stages of communication or to rely on a (binary) sunspot.}$^,$\footnote{In Appendix \ref{subsec:Multiple-Rounds-of} we show that, communication-proof equilibria in coordination games are unaffected by the length (number of rounds) of communication (in contrast to the results in other setups of incomplete-information games; see, e.g., \citealp{aumann2003long}). {Moreover, our results  remain the same even when transitioning from simultaneous to sequential communication, as long as players are allowed to observe a sunspot, which represents a public realization of a random variable. This is because the joint lotteries required to implement most communication-proof equilibria (with the exception of $\sigma_L$ and $\sigma_R$) necessitate either simultaneous communication or the presence of a sunspot.}} We assume that messages are costless. We call the game, so amended, the \emph{coordination game with communication} and denote it by $\langle\Gamma,M\rangle$.

\paragraph{Strategies}

A player's (ex-ante) strategy in the coordination game with communication is then a pair $\sigma=(\mu,\xi)$. The (Lebesgue measurable) \emph{message function} $\mu:U\to\Delta(M)$ describes which (possibly random) message is sent for each possible realization of the agent's type.  The \emph{action function} $\xi:M\times M\rightarrow U$ describes the maximal type (cutoff type) that chooses $L$ as a function of the observed message profile. That is, when an agent who follows strategy $(\mu,\xi)$ observes a message profile $\left(m,m'\right)$ (message $m$ sent by the agent, and message $m'$ sent by the opponent), then the agent plays $L$ if her type $u$ is at most $\xi\left(m,m'\right)$ (i.e., if $u\le\xi(m,m')$), and she plays $R$ if $u>\xi(m,m')$.\footnote{In Appendix \ref{subsec:Any-Best-Reply-Strategy-is-cutoff} we show that the restriction to cut-off strategies is without loss of generality: any ``generalized'' strategy $\xi:M\times M\rightarrow U$ is dominated by a strategy with a ``threshold'' action function.} (The choice that the threshold type plays $L$ does not affect our analysis, given the assumption of $F$ being atomless.) Let $\Sigma$ be the set of all strategies in the game $\langle\Gamma,M\rangle$.

Let $\mu_{u}\left(m\right)$ denote the probability, given message function $\mu$, that a player sends message $m$ if she is of type $u$. Let $\bar{\mu}\left(m\right)=\mathbb{E}_{u}\left[\mu_{u}\left(m\right)\right]$ be the mean probability that a player of a random type sends message $m$ (where the expectation is taken with respect to $F$). Let $\mbox{supp}\left(\bar{\mu}\right)=\left\{ m\in M\mid\bar{\mu}\left(m\right)>0\right\}$ denote the support of $\bar{\mu}$. We say that message $m$ is in the support of $\sigma=\left(\mu,\xi\right)$, denoted by $m \in \mbox{supp}(\sigma)$, if $m\in\mbox{supp}\left(\bar{\mu}\right)$.

With a slight abuse of notation we write $\xi\left(m,m'\right)=L$ when all types (who send message $m$ with positive probability) play $L$ (i.e., when $\xi\left(m,m'\right)\geq\sup\left(u\in U|\mu_{u}\left(m\right)>0\right)$), and we write $\xi\left(m,m'\right)=R$ when all types play $R$ (i.e., when $\xi\left(m,m'\right)\leq\inf\left(u\in U|\mu_{u}\left(m\right)>0\right)$).

\section{Equilibrium Strategies\label{sec:Equilibrium-Strategies} and Three Key Properties}

We here define the standard notion of (symmetric Bayesian Nash) equilibrium strategies, present the three key properties that communication-proof equilibria turn out to have, and present examples of equilibria in the coordination game with communication with and without these properties. These equilibria are illustrated in Figure \ref{fig:equilibria} at the end of this section.

Given a strategy profile $\left(\sigma,\sigma'\right)$ and a type profile $u,v\in U$, let $\pi_{u,v}\left(\sigma,\sigma'\right)$ denote the payoff of a player of type $u$ who follows strategy $\sigma$ and faces an opponent of type $v$ who follows strategy $\sigma'$. Formally, for $\sigma=(\mu,\xi)$ and $\sigma'=(\mu',\xi')$,
\begin{eqnarray*}
\pi_{u,v}\left(\sigma,\sigma'\right)=\sum_{m\in M}\sum_{m'\in M}\mu_{u}\left(m\right)\mu_{v}\left(m'\right)\left((1-u)\boldsymbol{1}_{\{u\le\xi(m,m')\}}\boldsymbol{1}_{\{v\le\xi'(m',m)\}}\right.+\left.u\boldsymbol{1}_{\{u>\xi(m,m')\}}\boldsymbol{1}_{\{v>\xi'(m',m)\}}\right),
\end{eqnarray*}
where $\boldsymbol{1}_{\{x\}}$ is the indicator function equal to $1$ if statement $x$ is true and zero otherwise. Let
\[
\pi_{u}\left(\sigma,\sigma'\right)=\mathbb{E}_{v}\left[\pi_{u,v}\left(\sigma,\sigma'\right)\right]\equiv\int_{v=0}^{1}\pi_{u,v}\left(\sigma,\sigma'\right)f\left(v\right)dv
\]
denote the expected interim payoff of a player of type $u$ who follows strategy $\sigma$ and faces an opponent with a random type who follows strategy $\sigma'$. Finally, let,
\[
\pi\left(\sigma,\sigma'\right)=\mathbb{E}_{u}\left[\pi_{u}\left(\sigma,\sigma'\right)\right]\equiv\int_{u=0}^{1}\pi_{u}\left(\sigma,\sigma'\right)f\left(u\right)du
\]
denote the ex-ante expected payoff of an agent who uses strategy $\sigma$ against strategy $\sigma'$.

A strategy $\sigma$ is a \emph{(symmetric Bayesian Nash) equilibrium strategy} if $\pi_{u}\left(\sigma,\sigma\right)\geq\pi_{u}\left(\sigma',\sigma\right)$ for each $u\in U$ and each strategy $\sigma'\in\Sigma$. Let $\mathcal{E}\subseteq\Sigma$ denote the set of all equilibrium strategies of $\langle\Gamma,M\rangle$.

\paragraph{Three key properties}

We call a strategy $\sigma=(\mu,\xi)\in\Sigma$ \emph{mutual-preference consistent} if whenever $u,v<\sfrac{1}{2}$ then $\xi\left(m,m'\right)=\xi\left(m',m\right)=L$ for all $m\in\mbox{supp}(\mu_{u})$ and all $m'\in\mbox{supp}(\mu_{v})$ and if whenever $u,v>\sfrac{1}{2}$ then $\xi\left(m,m'\right)=\xi\left(m',m\right)=R$ for all $m\in\mbox{supp}(\mu_{u})$ and all $m'\in\mbox{supp}(\mu_{v})$. That is, players with the same ordinal preference coordinate on their mutually preferred outcome.

We call a strategy \emph{coordinated} if $\xi\left(m,m'\right)=\xi\left(m',m\right)\in\left\{ L,R\right\}$ for any messages $m,m'\in\mbox{supp}(\bar{\mu})$. A coordinated strategy leads to a coordinated outcome with probability one.

For any message $m\in M$, define the expected probability of a player's opponent playing $L$ conditional on the player sending $m$ and the opponent following 
$\sigma=(\mu,\xi)\in\Sigma$, as
\[
\beta^{\sigma}(m)=\int_{u=0}^{1}\sum_{m'\in\mbox{supp}(\mu_{u})}\mu_{u}(m')\boldsymbol{1}_{\left\{ u\leq\xi(m',m)\right\} }f(u)du.
\]
We say that strategy $\sigma$ has \emph{binary communication} if there are two numbers $0 \le \underline{\beta}^{\sigma} \le \overline{\beta}^{\sigma} \le 1$ such that for all messages $m\in M$ we have $\beta^{\sigma}(m)\in[\underline{\beta}^{\sigma},\overline{\beta}^{\sigma}]$, for all messages $m\in M$ such that there is a type $u<\sfrac{1}{2}$ with $\mu_{u}(m)>0$ we have $\beta^{\sigma}(m)=\overline{\beta}^{\sigma}$, and for all messages $m\in M$ such that there is a type $u>\sfrac{1}{2}$ with $\mu_{u}(m)>0$ we have $\beta^{\sigma}(m)=\underline{\beta}^{\sigma}$. That is, binary communication implies that players (essentially) use just two kinds of messages: any message sent by types $u<\sfrac12$ induces the same consequence of maximizing the probability of the opponent playing $L$, and any message sent by types $u>\sfrac12$ induces the opposite consequence of maximizing the opponent's probability of playing $R$. Note that, as defined here, a strategy, in which the player's message does not affect the probability of the partner playing $L$, has binary communication.

In Appendix \ref{app:properties} we show that none of these three properties is implied by the other two.

\paragraph{Left tendency $\alpha^{\sigma}$} Consider a strategy that satisfies the above three properties. Coordination and mutual-preference consistency jointly determine the behavior of agents with the same ordinal preferences (i.e., when both types are below $\sfrac{1}{2}$, or both above $\sfrac{1}{2})$. Binary communication, then, implies the following: The probability with which the players coordinate on $L$, conditional on having different ordinal preferences (i.e., conditional on one player having type $u<\sfrac{1}{2}$ and the other player having type $v>\sfrac{1}{2}$), is independent on the message sent by the player. We denote this probability by $\alpha^{\sigma}$, and refer to it as the \emph{left tendency} of the strategy. We can express $\underline{\beta}^{\sigma}$ and $\overline{\beta}^{\sigma}$ as follows:
\[
\underline{\beta}^{\sigma}=F(\sfrac{1}{2})\alpha^{\sigma}\,\,\,\,\,\,\,\mbox{and}\,\,\,\,\,\,\,\overline{\beta}^{\sigma}=F(\sfrac{1}{2})+\left(1-F(\sfrac{1}{2})\right)\alpha^{\sigma}.
\]
The first equality ($\underline{\beta}^{\sigma}=F(\sfrac{1}{2})\alpha^{\sigma}$) is implied by the fact that when any type $u>\sfrac{1}{2}$ sends a message expressing her preference for coordination on $R$, the players coordinate on $L$ only if the opponent's preferred outcome is $L$ (which happens with a probability of $F(\sfrac{1}{2})$). They then coordinate on $L$ with a probability of $\alpha^{\sigma}$. The second equality ($\overline{\beta}^{\sigma}=F(\sfrac{1}{2})+\left(1-F(\sfrac{1}{2})\right)\alpha^{\sigma}$) follows from the following observation: When any type $u<\sfrac{1}{2}$ sends a message expressing her preference for coordination on $L$, the players coordinate on $L$ with probability one if the opponent's preferred outcome is $L$, and they coordinate on $L$ with a probability of $\alpha^{\sigma}$ if the opponent's preferred action is $R$.

\paragraph{Examples of equilibria satisfying all properties}

The following strategies, denoted by $\sigma_{L}$, $\sigma_{R}$, and $\sigma_{C}$, are prime examples (that play a special role in later sections) of strategies that are all mutual-preference consistent and coordinated and have binary communication.

The strategies $\sigma_{L}$ and $\sigma_{R}$ are given by the pairs $\left(\mu^{*},\xi_{L}\right)$ and $\left(\mu^{*},\xi_{R}\right)$, respectively. The message function $\mu^{*}$ has the property that there are messages $m_{L},m_{R}\in M$ such that message $m_{L}$ indicates a preference for $L$ and $m_{R}$ a preference for $R$, and the action functions $\xi_{L}$ and $\xi_{R}$ are defined as follows:
\[
\mu^{*}\left(u\right)=\begin{cases}
m_{L} & u\leq\sfrac{1}{2}\\
m_{R} & u>\sfrac{1}{2}.
\end{cases}\,\,\,\,\,\,\,\,\xi_{L}\left(m,m'\right)=\begin{cases}
R & m=m'=m_{R}\\
L & \mbox{otherwise},
\end{cases}\,\,\,\,\,\,\,\,\xi_{R}\left(m,m'\right)=\begin{cases}
L & m=m'=m_{L}\\
R & \mbox{otherwise}.
\end{cases}
\]
This means that the ``fallback norm'' of $\sigma_{L}$ (which is applied when the agents have different preferred outcomes) is to coordinate on $L$, while that of $\sigma_{R}$ is to coordinate on $R$. In other words the left tendency of $\sigma_{L}$ is one and the left tendency of $\sigma_{R}$ is zero.

Strategy $\sigma_{C}=\left(\mu_{C},\xi_{C}\right)$ has the ``fallback norm'' of using a joint lottery to choose the coordinated outcome. Each agent simultaneously sends a random bit and the coordinated outcome depends on whether the random bits are equal or not.

We denote four distinct messages by $m_{L,0},m_{L,1},m_{R,0},m_{R,1}\in M$, where we interpret the first subscript ($R$ or $L$) as the agent's preferred direction, and the second subscript ($0$ or $1$) as a random binary number chosen with probability $\sfrac{1}{2}$ each by the agent. Formally, the message function $\mu_{C}$ is defined as follows:
\[
\mu_{C}\left(u\right)=\begin{cases}
\sfrac{1}{2}m_{L,0}\oplus\sfrac{1}{2}m_{L,1} & u\leq\sfrac{1}{2}\\
\sfrac{1}{2}m_{R,0}\oplus\sfrac{1}{2}m_{R,1} & u>\sfrac{1}{2},
\end{cases}
\]
where $\alpha m\oplus(1-\alpha)m'$ is a lottery with a probability of $\alpha$ on message $m$ and $1-\alpha$ on message $m'$. In the second stage, if both agents share the same preferred outcome they play it. Otherwise, they coordinate on $L$ if their random numbers differ, and coordinate on $R$ otherwise. Formally:
\[
\xi_{C}\left(m,m'\right)=\begin{cases}
R & \left(m,m'\right)\in\left\{ \left(m_{R,0},m_{R,0}\right),\left(m_{R,0},m_{R,1}\right),\left(m_{R,0},m_{L,0}\right),\left(m_{R,1},m_{L,1}\right)\right.\\
 & \,\,\,\,\,\,\,\,\,\,\,\,\,\,\:\,\,\,\,\,\,\,\,\,\,\,\,\,\,\left.\left(m_{R,1},m_{R,1}\right),\left(m_{R,1},m_{R,0}\right),\left(m_{L,0},m_{R,0}\right),\text{\ensuremath{\left(m_{L,1},m_{R,1}\right)}}\right\} \\
L & \mbox{otherwise}.
\end{cases}
\]

The outcome of $\sigma_{C}$ can also be implemented by a fair joint lottery that determines which of the two players determines the coordinated action used by both players. This alternative implementation yields exactly the same outcome: if both agents share the same preferred outcome they play it, and conditional on the agents disagreeing on the preferred outcome, they coordinate on each action with equal probability.

\paragraph{One-dimensional set of strategies satisfying the properties}

The set of strategies with the above three properties (coordination, mutual-preference consistency, and binary communication) is essentially one-dimensional because the left tendency $\alpha^{\sigma}\in[0,1]$ of such a strategy $\sigma$ describes all payoff-relevant aspects. Two strategies with the same left tendency can only differ in the way in which the players implement the joint lottery when they have different preferred outcomes. These implementation differences are nonessential, as the probability of the joint lottery inducing the players to coordinate on $L$ remains the same.

Any left-tendency $\alpha^{\sigma}\in[0,1]\cap \mathbb{Q}$ can be implemented by a jointly controlled lottery in which the players send random messages in such a way that they are indifferent between all messages, and the joint distribution of messages induces $\alpha^{\sigma}$ (\citealp{maschler1968repeated}). This is demonstrated for $\alpha^{\sigma}=\sfrac12$ in the strategy $\sigma^{C}$ presented above.

Note that of all the strategies that satisfy the three properties, strategies $\sigma_{L}$ and $\sigma_{R}$ are the \emph{simplest} in terms of the number of ``bits'' needed to implement the message function. Strategy $\sigma_{C}$ is in a certain sense the \emph{fairest}: conditional on a coordination conflict, i.e., conditional on one agent having a type between $0$ and $\sfrac{1}{2}$ and the other agent having a type between $\sfrac{1}{2}$ and $1$, both agents expect the same payoff. By contrast, strategy $\sigma_{L}$ favors types below $\sfrac{1}{2}$, and strategy $\sigma_{R}$ favors types above $\sfrac{1}{2}$.

\paragraph{Examples of equilibria not satisfying some of the properties}

The coordination game with communication $\langle\Gamma,M\rangle$ admits many more equilibria that satisfy only some or even none of the three properties defined above.

First, the game admits babbling equilibria, which do not satisfy mutual-preference consistency. Each babbling equilibrium can be identified with an $x\in[0,1]$ that satisfies $F(x)=x$, where agents choose $L$ if and only if  their type is below $x$. The case of $x=1$ (resp., $x=0$) corresponds to a uniform norm of always playing $L$ (resp., $R$). A case of $x \in (0,1)$ corresponds to an inefficient babbling equilibria, in which agents sometimes miscoordinate.

The game also admits equilibria in which agents reveal some information about the intensity of their preferences (i.e., some information beyond only stating whether $u\leq\sfrac{1}{2}$ or $u>\sfrac{1}{2}$). One simple example of such an equilibrium is Example \ref{exa:high-ex-ante-miscoordination} in Section \ref{sec:Efficiency}. 

\paragraph{Illustration of equilibria and the first-best outcome}

Figure \ref{fig:Illustration-of-Equilibrium} illustrates five of the equilibria described above: the equilibria that satisfy the three key properties: $\sigma_{L}$, $\sigma_{R}$, and $\sigma_{C}$, the babbling equilibrium of always playing $R$, and the equilibrium $\sigma_{\mbox{ex}}$, which satisfies none of the three key properties. It also depicts (in the bottom right panel) the first-best outcome in which the players reveal their types and then coordinate on the action that maximizes the sum of payoffs (i.e., the players coordinate on $L$ if $u+v\leq1$ and they coordinate on $R$ if $u+v>1$). This is not an equilibrium: each player has an incentive to present a more extreme type than her real type (e.g., all types $u>\sfrac{1}{2}$ would claim to have type $1$).

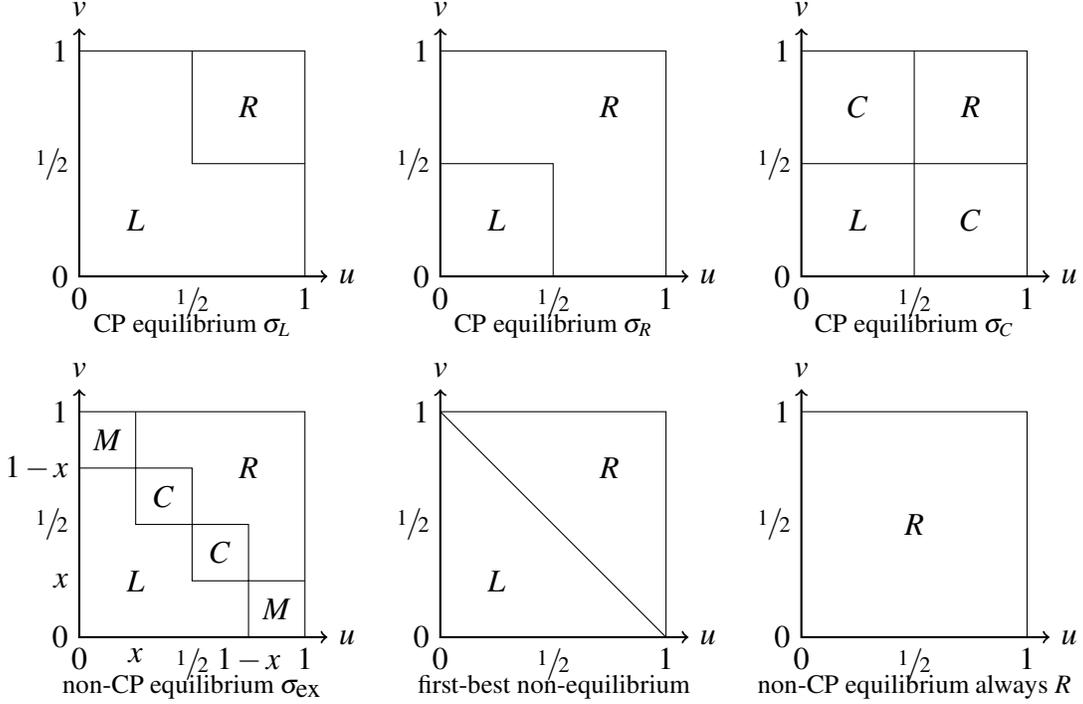
\begin{figure}[h]
\begin{center}
\begin{tikzpicture}[scale=0.6]

\draw [<->,thick] (0,5.5) node (yaxis) [above] {$v$}
        |- (5.5,0) node (xaxis) [right] {$u$};
\draw [-] (0,5) node[left] {$1$} -- (5,5) -- (5,0) node[below] {$1$};
\draw (0,0) node[below] {$0$} node[left] {$0$};
\draw (2.5,0) node[below] {$\sfrac12$};
\draw (1.25,0) node[below] {$x$};
\draw (3.75,0) node[below] {$1-x$};
\draw (0,2.5) node[left] {$\sfrac12$};
\draw (0,1.25) node[left] {$x$};
\draw (0,3.75) node[left] {$1-x$};
\draw (1.25,5) -- (1.25,3.75) -- (2.5,3.75) -- (2.5,2.5) -- (3.75,2.5) -- (3.75,1.25) -- (5,1.25);
\draw (0,3.75) -- (1.25,3.75) -- (1.25,2.5) -- (2.5,2.5) -- (2.5,1.25) -- (3.75,1.25) -- (3.75,0);
\draw (1.25,1.25) node {$L$};
\draw (3.75,3.75) node {$R$};
\draw (1.875,3.125) node {$C$};
\draw (3.125,1.875) node {$C$};
\draw (0.625,4.375) node {$M$};
\draw (4.375,0.625) node {$M$};
\draw (2.5,-1.1) node {\footnotesize{non-CP equilibrium $\sigma_{\mbox{ex}}$}};

\draw [<->,thick] (8,5.5) node (yaxis) [above] {$v$}
        |- (13.5,0) node (xaxis) [right] {$u$};
\draw [-] (8,5) node[left] {$1$} -- (13,5) -- (13,0) node[below] {$1$};
\draw (8,0) node[below] {$0$} node[left] {$0$};
\draw (10.5,0) node[below] {$\sfrac12$};
\draw (8,2.5) node[left] {$\sfrac12$};
\draw (8,5) -- (13,0);
\draw (9.25,1.25) node {$L$};
\draw (11.75,3.75) node {$R$};
\draw (10.5,-1.1) node {\footnotesize{first-best non-equilibrium}};

\draw [<->,thick] (16,13.5) node (yaxis) [above] {$v$}
        |- (21.5,8) node (xaxis) [right] {$u$};
\draw [-] (16,13) node[left] {$1$} -- (21,13) -- (21,8) node[below] {$1$};
\draw (16,8) node[below] {$0$} node[left] {$0$};
\draw (18.5,8) node[below] {$\sfrac12$};
\draw (16,10.5) node[left] {$\sfrac12$};
\draw (18.5,8) -- (18.5,13);
\draw (16,10.5) -- (21,10.5);
\draw (17.25,9.25) node {$L$};
\draw (19.75,11.75) node {$R$};
\draw (17.25,11.75) node {$C$};
\draw (19.75,9.25) node {$C$};
\draw (18.5,6.9) node {\footnotesize{CP equilibrium $\sigma_{C}$}};

\draw [<->,thick] (16,5.5) node (yaxis) [above] {$v$}
        |- (21.5,0) node (xaxis) [right] {$u$};
\draw [-] (16,5) node[left] {$1$} -- (21,5) -- (21,0) node[below] {$1$};
\draw (16,0) node[below] {$0$} node[left] {$0$};
\draw (18.5,0) node[below] {$\sfrac12$};
\draw (16,2.5) node[left] {$\sfrac12$};
\draw (18.5,2.5) node {$R$};
\draw (18.5,-1.1) node {\footnotesize{non-CP equilibrium always $R$}};

\draw [<->,thick] (0,13.5) node (yaxis) [above] {$v$}
        |- (5.5,8) node (xaxis) [right] {$u$};
\draw [-] (0,13) node[left] {$1$} -- (5,13) -- (5,8) node[below] {$1$};
\draw (0,8) node[below] {$0$} node[left] {$0$};
\draw (2.5,8) node[below] {$\sfrac12$};
\draw (0,10.5) node[left] {$\sfrac12$};
\draw (2.5,13) -- (2.5,10.5) -- (5,10.5);
\draw (1.25,9.25) node {$L$};
\draw (3.75,11.75) node {$R$};
\draw (2.5,6.9) node {\footnotesize{CP equilibrium $\sigma_{L}$}};

\draw [<->,thick] (8,13.5) node (yaxis) [above] {$v$}
        |- (13.5,8) node (xaxis) [right] {$u$};
\draw [-] (8,13) node[left] {$1$} -- (13,13) -- (13,8) node[below] {$1$};
\draw (8,8) node[below] {$0$} node[left] {$0$};
\draw (10.5,8) node[below] {$\sfrac12$};
\draw (8,10.5) node[left] {$\sfrac12$};
\draw (8,10.5) -- (10.5,10.5) -- (10.5,8);
\draw (9.25,9.25) node {$L$};
\draw (11.75,11.75) node {$R$};
\draw (10.5,6.9) node {\footnotesize{CP equilibrium $\sigma_{R}$}};

\end{tikzpicture}
\end{center}

\caption{\label{fig:Illustration-of-Equilibrium}{\textbf{Six example strategies.}\label{fig:equilibria} The axis represent the two players' types $u$ and $v$. Letters $L$, $R$, $C$, and $M$ represent coordination on $L$, coordination on $R$, coordination on a random action, and miscoordination (both players playing their preferred action), respectively.}}
\end{figure}

\section{Definition of Communication-Proofness\label{sec:Solution-Concept}}\label{sec-CP}

For any given strategy in $\Sigma$ employed by both players in the game $\langle\Gamma,M\rangle$, communication and knowledge of this strategy lead to updated and possibly, different and asymmetric information about the two agents' types. Suppose that the updated distributions of types are given by some distribution functions $G$ and $H$. The two agents then face a (possibly asymmetric) game of coordination without communication, which we shall denote by $\Gamma(G,H)$. Note that the original game (without communication) $\Gamma$ is then given by $\Gamma(F,F)$.

Let $f_{m}$ be the type density conditional on the agent following a given strategy in the game $\langle\Gamma,M\rangle$ and sending a message\footnote{The density $f_{m}$ depends on the given strategy in the game $\langle\Gamma,M\rangle$. For aesthetic reasons we refrain from giving this strategy a name and from indicating this obvious dependence in our notation.} $m\in\mbox{supp}(\bar{\mu})$, i.e., $f_{m}(u)=\sfrac{f(u)\mu_{u}(m)}{\bar{\mu}(m)}$, and let $F_{m}$ be the cumulative distribution function associated with density $f_{m}$.

We allow players to renegotiate after communication. Renegotiating players can use a new round of communication. Given a strategy of the game $\langle\Gamma,M\rangle$ employed by both players, we denote the induced renegotiation game after a positive probability message pair $m,m'\in M$ by $\langle\Gamma(F_{m},F_{m'}),M\rangle$.
Let $\pi_{u}\left(\sigma,\sigma'|H\right)$ denote the expected payoff for the player using strategy $\sigma$ of type $u$ given strategy profile $(\sigma,\sigma')$ in game $\langle\Gamma(G,H),M\rangle$:
\[
\pi_{u}\left(\sigma,\sigma'|H\right)=\mathbb{E}_{v\sim H}\left[\pi_{u,v}\left(\sigma,\sigma'\right)\right] \equiv \int_{v=0}^{1}\pi_{u,v}\left(\sigma,\sigma'\right)h\left(v\right)dv,
\]
and similarly let $\pi_{v}\left(\sigma,\sigma'|G\right)$ denote the expected payoff for the player using strategy $\sigma'$ of type $v$ in  $\langle\Gamma(G,H),M\rangle$:
\[
\pi_{v}\left(\sigma,\sigma'|G\right)=\mathbb{E}_{u\sim G}\left[\pi_{u,v}\left(\sigma,\sigma'\right)\right] \equiv \int_{u=0}^{1}\pi_{u,v}\left(\sigma,\sigma'\right)g\left(u\right)du.
\]
Let $\mathcal{E}(G,H)$ be the set of all (possibly asymmetric) equilibria of the coordination game with communication $\langle\Gamma(G,H),M\rangle$. Let $\pi_{u}^{(m,m')}(\sigma,\sigma'|H_{m'})$ (resp., $\pi_{v}^{(m,m')}(\sigma,\sigma'|G_m)$) denote the \emph{post-communication payoff} for the player using strategy $\sigma$ (resp., $\sigma'$) of a type $u$ (resp., $v$) given $\left(\sigma=(\mu,\xi),\sigma'=(\mu',\xi')\right)$ in game $\langle\Gamma(G,H),M\rangle$ conditional on message pair $m \in \mbox{supp}(\sigma),m' \in \mbox{supp}(\sigma'):$
\[
\pi_{u}^{(m,m')}(\sigma,\sigma'|H) = \left\{ \begin{array}{cc} (1-u) H_{m'}\left(\xi'(m',m)\right) & \mbox{ if } u\le\xi(m,m') \\ u\left(1-H_{m'}\left(\xi'(m',m)\right)\right) & \mbox{ if } u>\xi(m,m') \\ \end{array} \right.
\]
\[
\pi_{v}^{(m,m')}(\sigma,\sigma'|G) = \left\{ \begin{array}{cc} (1-v)G_m\left(\xi(m,m')\right) & \mbox{ if } v\le\xi'(m',m) \\ v\left(1-G_m\left(\xi(m,m')\right)\right) & \mbox{ if } v>\xi'(m',m). \\ \end{array} \right.
\]

Following \cite{blume1995communication}, we say that a strategy profile $\left(\tau,\tau'\right)$ CP trumps another profile $\left(\sigma,\sigma'\right)$ if there is a possible pair of messages such that, given the information induced by the message pair, the profile $\left(\tau,\tau'\right)$, using another round of communication, yields a Pareto-improvement over the post-communication expected payoffs induced by $\left(\sigma,\sigma'\right)$. 
\begin{defn}
Strategy profile $\left(\tau,\tau'\right) \in \Sigma^2$ \emph{CP trumps} strategy profile $\left(\sigma,\sigma'\right) \in \Sigma^2$ with respect to distribution profile $(G,H)$ and message profile $m\in \mbox{supp}\left(\sigma\right),m'\in \mbox{supp}\left(\sigma'\right)$ if\footnote{{We do not model the alternative strategy profile $\tau,\tau'$ as a strategy profile in a game that includes two rounds of communication. Following \citet{blume1995communication}, we rather model $\tau,\tau'$ as a strategy profile in a communication game with a single round of communication (which is the renegotiation round), where the information about the opponent's type from the first round is embedded through changing the distributions of types from $(G,H)$ to the conditional distributions $(G_m,H_m')$.}}$^,$
\footnote{For conceptual consistency we could additionally require that a CP-trumping strategy profile be symmetric after a pair of identical messages. We refrain from imposing this, as it would make the notation cumbersome and would not change the set of (strongly or weakly) communication-proof strategies in our setting.}
\begin{enumerate}
\item $\left(\tau,\tau'\right)\in\mathcal{E}(G_{m},H_{m'}), \text{ and}$
\item $\pi_{u}(\tau,\tau'|H_{m'}) \ge \pi_{u}^{(m,m')}(\sigma,\sigma'|H)$ and $\pi_{v}(\tau,\tau'|G_{m}) \ge \pi_{v}^{(m,m')}(\sigma,\sigma'|G),$ for all $u\in\mbox{supp}(G_{m})$ and all $v\in\mbox{supp}(H_{m'})$ with strict inequality for some $u\in\mbox{supp}(G_{m})$ or some $v\in\mbox{supp}(H_{m'})$.
\end{enumerate}
\end{defn}

We say that a strategy $\sigma$ is strongly communication-proof if for any possible message profile, there does not exist a new equilibrium, which might require another round of communication, that Pareto-dominates the post-communication payoff of $\sigma$. The weaker notion of weak communication-proofness allows such a Pareto-improving equilibrium to exist as long as this latter equilibrium is not stable in the sense that it is CP trumped by another equilibrium. Formally:
\begin{defn}
An equilibrium strategy $\sigma\in\mathcal{E}$ is \emph{strongly communication-proof} if it is not CP trumped by any strategy profile with respect to $\left(F,F\right)$ 
{and any message profile.}
\end{defn}

\begin{defn}
An equilibrium strategy $\sigma\in\mathcal{E}$ is \emph{weakly communication-proof} if for any strategy profile $\left(\tau,\tau'\right)$ that CP trumps $\left(\sigma,\sigma\right)$ with respect to $\left(F,F\right)$ and message profile $\left(m,m'\right)$, there exists a strategy profile $\left(\rho,\rho'\right)$ that CP trumps $\left(\tau,\tau'\right)$ with respect to $\left(F_{m},F_{m'}\right)$.
\end{defn}
Observe that in games with complete information, our two notions coincide. Moreover, they are both equivalent to the Pareto frontier of the set of Nash equilibria, i.e., to the subset of Nash equilibria that are not Pareto-dominated by other Nash equilibria.
\citet{blume1995communication} present a related notion of communication-proofness defined in the spirit of \citeauthor{morgenstern1953theory}'s \citeyearpar{morgenstern1953theory} set stability:\footnote{\citet{blume1995communication} use this notion for sender-receiver games, in which there is incomplete information only on one side, but it is straightforward to adapt to their notion to games with incomplete information on both sides.} the set of equilibria is divided into stable and unstable equilibria; a strategy profile is communication-proof $\grave{\textrm{a}}$ la \citeauthor{blume1995communication} if it is not CP trumped by a stable equilibrium; and the set of stable equilibria 
(which is not necessarily unique)
is defined consistently (any stable equilibrium is only CP trumped by unstable equilibria, and any unstable equilibrium is CP trumped by some stable equilibrium).  \citeauthor{blume1995communication} show that any sender-receiver game (in which only one player has private information and her set of actions is a singleton) admits a communication-proof equilibrium.

Observe that \citeauthor{blume1995communication}'s notion is in between our two notions. The fact that a strongly communication-proof equilibrium 
is not CP trumped by any strategy profile implies that any strongly communication-proof equilibrium is communication-proof $\grave{\textrm{a}}$ la  \citeauthor{blume1995communication}. The fact that any equilibrium that CP trumps a communication-proof equilibrium is unstable (and, thus, CP-trumped by another equilibrium) implies that any communication-proof equilibrium $\grave{\textrm{a}}$ la  \citeauthor{blume1995communication} is weakly communication-proof.

\subsection{Evolutionary Interpretation of Communication-Proofness}
Consider a setup in which agents in a large population are repeatedly and randomly matched to play a game. Assume that the agents' behavior is influenced by evolutionary forces, such that the share of agents who play actions leading to higher payoffs gradually increases. \cite{robson1990efficiency} {considers}  complete information games in which players can freely communicate before playing the game. \citeauthor{robson1990efficiency} 
suggests that Pareto-efficiency within the set of Nash equilibria is a plausible refinement  for capturing stable behavior in these setups. Suppose that equilibrium $\sigma$ is Pareto inferior to another equilibrium $\sigma'$. A population state in which everyone plays the inferior equilibrium $\sigma$ is not stable in the following sense: A small group of experimenting agents (''mutants``) can recognize each other by means of a ``secret handshake.'' When they do, they play a Pareto-improving equilibrium $\sigma'$ with each other. Otherwise they play the inferior equilibrium $\sigma$ with agents who are not part of this group, thereby outperforming the agents outside the group.\footnote{\label{footnote-evol-papers}There are various ways to assess the foundations of the secret handshake argument in complete information coordination games {with pre-play communication}, see e.g., \cite{matsui1991cheap, Warneryd_1991, blume1993evolutionary, warneryd1993cheap, sobel1993evolutionary, kim1995evolutionary, Banerjee00, santos2011co}. 
{As argued by} \cite{Schlag_1993_cheapWP,schlag1994does}, {if the set of messages is finite, the secret handshake argument may fail, allowing for the persistence of Pareto-inefficient equilibria. This is because experimenting agents might not have unused messages to identify each other. The secret handshake argument can be formally established, ensuring only efficient equilibria can be stable, in setups with infinitely many messages} (\citealp{bhaskar1998noisy}) {or in setups with a finite message set and voluntary communication}
(\citealp{hurkens2003evolutionary}).
We conjecture that similar foundations for the secret handshake argument can also be provided for incomplete information games.}

We study situations in which players have private types, and can freely communicate after they learn their own types. We argue that the natural way to extend the above evolutionary-motivated refinement of Pareto efficiency to the current setup is by the refinement of communication-proofness. The argument is similar to that of\cite{robson1990efficiency}: If there exists a pair of messages $m,m'$ after which the players play an equilibrium $\sigma$ that is Pareto inferior for all possible types relative to another equilibrium $\sigma'$ (where the latter equilibrium $\sigma'$ might require an additional round of communication to implement), then a small group of experimenting agents can recognize each other by a ``secret handshake'', and play the Pareto-improving equilibrium $\sigma'$ after observing messages $m,m'$. For example, a state in which all pedestrians choose $L$ is unstable to the presence of a few experimenting agents who will touch their right ear (or slightly nod to the right, or who use whatever other communication ``device'' that \cite[p. 6]{Goffman71} discusses) if their preferred action is R, and will play R if both players have touched their right ears (or use whatever other ``device'').

\section{Main Result}\label{sec-main}

\label{sec:Essentially-Unique-Rengotiation-} Our main result shows that both of our notions of communication-proofness coincide in our setup, and they are characterized by satisfying the three key properties of Section \ref{sec:Equilibrium-Strategies}.
\begin{thm}
\label{thm:uniqueness}Let $\sigma$ be a strategy of the game with communication $\langle\Gamma,M\rangle$. The following three statements are equivalent:
\begin{enumerate}
\item $\sigma$ is mutual-preference consistent, coordinated, and has binary communication.
\item $\sigma$ is a strongly communication-proof equilibrium strategy.
\item $\sigma$ is a weakly communication-proof equilibrium strategy.
\end{enumerate}
\end{thm}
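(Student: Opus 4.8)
The plan is to prove the cycle of implications $1\Rightarrow2\Rightarrow3\Rightarrow1$. The step $2\Rightarrow3$ is immediate: a strongly communication-proof strategy is by definition not CP trumped by any profile with respect to $(F,F)$, so the hypothesis in the definition of weak communication-proofness is vacuously satisfied, and $\sigma\in\mathcal E$ is part of the strong notion.

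For $1\Rightarrow2$ I would first invoke the already-noted fact that a coordinated strategy with binary communication is an equilibrium, so $\sigma\in\mathcal E$. Next, mutual-preference consistency rules out trivial communication, from which it follows that every message in $\mbox{supp}(\bar\mu)$ is sent only by types below $\sfrac12$ (an ``$L$-message'', with posterior supported in $[0,\sfrac12]$) or only by types above $\sfrac12$ (an ``$R$-message'', posterior supported in $[\sfrac12,1]$). One then checks that after every message pair in $\mbox{supp}(\sigma)^2$ the continuation of $\sigma$ is a deterministic coordinated outcome (coordinate on $L$ after two $L$-messages, on $R$ after two $R$-messages, and on whichever of $L,R$ the cutoff dictates after a mixed pair), and that in each case the ``winning'' player's entire type support already receives its maximal feasible payoff ($1-u$ for a type $u<\sfrac12$ coordinating on $L$, and $v$ for a type $v>\sfrac12$ coordinating on $R$). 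Since a CP-trumping profile must weakly improve every type of both posteriors and strictly improve one, while the winning types cannot be improved and can be kept at their payoff only by reproducing the same coordinated outcome (which freezes the other player's payoffs), no CP-trumping profile exists; hence $\sigma$ is strongly communication-proof.

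The implication $3\Rightarrow1$ is the substantive one. Its engine is a corollary of $1\Rightarrow2$: since each of $\sigma_L$, $\sigma_R$ and $\sigma_C$ satisfies the three properties for \emph{every} type distribution, the same argument as in $1\Rightarrow2$ --- carried out in the mildly more general setting where the relevant posterior distributions need not have full support --- shows that each of them is strongly communication-proof, hence ``stable'', in every renegotiation game $\langle\Gamma(G,H),M\rangle$. Therefore a weakly communication-proof $\sigma$ cannot be CP trumped by any of these three strategies, as such a trump would be stable and would violate the defining requirement that every trump be itself trumped. I would then use that $\sigma$ is an equilibrium and $(m,m')$ has positive probability to conclude that the continuation of $\sigma$ after $(m,m')$ is a Nash equilibrium of $\Gamma(F_m,F_{m'})$, i.e.\ a pair of cutoffs, and argue in turn: (i) no message straddles $\sfrac12$, since otherwise after the pair $(m,m)$ the cutoff continuation is either coordinate-on-$L$ (strictly Pareto-dominated by $\sigma_L$, which raises the payoff of the types above $\sfrac12$), coordinate-on-$R$ (dominated by $\sigma_R$), or a symmetric mixed-cutoff profile with miscoordination (dominated by $\sigma_C$, which never miscoordinates); (ii) $\sigma$ is coordinated, because after a disagreement pair --- an $L$-message against an $R$-message --- the Nash equilibria of $\Gamma(F_m,F_{m'})$ are coordinate-on-$L$, coordinate-on-$R$, and mixed-cutoff profiles whose player-$1$ cutoff $c$ and player-$2$ cutoff $c'$ satisfy $c\le\sfrac12\le c'$, and the last are Pareto-dominated by $\sigma_C$ (which gives every type exactly $\sfrac12$, while $c\le\sfrac12\le c'$ forces every type's payoff in the mixed profile to be at most $\sfrac12$); (iii) $\sigma$ is mutual-preference consistent, because after two $L$-messages every type of both posteriors prefers $L$, so any continuation other than coordinate-on-$L$ is strictly dominated by $\sigma_L$, and symmetrically for two $R$-messages; (iv) $\sigma$ has binary communication, because a type $u<\sfrac12$ strictly prefers, among $L$-messages, the one that maximizes the total probability $\beta^\sigma(m)$ of ending up coordinating on $L$, so the equilibrium indifference condition forces $\beta^\sigma$ to be constant over $L$-messages (and symmetrically over $R$-messages), while applying the equilibrium condition to deviations by the extreme types $0$ and $1$ bounds $\beta^\sigma(m)$ between these two constants for every message, used or not.

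I expect the main obstacle to be $3\Rightarrow1$, and inside it the three pressure points are: verifying that $\sigma_L,\sigma_R,\sigma_C$ remain strongly communication-proof in every renegotiation game (i.e.\ re-running $1\Rightarrow2$ for posteriors without full support, as the paper's footnote flags); ruling out the straddling configuration cleanly; and, configuration by configuration, identifying which of $\sigma_L,\sigma_R,\sigma_C$ genuinely Pareto-dominates a given cutoff continuation. The dominance computations ultimately rest on two elementary facts --- $1-u\ge u$ whenever $u<\sfrac12$, and the observation that the two cutoffs of any miscoordinating equilibrium of a disagreement renegotiation game straddle $\sfrac12$ --- so the arithmetic should be light once the right comparison strategy is chosen in each case.
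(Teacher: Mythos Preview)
Your overall strategy --- the cycle $1\Rightarrow2\Rightarrow3\Rightarrow1$, with the last implication powered by showing that any failure of the three properties is CP-trumped by one of $\sigma_L,\sigma_R,\sigma_C$, each of which is itself stable --- matches the paper's. The $1\Rightarrow2$ argument you sketch is essentially the paper's, and your treatment of binary communication and of mutual-preference consistency is fine.

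There is, however, a genuine error in your step (i). You claim that a symmetric mixed-cutoff continuation after a straddling pair $(m,m)$ is Pareto-dominated by $\sigma_C$. This is false. Take $F_m$ with a fixed point at $x=0.1$ (so $F_m(0.1)=0.1$) and $F_m(\sfrac12)=0.9$. Type $u=1$ currently gets $u(1-x)=0.9$, while under $\sigma_C$ she gets $u\bigl(1-\tfrac12 F_m(\sfrac12)\bigr)+(1-u)\tfrac12 F_m(\sfrac12)=0.55$. The intuition ``$\sigma_C$ never miscoordinates, hence is better for everyone'' is not a Pareto argument: removing miscoordination redistributes payoffs, and types that were already coordinating on their preferred action most of the time can be hurt. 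The correct claim --- and this is what the paper proves --- is that the mixed cutoff $x$ is dominated by $\sigma_R$ when $x\le\sfrac12$ and by $\sigma_L$ when $x\ge\sfrac12$; the $\sigma_C$ comparison is reserved for the genuinely asymmetric case $x\le\sfrac12\le x'$ with $x\neq x'$. Your step (ii), which does use $\sigma_C$, works precisely because there the two posteriors sit on opposite sides of $\sfrac12$, so $\sigma_C$ degenerates to a fair coin flip yielding $\sfrac12$ to every type; that special structure is absent after a symmetric straddling pair.

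Once you repair (i) with the $\sigma_R/\sigma_L$ split, your decomposition (ordinal-revealing $\to$ coordinated on disagreement pairs $\to$ coordinated and MPC on agreement pairs $\to$ binary) is valid and close to the paper's. The paper is slightly more economical: it proves ``coordinated'' for \emph{all} message pairs in one lemma via a three-way split on where the two cutoffs fall relative to $\sfrac12$ (both below: $\sigma_R$; both above: $\sigma_L$; straddling: $\sigma_C$), and only afterwards deduces ordinal-revealing as a byproduct of mutual-preference consistency. Your extra intermediate step (i) is not needed, but with the fix above it is not wrong either.
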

\begin{proof}[Sketch of proof]
The proof that ``1'' implies ``2'' is fairly straightforward (and is proven in Appendix \ref{subsec:uniqueness}). The proof implies, in particular, that $\sigma_{L}$, $\sigma_{R}$,
and $\sigma_{C}$ are not CP trumped by any other strategy profiles. It is immediate that ``2'' implies ``3.'' We here provide a sketch of the proof that ``3'' implies ``1.'' The proof in Appendix
\ref{subsec:uniqueness} is split into three lemmas, each showing that one of the three properties must hold.

Lemma \ref{lemma:nomiscoordination} proves that a weakly communication-proof equilibrium strategy must be coordinated: if play after any message pair is not coordinated then it is CP trumped in
the renegotiation game by either $\sigma_{L}$, $\sigma_{R}$, or $\sigma_{C}$. To see this, suppose first that both players use thresholds below $\sfrac{1}{2}$. Then this strategy is Pareto-dominated by $\sigma_{R}$ as types above $\sfrac{1}{2}$ gain because $\sigma_{R}$ induces their first-best outcome, and types below $\sfrac{1}{2}$ gain because $\sigma_{R}$ yields a higher coordination probability and a higher probability of the opponent playing this type's preferred action $L$. Analogously, an equilibrium in which both players use thresholds above $\sfrac{1}{2}$ is Pareto-dominated by $\sigma_{L}$. Suppose, finally, that player one uses threshold $x<\sfrac{1}{2}$, while player two uses threshold $x'>\sfrac{1}{2}$. Observe that $x<\sfrac{1}{2}$ (resp., $x'>\sfrac{1}{2}$) can be an equilibrium threshold only if player two (resp., player one) plays $L$ with an average probability of less (resp., more) than $\sfrac{1}{2}$. This, implies that players
in these equilibria coordinate with a probability of at most $\sfrac{1}{2}$, and one can show that such a low coordination probability implies that these equilibria are Pareto-dominated by $\sigma_{C}$.

Next, we show in Lemma \ref{lemma:binary} that a weakly communication-proof equilibrium strategy must have binary communication. The reason for this is that if a strategy is coordinated, then different messages can only lead to different ex-ante probabilities of coordination on $L$ (and $R$). Thus, any type who favors $L$, i.e., any type $u<\sfrac{1}{2}$, will choose a message to maximize this probability, while any type $u>\sfrac{1}{2}$ will choose a message to minimize this probability. Thus, essentially only two kinds of messages are used in a coordinated equilibrium strategy.

Finally, we show in Lemma \ref{lemma:mpc} that a weakly communication-proof equilibrium strategy must be mutual-preference consistent. Given that it is coordinated, we know that any message pair will lead to either coordination on $L$ or on $R$. If it is not mutual-preference consistent then, without loss of generality, there are two types $u,u'<\sfrac{1}{2}$ that, with positive probability, send a message pair $(m,m')$ that leads them to coordinate on $R$. But then all types who send this
message pair would be weakly better off (and some strictly better off) if instead of coordinating on $R$ they use strategy $\sigma_{R}$, which would allow them to coordinate on $L$ if and only if both types are below $\sfrac{1}{2}$.
\end{proof}
As the two notions of communication-proofness coincide in our setup, we henceforth omit the word ``weakly''/ ``strongly'' and write communication-proof equilibrium strategy to describe either of our (equivalent) solution concepts. Note that the set of communication-proof equilibria is completely independent of the distribution $F$ (i.e., for any two distributions of types $F$ and $F'$, strategy $\sigma$ is a communication-proof equilibrium in  $\Gamma(F)$ if and only if it is communication-proof  $\Gamma(F')$.) It is not difficult to show that this implies that any communication-proof equilibrium strategy remains communication-proof even in setups in which the distributions of types are correlated, and in setups in which different types have different beliefs about the opponent's type.

\section{On Efficiency \label{sec:Efficiency}}

In this section we explore the efficiency properties of communication-proof equilibria. {We begin with a negative result by demonstrating that a non-communication-proof equilibrium  can induce a higher ex-ante payoff than all communication-proof equilibria.}

In the subsequent parts of Section \ref{sec:Efficiency} {we show favorable properties of communication-proof equilibria. First, we show that each communication-proof equilibrium is Pareto efficient at the interim stage (after each player knows her own type). Next, we show two additional appealing properties when focusing on $\sigma_L$, $\sigma_R$ and $\sigma_M$: either $\sigma_L$ or $\sigma_R$  provides the highest ex-ante payoff of all the coordinated equilibria, and any equilibrium without communication is strictly Pareto-dominated by either $\sigma_L$, $\sigma_R,$ or $\sigma_M$.}

\paragraph{{Negative result: not maximizing the ex-ante payoff}}

{Equilibria featuring miscoordination can incentivize agents to genuinely disclose cardinal information regarding their type.} This can happen if there is a message that induces a higher probability of coordinating on the agent's preferred outcome but also a higher probability of miscoordination compared with some other available message. Such a message can then be chosen by extreme types with $u$ far from $\sfrac{1}{2}$, while moderate types with $u$ closer to $\sfrac{1}{2}$ choose the other message. Such equilibria with miscoordination may induce a higher ex-ante payoff {than all communication-proof equilibria}, if the benefit from signaling the extremeness of the type outweighs the loss due to miscoordination. Consider the following example.
\begin{example}
\label{exa:high-ex-ante-miscoordination} For simplicity we let the distribution of types $F$ be discrete with four atoms $\sfrac{1}{10}+\epsilon$, $\sfrac{1}{2}-\epsilon$, $\sfrac{1}{2}+\epsilon$, $\sfrac{9}{10}-\epsilon$, with a probability of $\sfrac{1}{4}$ for each atom and $\epsilon>0$
sufficiently small.\footnote{One can easily adapt the example to an atomless distribution of types,
in which each atom is replaced with a continuum of nearby types.} The game admits three babbling equilibria: always coordinating on $L$, always coordinating on $R$, both with an ex-ante payoff of $\sfrac{1}{2}$, and playing $L$ iff the type is less than $\sfrac{1}{2}$ with an ex-ante payoff of $\sfrac{7}{20}<\sfrac{1}{2}$ for all $\epsilon$ sufficiently small. Theorem \ref{thm:uniqueness} (together with the symmetry of the distribution $F$) implies that with communication, any communication-proof equilibrium strategy (in particular $\sigma_{L}$ or $\sigma_{R}$) induces the same expected ex-ante payoff of $\sfrac{3}{5}>\sfrac{1}{2}$ for all $\epsilon$ sufficiently small.

This game also has a (non-communication-proof) equilibrium strategy with miscoordination that yields a higher ex-ante payoff than the communication-proof payoff of $\sfrac{3}{5}$, provided that the message set $M$ has sufficiently many elements. To simplify the presentation we here allow the players to use public correlation devices to determine their joint play after sending messages, which can be approximately implemented by a sufficiently large message set (\`a la \citealp{maschler1968repeated}).
Let $m_{L},m_{l},m_{r},m_{R}\in M$ and consider strategy $\sigma=(\mu,\xi)$ as follows. Let $\mu(\sfrac{1}{10}+\epsilon)=m_{L}$, $\mu(\sfrac{1}{2}-\epsilon)=m_{l}$, $\mu(\sfrac{1}{2}+\epsilon)=m_{r}$, and $\mu(\sfrac{9}{10}-\epsilon)=m_{R}$, and let $\xi(m_{a},m_{b})=L$ if $a,b\in\{L,l\}$, $\xi(m_{a},m_{b})=R$ if $a,b\in\{r,R\}$, $\xi(m_{L},m_{r})=\xi(m_{r},m_{L})=L$, $\xi(m_{l},m_{R})=\xi(m_{R},m_{l})=R$, $\xi(m_{l},m_{r})=\xi(m_{r},m_{l})$ be a joint lottery to coordinate on $L$ or $R$ with probability $\sfrac{1}{2}$ each, and, finally, let $\xi(m_{L},m_{R})=\xi(m_{R},m_{L})$ be a joint lottery to coordinate on $L$ or $R$ with probability $\sfrac{3}{10}$ each, and to play the inefficient mixed equilibrium (in which each type plays her preferred outcome with probability $\sfrac{9}{10}-\epsilon$) with probability $\sfrac{4}{10}$. It is straightforward to verify that for, say $\epsilon=\sfrac{1}{100}$, this strategy is indeed an equilibrium strategy with an ex-ante payoff of around $0.627$, which is higher than the ex-ante payoff of $\sfrac{3}{5}$ of all the communication-proof equilibria. This equilibrium strategy is not coordinated (nor does it satisfy the other two properties of mutual-preference consistency and binary communication) and hence, by Theorem \ref{thm:uniqueness}, it is not communication-proof.
\end{example}

\paragraph{Interim Pareto Efficiency}
Example \ref{exa:high-ex-ante-miscoordination} {demonstrates that communication-proof equilibria might not be Pareto efficient at the ex-ante stage (before the players know their types). In contrast, we now show that any communication-proof equilibrium is Pareto efficient at the interim stage: after each player knows her type, and before communicating with the other player. In order to formally analyze interim efficiency, we present an auxiliary definition of a social choice function.}

An (ex-ante symmetric) \emph{social choice function} is a function $\phi:\left[0,1\right]^{2}\rightarrow\Delta\left(\left\{ L,R\right\}^{2}\right)$ assigning to each pair of types a possibly correlated profile with the condition that $\phi_{u,v}(a,b)=\phi_{v,u}(b,a)$ for any $a,b\in\{L,R\}$, where\footnote{We restrict attention to symmetric social choice functions in order to maintain our focus on symmetric equilibria, and in order to allow us to use a simpler notation without player subscripts. Proposition \ref{prop:Pareto-optimal} below, however, also holds even if we allow asymmetric social choice functions.} $\phi_{u,v}\equiv\phi\left(u,v\right)$. We interpret $\phi_{u,v}$ as the correlated action profile played by the two players when a player of type $u$ interacts with a player of type $v$. Let $\Phi$ be the set of all such functions.
Any strategy of any coordination game with communication induces a social choice function in $\Phi$, but not all social choice functions in $\Phi$ can be generated by a strategy of a given coordination game with communication. One can interpret $\Phi$ as the set of outcomes that can be implemented by a designer who perfectly observes the types of both players and, can force the players to play arbitrarily.

For each type $u\in\left[0,1\right]$, let $\pi_{u}\left(\phi\right)$ denote the expected payoff of a player of type $u$ under social choice function $\phi$, i.e., $\pi_{u}\left(\phi\right)=\mathbb{E}_{v}\left[\left(1-u\right)\phi_{u,v}\left(L,L\right)+u\phi_{u,v}\left(R,R\right)\right].$

A strategy is interim Pareto-dominated if there is a social choice function that is weakly better for all types, and strictly better for some types.
\begin{defn}
A strategy $\sigma\in\Sigma$ is \emph{interim Pareto-dominated} by function $\phi\in\Phi$ if $\pi_{u}\left(\sigma,\sigma\right)\leq\pi_{u}\left(\phi\right)$ for each type $u\in[0,1]$, with a strict inequality for a positive measure set of types.
\end{defn}
Note that our {definition of interim Pareto domination is permissive} in the sense that we allow
the designer to perfectly observe the players' types, and to enforce non-Nash play on the players. A strategy $\sigma\in\Sigma$ is \emph{interim Pareto {efficient}} if it is not interim Pareto-dominated by any $\phi\in\Phi$. It is immediate that any interim Pareto efficient equilibrium strategy is communication proof (because any CP trumping strategy profile is interim Pareto dominant). Our next result shows that the converse is true in our setup, namely that all communication-proof equilibria satisfy our strong requirement of interim Pareto efficiency. That is, even a designer with perfect ability to observe the players' types and to enforce any behavior cannot achieve a Pareto improvement with respect to any communication-proof equilibrium strategy.\footnote{As discussed in the extended working paper version, \citet{heller2021}, the result that any communication-proof equilibrium is interim Pareto-efficient holds also for asymmetric equilibria. Moreover, two of these asymmetric communication-proof equilibria are also ex-ante Pareto efficient: the equilibrium that always chooses the action preferred by Player 1, and the analogous equilibrium that always chooses the action preferred by Player 2.}
\begin{prop}
\label{prop:Pareto-optimal}Every communication-proof equilibrium strategy is interim Pareto {efficient}.
\end{prop}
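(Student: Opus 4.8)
The plan is to use the characterization of communication-proof equilibria from Theorem~\ref{thm:uniqueness}, so that $\sigma$ is mutual-preference consistent, coordinated, and has binary communication, with some left tendency $\alpha = \alpha^\sigma \in [0,1]$. By the discussion preceding the examples of equilibria, the payoffs of such a $\sigma$ depend only on $\alpha$: if both players have types $u,v<\sfrac12$ they coordinate on $L$; if both have types $u,v>\sfrac12$ they coordinate on $R$; and if their ordinal preferences disagree they coordinate on $L$ with probability $\alpha$ and on $R$ with probability $1-\alpha$. Hence for a type $u\le\sfrac12$,
\[
\pi_u(\sigma,\sigma)=(1-u)\bigl(F(\sfrac12)+(1-F(\sfrac12))\alpha\bigr),
\]
and for a type $u>\sfrac12$,
\[
\pi_u(\sigma,\sigma)=u\bigl((1-F(\sfrac12))+F(\sfrac12)(1-\alpha)\bigr).
\]
First I would record these two closed-form expressions; they are the only facts about $\sigma$ the argument needs.

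Next I would argue by contradiction: suppose $\phi\in\Phi$ interim Pareto-dominates $\sigma$. The key observation is a pointwise upper bound on what any social choice function can deliver. For \emph{any} $\phi\in\Phi$ and any type $u$, writing $p_u(a)=\mathbb{E}_v[\phi_{u,v}(a,a)]$ for $a\in\{L,R\}$, we have $p_u(L)+p_u(R)\le 1$, so $\pi_u(\phi)=(1-u)p_u(L)+u\,p_u(R)$. For $u\le\sfrac12$ this is maximized, subject to $p_u(L)+p_u(R)\le1$, by putting all the mass on $L$, giving $\pi_u(\phi)\le 1-u$; likewise $\pi_u(\phi)\le u$ for $u>\sfrac12$. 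That alone is not tight enough, because $\sigma$ itself does not attain $1-u$ (unless $\alpha=1$). The sharper point is a \emph{symmetry/aggregate} constraint coming from $\phi_{u,v}(a,b)=\phi_{v,u}(b,a)$: integrating, the ex-ante probability that the realized profile is $(L,L)$, the probability it is $(R,R)$, and the ``conflict mass'' are linked. Concretely, I would show that the vector of interim payoffs $(\pi_u(\phi))_{u}$ cannot dominate $(\pi_u(\sigma,\sigma))_u$ because improving the low types (those $u\le\sfrac12$) beyond $F(\sfrac12)+(1-F(\sfrac12))\alpha$ requires raising $p_u(L)$ on a positive-measure set of low types, which by the symmetry constraint forces, for a positive-measure set of high types $v>\sfrac12$, a reduction of $p_v(R)$ below $(1-F(\sfrac12))+F(\sfrac12)(1-\alpha)$, strictly hurting those high types — contradicting weak domination. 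The formal device is to compute $\int_0^{1}\pi_u(\phi)\,d(\text{something})$ against a suitable test measure that is constant on $[0,\sfrac12)$ and on $(\sfrac12,1]$ and exploits $\mathbb{E}_{u,v}[\phi_{u,v}(L,L)]=\mathbb{E}_{u,v}[\phi_{u,v}(R,R)]$-type identities forced by symmetry; under $\sigma$ this test integral is pinned down exactly, and any $\phi$ that weakly dominates $\sigma$ must match it, forcing equality $\pi_u(\phi)=\pi_u(\sigma,\sigma)$ a.e.

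The main obstacle, and the step deserving the most care, is making precise the ``conflict mass'' accounting: when the two types disagree ($u<\sfrac12<v$ or $v<\sfrac12<u$), coordinating on $L$ helps the $L$-type at the expense of the $R$-type and vice versa, so there is a genuine zero-sum tradeoff across the $\sfrac12$ threshold, and one must show $\sigma$'s split (governed by $\alpha$) already sits on the resulting Pareto frontier. I would handle this by splitting $[0,1]^2$ into the agreement regions $[0,\sfrac12]^2$ and $(\sfrac12,1]^2$ — where $\sigma$ is first-best (coordination on the commonly preferred action, the pointwise maximum $1-u$ resp.\ $u$), so $\phi$ can only tie there — and the conflict region, where I compare $\int_{0}^{\sfrac12}\pi_u(\phi)f(u)\,du$ and $\int_{\sfrac12}^{1}\pi_u(\phi)f(u)\,du$ against their values under $\sigma$ and show their sum (with appropriate weights) is bounded by $\sigma$'s value. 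Since domination requires weak improvement in \emph{every} coordinate of a one-parameter (in $\alpha$) efficient family, equality must hold everywhere, contradicting strict improvement on a positive-measure set; this completes the proof. All the region-by-region integral comparisons are routine once the symmetry constraint is written out, so I would relegate those computations to the appendix.
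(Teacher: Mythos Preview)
Your closed-form payoff expressions are wrong: you drop the payoff a type receives when the players coordinate on that type's \emph{less}-preferred action. For $u\le\sfrac12$, with probability $(1-F(\sfrac12))(1-\alpha)$ the opponent is a high type and the pair coordinates on $R$, yielding payoff $u$, not $0$. The correct expression (as in the paper) is
\[
\pi_u(\sigma,\sigma)=(1-u)\bigl[F(\tfrac12)+\alpha\bigl(1-F(\tfrac12)\bigr)\bigr]+u\,(1-\alpha)\bigl(1-F(\tfrac12)\bigr),
\]
and symmetrically for $u>\sfrac12$. Your subsequent plan (``improving the low types beyond $F(\sfrac12)+(1-F(\sfrac12))\alpha$'') is anchored to the wrong benchmark. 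More seriously, the ``suitable test measure'' device is never specified and does not obviously exist: Pareto domination gives $\pi_u(\phi)\ge\pi_u(\sigma,\sigma)$ for every $u$, so integrating against any \emph{positive} measure preserves the inequality but creates no tension between low and high types, while a \emph{signed} measure does not preserve it at all.

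The paper's argument avoids integration by working pointwise. Writing $P_L(u)=\mathbb{E}_v[\phi_{u,v}(L,L)]$, $P_R(u)=\mathbb{E}_v[\phi_{u,v}(R,R)]$, and $Q_L=F(\sfrac12)+\alpha(1-F(\sfrac12))$, the feasibility constraint $P_L(u)+P_R(u)\le 1$ together with $Q_L+Q_R=1$ turns the domination inequality at any fixed $u<\sfrac12$ into
\[
u+(1-2u)P_L(u)\ge(1-u)P_L(u)+uP_R(u)\ge(1-u)Q_L+uQ_R=u+(1-2u)Q_L,
\]
forcing $P_L(u)\ge Q_L$ for \emph{every} $u<\sfrac12$. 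Decomposing $P_L(u)$ over $\{v\le\sfrac12\}$ and $\{v>\sfrac12\}$ and using the trivial bound $\mathbb{E}_{v\le\sfrac12}[\phi_{u,v}(L,L)]\le 1$ then gives $\mathbb{E}_{v>\sfrac12}[\phi_{u,v}(L,L)]\ge\alpha$; the symmetric argument gives $\mathbb{E}_{v<\sfrac12}[\phi_{u,v}(R,R)]\ge 1-\alpha$ for $u>\sfrac12$, with strictness somewhere. Averaging over the conflict region and invoking $\phi_{u,v}(R,R)=\phi_{v,u}(R,R)$ yields $\mathbb{E}_{u<\sfrac12,\,v>\sfrac12}[\phi_{u,v}(L,L)+\phi_{u,v}(R,R)]>1$, contradicting feasibility. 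The step you are missing is precisely this pointwise deduction $P_L(u)\ge Q_L$ from $P_L+P_R\le 1$; once you have it, no test measure is needed.
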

\begin{proof}[Sketch of proof; see Appendix \ref{app:proofsefficiency} for the formal proof.]
Recall that by Theorem \ref{thm:uniqueness} and the discussion about the one-dimensional set of strategies in Section \ref{sec:Equilibrium-Strategies}, any communication-proof equilibrium strategy $\sigma$ is characterized by its left tendency $\alpha^{\sigma}$. In order for a social choice function $\phi$ to improve the payoff of any type $u<\sfrac{1}{2}$ (resp., $u>\sfrac{1}{2}$) relative to the payoff induced by $\sigma$, it must be that $\phi$ induces any $u<\sfrac{1}{2}$ (resp., $u>\sfrac{1}{2}$) to coordinate on $L$ with probability larger (resp., smaller) than $\alpha^{\sigma}$. This implies that the probability of two players coordinating on $L$, conditional on the players having different preferred outcomes, must be larger (resp., smaller) than $\alpha^{\sigma}$. However, these two requirements contradict each other.
\end{proof}
\paragraph{{Either $\sigma_L$ or $\sigma_R$ maximize ex-ante payoff}}
{Next we show an appealing property of the two simplest communication-proof equilibrium strategies. Specifically, we show that the ex-ante expected payoff of either $\sigma_L$ or $\sigma_R$ is higher than the ex-ante payoff of any coordinated equilibrium.}

\begin{prop}
\label{prop-maximizing-coordinated} Let $\sigma\in\mathcal{E}$ be a coordinated equilibrium strategy. Then
\[
\pi\left(\sigma,\sigma\right)\leq\max\left\{ \pi\left(\sigma_{L},\sigma_{L}\right),\pi\left(\sigma_{R},\sigma_{R}\right)\right\} .
\]
\end{prop}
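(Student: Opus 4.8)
The plan is to reduce the proposition to a two-variable linear program. Write $p:=F(\sfrac{1}{2})$, which lies strictly inside $(0,1)$ since $F$ has full support. For each type $u$ let $q(u)$ denote the probability that a player of type $u$ playing $\sigma$ against a random opponent playing $\sigma$ coordinates on $L$ rather than on $R$. Because $\sigma$ is coordinated, no miscoordination ever occurs, so $\pi_{u}(\sigma,\sigma)=(1-u)q(u)+u\left(1-q(u)\right)=u+(1-2u)q(u)$, and therefore $\pi(\sigma,\sigma)=\mathbb{E}[u]+\int_{0}^{1}(1-2u)q(u)f(u)\,du$.

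First I would determine the shape of $q$. The key observation --- this is exactly the reasoning behind Lemma~\ref{lemma:binary}, and it uses only that $\sigma$ is a coordinated equilibrium --- is that in a coordinated strategy a message $m$ affects a player's continuation payoff only through the common-knowledge probability $\beta^{\sigma}(m)$ that the induced outcome is $L$: following $\sigma$ after sending $m$ yields $u+(1-2u)\beta^{\sigma}(m)$, and $q(u)=\mathbb{E}_{m\sim\mu_{u}}[\beta^{\sigma}(m)]$. Optimality of $\sigma$ against the deviation of sending the message that maximizes $\beta^{\sigma}$ then forces $q(u)\ge\overline{\beta}:=\max_{m\in\mbox{supp}(\bar{\mu})}\beta^{\sigma}(m)$ for every $u<\sfrac{1}{2}$; since $q(u)$ is an average of values that are each at most $\overline{\beta}$, in fact $q(u)=\overline{\beta}$ for all $u<\sfrac{1}{2}$, and symmetrically $q(u)=\underline{\beta}:=\min_{m\in\mbox{supp}(\bar{\mu})}\beta^{\sigma}(m)$ for all $u>\sfrac{1}{2}$. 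Setting $c_{1}:=\int_{0}^{\sfrac{1}{2}}(1-2u)f(u)\,du>0$ and $c_{2}:=\int_{\sfrac{1}{2}}^{1}(1-2u)f(u)\,du<0$, we get $\pi(\sigma,\sigma)=\mathbb{E}[u]+c_{1}\overline{\beta}+c_{2}\underline{\beta}$, with $0\le\underline{\beta}\le\overline{\beta}\le1$.

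The heart of the argument --- and the step I expect to be the main obstacle --- is to derive the feasibility constraint $p\,\overline{\beta}\le p^{2}+(1-p)\underline{\beta}$ linking $\overline{\beta}$ and $\underline{\beta}$. Let $\psi(u,v)=\psi(v,u)$ be the probability that the type pair $(u,v)$ coordinates on $L$ under $(\sigma,\sigma)$, so that $q(u)=\int_{0}^{1}\psi(u,v)f(v)\,dv$. Integrating this identity over $u\in(0,\sfrac{1}{2})$ and splitting the opponent's type at $\sfrac{1}{2}$ gives $p\,\overline{\beta}=S_{LL}+T$, where $S_{LL}$ is the ex-ante probability that both types lie below $\sfrac{1}{2}$ and they coordinate on $L$, and $T$ is the ex-ante probability that the player's type lies below $\sfrac{1}{2}$, the opponent's above $\sfrac{1}{2}$, and they coordinate on $L$; the analogous computation over $u\in(\sfrac{1}{2},1)$ gives $(1-p)\underline{\beta}=S_{RR}+T$, with the same $T$ by the symmetry of $\psi$. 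Since trivially $S_{LL}\le p^{2}$ and $T\le S_{RR}+T=(1-p)\underline{\beta}$, we conclude $p\,\overline{\beta}=S_{LL}+T\le p^{2}+(1-p)\underline{\beta}$. This formalizes the intuition that a coordinated equilibrium cannot at the same time make the $L$-preferring types coordinate on $L$ very often and make the $R$-preferring types coordinate on $L$ only rarely.

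It then remains to maximize the linear objective $\mathbb{E}[u]+c_{1}\overline{\beta}+c_{2}\underline{\beta}$ over the bounded polytope $\{(\overline{\beta},\underline{\beta}):0\le\underline{\beta}\le\overline{\beta}\le1,\ p\,\overline{\beta}\le p^{2}+(1-p)\underline{\beta}\}$. Since $c_{1}>0>c_{2}$, the objective rewards larger $\overline{\beta}$ and smaller $\underline{\beta}$, and a short enumeration of the polytope's four vertices shows the maximum is attained at $(\overline{\beta},\underline{\beta})=(1,p)$ or at $(\overline{\beta},\underline{\beta})=(p,0)$. Finally I would check --- using the relations $\underline{\beta}^{\sigma}=F(\sfrac{1}{2})\alpha^{\sigma}$ and $\overline{\beta}^{\sigma}=F(\sfrac{1}{2})+(1-F(\sfrac{1}{2}))\alpha^{\sigma}$ from Section~\ref{sec:Equilibrium-Strategies}, or by a direct computation --- that $(\overline{\beta},\underline{\beta})=(1,p)$ is the profile realized by $\sigma_{L}$ (left tendency one), with $\pi(\sigma_{L},\sigma_{L})=\mathbb{E}[u]+c_{1}+c_{2}p$, and that $(\overline{\beta},\underline{\beta})=(p,0)$ is realized by $\sigma_{R}$ (left tendency zero), with $\pi(\sigma_{R},\sigma_{R})=\mathbb{E}[u]+c_{1}p$. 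Combining, $\pi(\sigma,\sigma)\le\max\{\pi(\sigma_{L},\sigma_{L}),\pi(\sigma_{R},\sigma_{R})\}$, which is the claim.
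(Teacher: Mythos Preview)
Your proof is correct and rests on the same core ideas as the paper's: reduce a coordinated equilibrium to the two numbers $\overline{\beta},\underline{\beta}$ via the binary-communication reasoning, exploit the constraint that links them, and compare with the extremes $\sigma_L,\sigma_R$.

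Two differences are worth flagging. First, the paper (in the third case of the proof of Lemma~\ref{lemma-maximizing-ex-ante-payoff}) simply asserts the key inequality in the form $\underline{p}\ge\alpha F(\sfrac12)$, which is algebraically equivalent to your $p\,\overline{\beta}\le p^{2}+(1-p)\underline{\beta}$, without justification; your symmetry argument via $\psi(u,v)$ is exactly what is needed to fill that step. Second, the paper packages the conclusion differently: it first proves the stronger intermediate statement that every coordinated equilibrium is \emph{interim} Pareto-dominated by some communication-proof $\sigma_\alpha$, and then notes that $\pi(\sigma_\alpha,\sigma_\alpha)=\alpha\,\pi(\sigma_L,\sigma_L)+(1-\alpha)\,\pi(\sigma_R,\sigma_R)$. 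Your linear-program formulation goes straight to the ex-ante payoff and is more direct for the proposition at hand; the paper's detour, in exchange, yields the interim-dominance lemma that is reused elsewhere (e.g., in Remark~\ref{rem-all-stage}).
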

\begin{proof}[Sketch of proof; see Appendix \ref{app:proofsefficiency} for the formal proof.] Let $\alpha^{\sigma}$ be the probability of two players who each follow $\sigma$ to coordinate on $L$, conditional on the players having different preferred outcomes. It is easy to see that $\sigma$ is dominated by the communication-proof equilibrium strategy with the same left tendency $\alpha^{\sigma}$, and that the payoff of the latter strategy is a convex combination of the payoffs of $\sigma_{L}$ and $\sigma_{R}$, which implies that $\pi\left(\sigma,\sigma\right)\leq\max\left\{ \pi\left(\sigma_{L},\sigma_{L}\right),\pi\left(\sigma_{R},\sigma_{R}\right)\right\}$.
\end{proof}
\begin{rem}
\label{rem-all-stage}
One could refine the notion of communication-proofness to allow agents to renegotiate to a Pareto-improving equilibrium also at earlier stages (\`a la \citealp{benoit1993renegotiation}): at the interim stage before observing the realized messages induced by the original equilibrium, and at the ex-ante stage before each agent observes her own type. Proposition \ref{prop:Pareto-optimal} implies that allowing agents to renegotiate also at the interim stage does not change the set of communication-proof equilibria. Proposition \ref{prop-maximizing-coordinated} implies that if $\pi\left(\sigma_{L},\sigma_{L}\right)\neq\pi\left(\sigma_{R},\sigma_{R}\right)$ then allowing agents to renegotiate also at the ex-ante stage yields a unique ``all-stage'' communication-proof equilibrium. This equilibrium is either $\sigma_{L}$ or $\sigma_{R}$. The set of communication-proof equilibria is not affected by introducing ex-ante renegotiation if $\pi\left(\sigma_{L},\sigma_{L}\right)=\pi\left(\sigma_{R},\sigma_{R}\right)$. We do not allow ex-ante communication in our model because in most relevant applications (and, in particular, in the motivating examples presented earlier), it seems plausible that the agents can only communicate after they know their own types.
\end{rem}
\paragraph{{Any babbling equilibrium is dominated by either $\sigma_L$, $\sigma_R$ or $\sigma_C$}}
Recall that any babbling equilibrium (or any equilibrium in the coordination game without communication) is characterized by a cutoff value $x\in[0,1]$ such that $x=F(x)$ with the interpretation that types $u\le x$ play $L$ and types $u>x$ play $R$. Let $\pi_{u}\left(x,x'\right)$ denote the payoff of an agent with type $u$ who follows a strategy with cutoff $x$ and faces a partner of unknown type who follows a strategy with cutoff $x'$:
\[
\pi_{u}\left(x,x'\right)=\boldsymbol{1}_{\{u\leq x\}}F\left(x'\right)\left(1-u\right)+\boldsymbol{1}_{\{u>x\}}\left(1-F\left(x'\right)\right)u,
\]
and let $\pi\left(x,x'\right)=\mathbb{E}_{u}\left[\pi_{u}\left(x,x'\right)\right]$ be the ex-ante expected payoff of an agent who follows $x$ and faces a partner who follows $x'$. {Our final result of this section shows an appealing property of the three simplest communication-proof equilibrium strategies. Specifically,} it shows that any (possibly asymmetric) {babbling } equilibrium  is strictly Pareto-dominated by either $\sigma_{L}$, $\sigma_{R}$, or $\sigma_{C}$.
\begin{cor}
\label{cor:strictly-improving-no-communication} Let $\left(x,x'\right)$
be a (possibly asymmetric) {babbling} equilibrium. Then $\pi_{u}\left(x,x'\right)\leq\pi_{u}\left(\sigma_{L},\sigma_{L}\right)$ for all types $u\in U$, or $\pi_{u}\left(x,x'\right)\leq\pi_{u}\left(\sigma_{R},\sigma_{R}\right)$ for all types $u\in U$, or $\pi_{u}\left(x,x'\right)\leq\pi_{u}\left(\sigma_{C},\sigma_{C}\right)$ for all types $u\in U$. Moreover, all the inequalities are strict for almost all types.
\end{cor}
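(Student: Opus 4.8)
The plan is to characterize all equilibria of the coordination game without communication, and then show each is dominated by one of the three extreme communication-proof strategies. Recall from the text that any equilibrium $(x,x')$ without communication has $x = F(x')$ and $x' = F(x)$ (each player's cutoff is a best reply to the opponent's probability of playing $L$). In the symmetric case $x = x'$ this gives $x = F(x)$; in general it gives $F(F(x)) = x$. The key quantities are $p = F(x')$, the probability the row player faces $L$, and $p' = F(x) = x'$, the probability the column player faces $L$; note $x = p$ and $p' = x'$, so $p, p' \in [0,1]$ and they play the roles of the ``left tendencies'' of the two players.

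First I would compute the interim payoff $\pi_u(x,x') = \mathbf{1}_{\{u \le x\}} F(x')(1-u) + \mathbf{1}_{\{u > x\}} (1-F(x')) u = \mathbf{1}_{\{u \le p\}} p(1-u) + \mathbf{1}_{\{u > p\}} (1-p) u$, and similarly compare with the payoffs under $\sigma_L$ (left tendency $1$), $\sigma_R$ (left tendency $0$), and $\sigma_C$ (left tendency $\sfrac12$). Under $\sigma_L$, a type $u \le \sfrac12$ always coordinates on $L$ (getting $1-u$, since the opponent plays $L$ whenever her preferred action is $L$, which has probability $F(\sfrac12)$, and also in the disagreement case — wait, more carefully: with $\sigma_L$, if $u \le \sfrac12$ then $\beta = \overline\beta^{\sigma_L} = F(\sfrac12) + (1-F(\sfrac12))\cdot 1 = 1$, so the type-$u$ player coordinates on $L$ with probability one, earning $1-u$; if $u > \sfrac12$ then $\beta = \underline\beta^{\sigma_L} = F(\sfrac12)\cdot 1 = F(\sfrac12)$, so she coordinates on $R$ with probability $1 - F(\sfrac12)$, earning $(1-F(\sfrac12))u$). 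By the analogous computation for $\sigma_R$ and $\sigma_C$ (with $\overline\beta$, $\underline\beta$ determined by left tendencies $0$ and $\sfrac12$ together with the formulas $\underline\beta^\sigma = F(\sfrac12)\alpha^\sigma$, $\overline\beta^\sigma = F(\sfrac12) + (1-F(\sfrac12))\alpha^\sigma$ from the text), one gets closed forms for all three comparison payoffs.

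The heart of the argument is then a case split on where $p = F(x')$ lies relative to $F(\sfrac12)$, equivalently where $x'$ lies relative to $\sfrac12$; by symmetry of the roles of the two players one may also track $x$ versus $\sfrac12$. If both $x \ge \sfrac12$ and $x' \ge \sfrac12$ (both players play $L$ at least half the time against each type's preferred action pattern), I claim domination by $\sigma_L$: a type $u \le \sfrac12$ under $(x,x')$ earns $p(1-u) = F(x')(1-u) \le (1-u)$, strictly less whenever $x' < 1$ (and when $x' = 1$ the opponent always plays $L$ but then the $x = F(1) = 1$ forces this player to play $L$ too, reproducing $\sigma_L$'s outcome for low types, while high types strictly gain); a type $u > \sfrac12$ earns $(1-p)u = (1-F(x'))u \le (1-F(\sfrac12))u$, which is exactly $\sigma_L$'s payoff when $x' = \sfrac12$ and strictly less otherwise. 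The mirror-image case $x \le \sfrac12$ and $x' \le \sfrac12$ gives domination by $\sigma_R$. The remaining case is $x < \sfrac12 < x'$ (or the symmetric swap): here, since $x = F(x') > F(\sfrac12)$ forces $x' > \sfrac12$ consistently, one shows the coordination probability is low — precisely, the probability that both play the same action is at most $\sfrac12$ (the argument sketched for Lemma~\ref{lemma:nomiscoordination}: $x < \sfrac12$ is a best reply only if the opponent plays $L$ with average probability below $\sfrac12$, i.e.\ $F(x') \le \sfrac12$, forcing $x = F(x') \le \sfrac12$ to actually be $\le \sfrac12$ with the conflict structure) — and then $\sigma_C$, which guarantees full coordination and the fair $\sfrac12$-lottery in the disagreement region, dominates interim: every type $u \le \sfrac12$ coordinates on $L$ with probability $\overline\beta^{\sigma_C} = \sfrac12(1+F(\sfrac12)) \ge$ the probability under $(x,x')$, and symmetrically for $u > \sfrac12$.

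The main obstacle I anticipate is the mixed case $x < \sfrac12 < x'$: one must carefully verify that the equilibrium constraints $x = F(x')$, $x' = F(x)$ really do force the relevant probabilities (of the opponent playing each type's preferred action) below the thresholds $\overline\beta^{\sigma_C}$ and $1 - \underline\beta^{\sigma_C}$ uniformly in $u$, rather than merely on average; this is where the interim (not just ex-ante) domination could fail if not handled precisely, and it likely requires splitting further according to whether $u$ is on the ``majority'' side or the ``minority'' side of the two cutoffs. The strictness claim ``for almost all types'' follows in each case from the full-support assumption on $F$ (so $F(x') < 1$ whenever $x' < 1$, etc.) together with the fact that the inequalities above are strict except on the measure-zero set of threshold types or in the degenerate sub-cases where $(x,x')$ literally coincides with one of $\sigma_L,\sigma_R,\sigma_C$ in outcome — and in those degenerate sub-cases the strict gain accrues to the complementary (positive-measure) set of types. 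I would close by remarking that this corollary also follows a fortiori from Proposition~\ref{prop-maximizing-coordinated} combined with the fact that any no-communication equilibrium is in particular an equilibrium of $\langle\Gamma,M\rangle$, for the ex-ante statement; but the interim, type-by-type statement with strictness requires the direct computation above.
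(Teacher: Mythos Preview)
Your case split on $x,x'$ versus $\tfrac12$, comparing to $\sigma_R$, $\sigma_L$, $\sigma_C$, is exactly the paper's approach: the corollary follows by invoking Lemma~\ref{lemma:nomiscoordination} with $F_m = F_{m'} = F$. There is, however, a concrete error in your execution. Under $\sigma_L$ a type $u > \tfrac12$ earns $F(\tfrac12)(1-u) + (1-F(\tfrac12))u$, not $(1-F(\tfrac12))u$: the strategy is \emph{coordinated}, so after $(m_R, m_L)$ both players play $L$ and the high type earns $1-u$, not zero. This matters because you also omit the sub-case $\tfrac12 < u \le x$ in the regime $x, x' \ge \tfrac12$ (such a type plays $L$ under $(x,x')$, earning $F(x')(1-u)$, not $(1-F(x'))u$). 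With your incorrect $\sigma_L$-payoff the needed inequality $F(x')(1-u) \le (1-F(\tfrac12))u$ can fail for $u$ just above $\tfrac12$ and $F(x')$ close to $1$; with the correct formula, $F(x')(1-u) \le F(\tfrac12)(1-u) + (1-F(\tfrac12))u$ follows from $u \ge 1-u$ and $F(x') - F(\tfrac12) \le 1 - F(\tfrac12)$. This is precisely Case~1b (by symmetry, 2b) of Lemma~\ref{lemma:nomiscoordination}.

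Separately, the mixed case you flag as the main obstacle is vacuous here: with both players drawn from the same $F$, the equilibrium conditions $x = F(x')$ and $x' = F(x)$ rule out $x < \tfrac12 < x'$, since then $x = F(x') > F(\tfrac12) > F(x) = x'$, contradicting $x < x'$. Only the $\sigma_L$ and $\sigma_R$ branches are ever needed for this corollary; the $\sigma_C$ disjunct is redundant in the statement, though genuinely required in Lemma~\ref{lemma:nomiscoordination} where $F_m$ and $F_{m'}$ can differ.
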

Corollary \ref{cor:strictly-improving-no-communication} is immediately implied by Lemma \ref{lemma:nomiscoordination} in Appendix \ref{subsec:uniqueness}, and the sketch of proof of the lemma is presented as part of the sketch of the proof of Theorem \ref{thm:uniqueness}.

\section{Multidimensional Sets of Types\label{subsec:Multi-Dimensional-Set}} 

In our model we made the simplifying assumption that miscoordination provides the same payoff (normalized to zero) to both players. This is not completely innocuous. In this section we explore which results are still true in this more general setting. Consider the following multidimensional set of types. Let $\hat{U}$, a subset of $\mathbb{R}^{4}$, be the set of payoff matrices of binary coordination games, with $u_{ab}$ being the payoff if a player chooses action $a\in\{L,R\}$ while her opponent chooses action $b\in\{L,R\}$:
\[
\hat{U}=\left\{ \left(u_{LL},u_{LR},u_{RL},u_{RR}\right)\mid u_{LL}>u_{RL}\mbox{ and }u_{RR}>u_{LR}\right\} .
\]

{Thus, for all types the best reply against each opponent's action is to play the same action.} Let $\hat{\Gamma}=\hat{\Gamma}\left(G\right)$ denote the coordination game with the type space $\hat{U}$, endowed with an atomless CDF $G$ over $\hat{U}$ with a density $g$. Similarly, let $\langle\hat{\Gamma},M\rangle$ be the corresponding game with communication.

Given a type $u=\left(u_{LL},u_{LR},u_{RL},u_{RR}\right)$, let $\varphi_{u}\in\left[0,1\right]$ denote type $u$'s \emph{indifference threshold}, which is the probability of the opponent playing $L$ that induces an agent of type $u$ to be indifferent:
$\varphi_{u}=\sfrac{(u_{RR}-u_{LR})}{(u_{LL}-u_{RL}+u_{RR}-u_{LR})}.$

Observe that an agent with indifference threshold $\varphi_{u}$, where $\varphi_{u}$ is a number always between $0$ and $1$, prefers to play $L$ ($R$) if her partner plays $L$ with probability larger (smaller) than $\varphi_{u}$. In other words, for a given probability of her partner playing $L$, a type $u$ prefers to play $L$ if and only if $\varphi_{u}$ is less than that probability. Thus, the indifference threshold $\varphi_{u}$ replaces what we denoted by $u$ in the main model. In particular, in this setting we can also restrict attention to cutoff action functions. These are now applied to $\varphi_{u}$ instead of to $u$. Thus, under a strategy $\sigma=(\mu,\xi)$ a player plays action $L$ after observing a message pair $(m,m')$ if and only if $\varphi_{u}\le\xi(m,m')$.
We set
\[
F(\varphi)=\int_{\left\{ u\in U:\varphi_{u}\leq\varphi\right\}} g(u)du
\]
to be the implied distribution over the players' indifference threshold induced by density $g$. As in the baseline model, we assume that $F(\varphi)$ has full support on the interval $[0,1]$.

Recall that a pure equilibrium in a two-action game is \emph{risk-dominant} (\citealp{harsanyi1988general}) if for each type $u$, playing her part of the equilibrium is a best reply against the opponent randomizing equally over the two actions (which holds iff $\varphi_{u}\leq\sfrac{1}{2}\,\Leftrightarrow\,u_{LL}-u_{LR}\ge u_{RR}-u_{RL}.$). The crucial assumption that we implicitly make in our main model is that the payoff-dominant equilibrium of each type coincides with her risk-dominant equilibrium. That is, the coordination preferences are \emph{uniform} in the sense that the two notions of dominance agree with each other for all types.
\begin{defn}
An atomless distribution over the space $\hat U$ with density function $g:\hat{U} \to \mathbb{R}$ satisfies \emph{{uniform} coordination preferences} if for any $u\in \hat{U}$ with $g(u)>0$ we have $u_{LL}\ge u_{RR}\,\,\Leftrightarrow\,\,\varphi_{u}\leq\sfrac{1}{2}$.
\end{defn}
In other words, we assume that all types prefer coordinating on action $L$ iff action $L$ is also their best-reply against a uniform distribution of actions of the opponent. 
{In particular, in the pedestrians motivating example, this assumption implies that} if a pedestrian's preferred outcome is coordination on $L$, she would also choose $L$ when facing a pedestrian who chooses each side with equal probabilities).

It is immediate that one direction of Theorem \ref{thm:uniqueness} (namely, ``$1\Rightarrow 2\Rightarrow3$'') {holds} in this multi-dimensional setup without any additional assumptions (specifically, all the relevant arguments in the proof of Theorem \ref{thm:uniqueness} hold in this more general setup). Our next result shows that the {proof of the} other direction of Theorem \ref{thm:uniqueness} (namely, ``$3\Rightarrow 1$'') goes through unchanged if we assume that the all types have uniform coordination preferences.
\begin{thm}[Theorem \ref{thm:uniqueness} adapted to a multidimensional set of types]
\label{Thm:multidimensional}Let $\sigma$ be a strategy in a game $\langle\hat{\Gamma},M\rangle$ satisfying {uniform} coordination preferences. Then the following statements are equivalent:
\begin{enumerate}
\item $\sigma$ is mutual-preference consistent, coordinated, and has binary communication.
\item $\sigma$ is a strongly communication-proof equilibrium strategy.
\item $\sigma$ is a weakly communication-proof equilibrium strategy.
\end{enumerate}
\end{thm}
The proof is presented in Appendix \ref{subsec:Proof-of-Theorem-multi}. The intuition is the same as in Theorem \ref{thm:uniqueness}. The following two examples demonstrate why the restriction of uniform coordination preferences is necessary for the ``$3\Rightarrow1$'' part of the result. {Specifically, the examples show how preferences that do not satisfy uniform coordination allows equilibria with miscoordination to be communication proof.} Both examples are presented with discrete distributions, but it is straightforward to modify them to nearby full-support atomless distributions.

\begin{example}[\emph{Strongly communication-proof equilibrium with miscoordination}]
\label{example:simplecounterexample}
There are two possible preference types with equal probabilities as follows:
\begin{center}
\begin{tabular}{c|cc}
$u_{L}$ & L & R\tabularnewline
\hline 
L & 10 & 0\tabularnewline
R & 9 & 5\tabularnewline
\end{tabular}~~~%
\begin{tabular}{c|cc}
$u_{R}$ & L & R\tabularnewline
\hline 
L & 5 & 9\tabularnewline
R & 0 & 10\tabularnewline
\end{tabular}
\par\end{center}
Let $M=\{m_{L},m_{R}\}$ and let $\sigma=(\mu,\xi)$ be a strategy with the following properties. Each player reveals her preferred outcome, $\mu\left(u_{L}\right)=m_{L}$, $\mu\left(u_{R}\right)=m_{R}$. If the players sent the same messages they both coordinate on their jointly preferred outcome. If the  players sent different messages (which implies that one player has type $u_L$ and the opponent has type $u_R$) then they play the mixed equilibrium of the induced complete-information game. That is, player $u_{L}$ plays $L$ with probability $\sfrac{1}{6}$ and player $u_{R}$ plays $R$ with probability $\sfrac{1}{6}$. This yields an expected payoff of $8$ plus $\sfrac{1}{3}$ to each player. Observe that strategy $\sigma$ is an equilibrium, in which each type of player at the interim expects a payoff of $\sfrac{1}{2} \cdot 10 + \sfrac{1}{2} \left(8+\sfrac{1}{3}\right) = 9 + \sfrac{1}{6}$. Misreporting one's type would lead to a best possible payoff of $\sfrac{1}{2} \sfrac{17}{3} + \sfrac{1}{2} \cdot 5$, which is less than the equilibrium payoff.  

Observe that $\sigma$ is strongly communication-proof. Both players achieve their maximal feasible payoff if they send the same message. If they send different messages, then the equilibrium payoff of $8 + \sfrac{1}{3}$ to both players is Pareto-efficient in the convex hull of the set of Nash equilibria. It, thus, cannot be CP-trumped (recall that in complete information games communication can only implement outcomes in the convex hull of the set of Nash equilibria). 
\end{example}
Example \ref{example:simplecounterexample} could be extended to many distributions over types. For instance, consider a distribution over types that attaches positive weight (which might be close to one) to a (suitably chosen - see below) subset $U'$ of the set of types in the basic model (as in Table \ref{tab:Payoff-Matrix-of}). The remaining positive weight (which might be arbitrarily small) is attached to a pair of preferences $u,v \in \hat{U}$ that satisfy $u_{LL} > u_{RR}$ and $v_{LL} < v_{RR}$. Suppose further that the complete information coordination game between $u$ and $v$ has a Pareto-undominated mixed equilibrium. This implies that there is a convex combination of coordinated equilibria $(L,L)$ and $(R,R)$, with $\alpha \in [0,1]$ the weight on $(L,L)$, that is Pareto-dominated by the Pareto-undominated mixed equilibrium, for the complete information coordination game between $u$ and $v$. Now, consider the strategy $\sigma$ that is such that all types indicate their ordinal coordination preferences (by sending messages $m_L$ or $m_R$) plus the types $u$ and $v$ fully reveal their preferences (by sending distinct messages $m_u$ and $m_v$, respectively). Suppose further that play after $(m_L,m_L)$ and after $(m_L,m_u)$ is $(L,L)$ and after $(m_R,m_R)$, after $(m_R,m_v)$ it is $(R,R)$, and after $(m_L,m_R)$, $(m_u,m_R)$, and $(m_L,m_v)$ it is coordinated with left-tendency $\alpha$ and, finally, after $(m_u,m_v)$ it is the Pareto-undominated mixed equilibrium. Then $\sigma$ is strongly renegotiation-proof. The subset $U'$ of types can be chosen such that no type prefers to send a different message from the one they are supposed to, play after any message pair is in equilibrium, and there is no other equilibrium that CP-trumps $\sigma$.\footnote{If, for instance, the additional preference pair $(u,v)=(u_L,u_R)$ as in Example \ref{example:counterexample}, we could choose $\alpha=\sfrac{1}{2}$ and $U'=\left[\sfrac{2}{5},\sfrac{3}{5}\right]$.}

The non-uniform coordination types in Example\ref{example:simplecounterexample} {are ``extreme'' in the sense that they prefer a miscoordinated outcome over a coordinated outcome (e.g., type $u_L$ prefers miscoordination on $(R,L)$ over coordination on $(R,R)$). This allows the mixed equilibrium with miscoordination in the complete-information game played between two different types to be Pareto-undominated, which, in turn, implies communication-proofness. Our next example, shows that the direction of ``3 $\Rightarrow$ 1'' does not hold even with (non-uniform coordination) types for which any {miscoordinated} outcome is Pareto dominated by any coordinated outcome.}
\begin{example}
\label{example:counterexample}
There are four possible preference types with probabilities given below:

\begin{center}
\begin{tabular}{cccc}
\begin{tabular}{c|cc}
$u_{L_{1}}$  & L  & R \\
\hline
L  & 2  & 0 \\
R  & 0  & 1 \\
\end{tabular}
&
\begin{tabular}{c|cc}
$u_{L_{2}}$  & L  & R \\
\hline
L  & 2  & -15 \\
R  & 0  & 1 \\
\end{tabular}
&
\begin{tabular}{c|cc}
$u_{R_{1}}$  & L  & R \\
\hline
L  & 1  & 0 \\
R  & 0  & 2 \\
\end{tabular}
&
\begin{tabular}{c|cc}
$u_{R_{2}}$  & L  & R \\
\hline
L  & 1  & 0 \\
R  & -15  & 2 \\
\end{tabular} \\ 
$P(u_{L_1})=\sfrac{1}{18}$ & $P(u_{L_2})=\sfrac{8}{18}$ & $P(u_{R_1})=\sfrac{1}{18}$ & $P(u_{R_2}) = \sfrac{8}{18}$ 

\end{tabular}
\end{center}

The letter of each type ($L$ or $R$) represents its preferred coordinated outcomes. The less frequent ``1'' types have standard uniform coordination preferences as in the baseline model. The more frequent ``2'' types have non-uniform preferences: although type $u_{L_2}$ prefers coordination on $(L,L)$, its risk-dominant equilibrium is $(R,R)$ because playing $L$ is risky due to the low payoff of $-15$ obtained from outcome $(L,R)$.  
Let $M=\{m_{L},m_{R}\}$ and let $\sigma=(\mu,\xi)$ be such that each player reveals her preferred coordinated outcome (i.e., $\mu\left(u_{L_{1}}\right)=\mu\left(u_{L_{2}}\right)=m_{L}$ and $\mu\left(u_{R_{1}}\right)=\mu\left(u_{R_{2}}\right)=m_{R}$), and the players play the jointly preferred outcome if they send the same message (i.e., $\xi\left(m_{L},m_{L}\right)=L$, $\xi\left(m_{R},m_{R}\right)=R$) and each obtain payoff $2$, and each player plays her risk-dominant action if they send different messages (i.e.,  $\xi\left(u_{L_{1}},m_{L},m_{R}\right)=\xi\left(u_{R_{2}},m_{R},m_{L}\right)=L$,  $\xi\left(u_{L_{2}},m_{L},m_{R}\right)=\xi\left(u_{R_{1}},m_{R},m_{L}\right)=R$). In the latter case (different messages) players with uniform preferences (i.e., $u_{L_1}$ and $u_{R_1}$) have an expected payoff of $2 \cdot\frac{8}{9}=\frac{16}{9}$, while those with non-uniform preferences (i.e., $u_{L_2}$ and $u_{R_2}$) have an expected payoff of $\frac{1}{9}$.

Observe that $\sigma$ is an equilibrium. In particular, a player with a non-uniform type (say $u_{L_1}$) would obtain only $\sfrac{1}{2}\cdot 1+\sfrac{1}{2}\cdot \sfrac{8}{9}=\sfrac{17}{18}$ by misreporting her preferred outcome, which is less than her equilibrium payoff of $\sfrac{1}{2}\cdot 2 + \sfrac{1}{2}\cdot \frac{1}{9}=\sfrac{19}{18}.$ The equilibrium payoff of a player with a uniform type is $\sfrac{1}{2}\cdot2+\sfrac{1}{2}\cdot\left(\sfrac{8}{9}\cdot2+\sfrac{1}{9}\cdot 0 \right)=\sfrac{17}{9}$.
Our next result shows that the $\sigma$, although being an equilibrium with miscoordination, is weakly communication proof. 

\begin{prop} \label{prop:counterexample} Equilibrium strategy $\sigma$ is weakly communication proof, and not strongly communication proof.
\end{prop}
\begin{proof}[Sketch of proof; see Appendix \ref{app:counterexproof} for the formal proof.] When the players send the same messages, they obtain their maximal feasible payoff, and thus there can be no Pareto-improving equilibrium. The key argument in the proof shows that after the players send different messages (i.e.,  $(m_L,m_R)$) there can be no Pareto-improving \emph{coordinated} equilibrium. In order for the coordinated equilibrium to be Pareto dominant, the uniform types ($u_{R_{1}}$ and $u_{L_{1}}$) must obtain an expected payoff of at-least $\sfrac{17}{9}$. Observe that in any coordinated equilibrium, types $L_1$ and $L_2$ must obtain the same payoff (as their payoff matrix differ only over outcomes with miscoordination), and similarly types $R_1$ and $R_2$ must obtain the same payoff. This implies that all types must obtain an expected payoff of at least $\sfrac{17}{9}>\sfrac{3}{2}$, which is impossible because the sum of payoffs of all outcomes is at most $3$. Next, the formal proof shows that there exist equilibria with miscoordination that CP trump $\sigma$ after $(m_L,m_R)$, but that these miscoordinated equilibria are CP trumped by other equilibria. This implies that $\sigma$ is weakly communication proof but not strongly communication proof.
\end{proof}
Weak communication-proofness holds in this setup due to the interplay between incomplete information and non-uniform coordination preferences. {Specifically,  we demonstrate} that non-uniform preferences  {give rise to} the existence of miscoordinated equilibria of games with incomplete information that are not dominated by coordinated equilibria. {This property persists despite the fact that}  all miscoordinated equilibria of the \emph{complete-information} games are dominated by coordinated equilibria.
\end{example}

{Suppose we change the distribution of types in Example \ref{example:counterexample} such that each of the four types have a probability of $\sfrac{1}{4}$.
In the following paragraphs we show that $\sigma$ is also an equilibrium of this game, but it is now CP-trumped by $\sigma_C$, and, thus, not weakly communication proof. This shows that once we go outside the class of uniform coordination preference types, the set of weakly communication proof equilibria (and probably also those of strongly communication proof equilibria) depends on the details of the distribution over types (while communication-proofness is independent of these details under the assumption of uniform coordination preferences, see Theorem \ref{Thm:multidimensional}).}

{To see that $\sigma$ is an equilibrium we need to check the players' incentives to follow $\sigma$ assuming their opponent chooses $\sigma$. Playing $L$ after $(m_L,m_L)$ and $R$ after $(m_R,m_R)$ is clearly optimal (if the opponent does the same). Consider the case of mixed messages $(m_L,m_R)$. A type $L_1$ would then prefers $L$ (which yields a payoff of $1$) over $R$ (a payoff of $\frac{1}{2}$), while a type $L_2$ prefers $R$ (a payoff of $\frac{1}{2}$) over $L$ (a payoff of $2-\frac{17}{2}$). The case for $R$ types is analogous. }
{Next, observe that a $L_1$ prefers message $m_L$ over $m_R$ type because the former yields a payoff of  $\sfrac{1}{2}\cdot2 + \sfrac{1}{2}\cdot2\cdot\sfrac{1}{2} = \sfrac{3}{2}$, while the latter yields a smaller payoff of either $\sfrac{1}{2}\cdot1 + \sfrac{1}{2}\cdot\sfrac{1}{2}$ (playing $R$) or $\sfrac{1}{2}\cdot1 + \sfrac{1}{2}\cdot2\cdot\sfrac{1}{2}$ (playing $L$). The incentives for $R$ types are analogous.}
{To see that $\sigma$ is CP-trumped by $\sigma_C$, note that the $\sigma$ equilibrium payoff after $(m_L,m_R)$ to $L_1$ and $R_1$ types is $2\cdot\sfrac{1}{2}=1$ and for $L_2$ and $R_2$ types it is $\sfrac{1}{2}$. Thus, strategy $\sigma_C$ (in which essentially a fair coin-toss coordinated action pair is chosen, which yields a payoff of \sfrac{3}{2} to all types) Pareto improves over $\sigma$ after $(m_L,m_R).$}

\section{Asymmetric Coordination Games\label{sec:Asymmetric-Coordination-Games}}
{Our baseline model assumes that both players' types have the same distribution $F$, and our solution concept focuses on symmetric equilibria. This
is done to simplify the notation and ease the exposition. In this section we adapt our model and results to asymmetric games.}

{\textbf{Adapted Model \,} Consider a setup similar to our baseline
model except that the distributions of the types of the two players'
positions differ: the type of player 1 is distributed according to
$F_{1}$ and the type of player 2 is distributed according to $F_{2}$.
As in the baseline model, both distributions are assumed to be atomless
with full support in $\left[0,1\right]$. Let $\langle\Gamma\left(F_{1},F_{2}\right),M\rangle$
denote the asymmetric coordination game with communication (to ease
notation, we assume that both players have the same set of messages
at their disposal). Let $\Sigma^{i}$ denote the set of all strategies
of player $i\in\left\{ 1,2\right\} $. We let $i$ denote the index
of one player and $j$ denote the index of the opponent.}

\begin{rem}
{In this context, the game $\langle\Gamma\left(F,F\right),M\rangle$ in which both players
have the same distribution of types now corresponds to a setup, in which
the payoff-irrelevant position of player 1 or player 2 is identifiable,
and the players can condition their play on their positions.} 
\end{rem}
{Given a strategy profile $\left(\sigma_{1},\sigma_{2}\right)$, let
$\pi_{u}^{i}\left(\sigma_{1},\sigma_{2}\right)$ denote the (interim)
payoff of type $u$ of player $i\in\left\{ 1,2\right\} $, and let
$\pi^{i}\left(\sigma_{1},\sigma_{2}\right)=\mathbb{E}_{u\sim F_{i}}\left[\pi_{u}^{i}\left(\sigma_{1},\sigma_{2}\right)\right]$
denote the ex-ante payoff of player $i\in\left\{ 1,2\right\} $. A
strategy profile $\left(\sigma_{1},\sigma_{2}\right)$ is an \emph{equilibrium} if $\pi_{u}^{1}\left(\sigma_{1},\sigma_{2}\right)\geq\pi_{u}^{1}\left(\sigma'_{1},\sigma_{2}\right)$
for each strategy $\sigma'_{1}\in\Sigma^{1}$ and for each type $u$
of player 1, and $\pi_{u}^{2}\left(\sigma_{1},\sigma_{2}\right)\geq\pi_{u}^{2}\left(\sigma_{1},\sigma'_{2}\right)$
for each strategy $\sigma'_{2}\in\Sigma^{2}$ and for each type $u$
of player $2$.}

\textbf{Adapted Key Properties \,} {We adapt the three key properties
of Section \ref{sec:Equilibrium-Strategies} as follows. Let $\mu_{u}^{i}\left(m_{i}\right)$
denote the probability, given message function $\mu^{i}$, that player
$i$ sends message $m_{i}$ if she is of type $u_{i}$. Let $\mu^{i}\left(m_{i}\right)=\mathbb{E}_{u\sim F_{i}}\left[\mu_{u}^{i}\left(m_{i}\right)\right]$
be the average (ex-ante) probability of player $i$ sending message
$m_{i}$. A strategy profile $\left(\sigma_{1},\sigma_{2}\right)$
is \emph{mutual-preference consistent} if whenever $u_{1},u_{2}<\sfrac{1}{2}$
then $\xi_{1}\left(m_{1},m_{2}\right)=\xi_{2}\left(m_{1},m_{2}\right)=L$
for all $m_{1}\in\mbox{supp}\left(\mu_{u}^{1}\right)$ and $m_{2}\in\mbox{supp}\left(\mu_{u}^{2}\right)$,
and whenever $u_{1},u_{2}>\sfrac{1}{2}$ then $\xi_{1}\left(m_{1},m_{2}\right)=\xi_{2}\left(m_{1},m_{2}\right)=R$
for all $m_{1}\in\mbox{supp}\left(\mu_{u}^{1}\right)$ and $m_{2}\in\mbox{supp}\left(\mu_{u}^{2}\right)$.}

{A strategy profile $\left(\sigma_{1},\sigma_{2}\right)$ is \emph{coordinated}
if $\xi_{1}\left(m_{1},m_{2}\right)=\xi_{2}\left(m_{1},m_{2}\right)\in\left\{ L,R\right\} $
for each pair of messages $m_{1}\in\mbox{supp}\left(\mu^{1}\right)$
and $m_{2}\in\mbox{supp}\left(\mu^{2}\right)$.}

{For any strategy profile $\sigma=\left((\mu^{1},\xi_{1}),(\mu^{2},\xi_{2})\right)\in\Sigma^{1}\times\Sigma^{2}$
and any message $m_{j}\in M$, define 
\[
\beta_{i}^{\sigma}(m_{j})=E_{u\sim F_{i}}\left[\sum_{m_{i}\in M}\mu_{u}^{i}(m_{i})\boldsymbol{1}_{\left\{ u\leq\xi_{i}(m_{i},m_{j})\right\} }\right]
\]
as the expected probability of player $i$ playing $L$ conditional
on player $j$ sending message $m_{j}\in M$. We say that strategy
profile $\sigma=\left(\sigma_{1},\sigma_{2}\right)$ has \emph{(essentially)
binary communication} if there are two pairs of numbers $0\le\underline{\beta}_{1}^{\sigma}\le\overline{\beta}_{1}^{\sigma}\le1$
and $0\le\underline{\beta}_{2}^{\sigma}\le\overline{\beta}_{2}^{\sigma}\le1$
such that for all messages $m\in M$ and each player $i\in\left\{ 1,2\right\} $
we have $\beta_{i}^{\sigma}(m)\in[\underline{\beta}_{i}^{\sigma},\overline{\beta}_{i}^{\sigma}]$;
for all messages $m\in M$ such that there is a type $u<\sfrac{1}{2}$
with $\mu_{u}^{j}(m)>0$ we have $\beta_{i}^{\sigma}(m)=\overline{\beta}_{i}^{\sigma}$;
and for all messages $m\in M$ such that there is a type $u>\sfrac{1}{2}$
with $\mu_{u}(m)>0$ we have $\beta_{i}^{\sigma}(m)=\underline{\beta}_{i}^{\sigma}$.}

Next we adapt the definition of left tendency. Consider a strategy profile $\sigma=\left(\sigma_{1},\sigma_{2}\right)$
that is coordinated and mutual-preference consistent and has binary
communication. Then there are $\alpha_{1}^{\sigma},\alpha_{2}^{\sigma}\in[0,1]$
such that, for each $i\in\left\{ 1,2\right\} $, 
\[
\underline{\beta}_{i}^{\sigma}=\left(1-F_{j}({\textstyle {\frac{1}{2}})}\right)\alpha_{i}^{\sigma}\mbox{ and }\overline{\beta}_{i}^{\sigma}=F_{j}({\textstyle {\frac{1}{2}})+\left(1-F_{j}({\textstyle {\frac{1}{2}})}\right)\alpha_{i}^{\sigma},}
\]
where $\alpha_{i}^{\sigma}$ is the probability of coordination on
$L$ conditional on player $i$ having type $u_{i}<\sfrac{1}{2}$
and player $j$ having type $u_{i}>\sfrac{1}{2}$. We refer to $\alpha^{\sigma}=\left(\alpha_{1}^{\sigma},\alpha_{2}^{\sigma}\right)$
as the \emph{left-tendency profile} of a strategy profile $\sigma$
that is coordinated and mutual-preference consistent and has binary
communication. It is simple to see that the set of strategies satisfying
the above three properties (coordination, mutual-preference consistency,
and binary communication) is essentially two-dimensional. The reason for this is that the
left-tendency profile $\alpha^{\sigma}=\left(\alpha_{1}^{\sigma},\alpha_{2}^{\sigma}\right)$
of such a strategy profile $\sigma$ describes all payoff-relevant
aspects. Two such strategy profiles $\sigma$ and $\sigma'$ with
the same left-tendency profile (i.e., with $\alpha^{\sigma}=\alpha^{\sigma'}$)
can only differ in the way in which the players implement the joint
lottery when they have different preferred outcomes. These implementation
differences are, however, not payoff-relevant, as the probability of the joint
lottery inducing the players to play $L$ remains the same.

The one-dimensional set of strategies that satisfy the three key properties in the baseline model is the subset of strategy profiles in the current setups for which the left-tendency profile is symmetric. In particular, profile $(\sigma_R,\sigma_R)$ (resp., $(\sigma_C,\sigma_C)$, $(\sigma_L,\sigma_L)$) induces left-tendency of $(0,0)$ (resp., $(0.5,0.5)$, $(1,1)$). Strategy profiles with different levels of left-tendency correspond to strategy profiles that treat the two players differently. In particular, the tendency profile $(1,0)$ (resp., $(0,1)$) correspond to a strategy profile that satisfies the above three properties in which whenever the players have different preferred outcomes they coordinate on playing the action preferred by player 1 (resp., player 2).

{\textbf{Adaptation of communication-proofness} Given a strategy
profile of the game $\langle\Gamma,M\rangle$ we denote the induced
``renegotiation'' game after a positive probability message pair
$m_{1},m_{2}\in M$ is sent by $\langle\Gamma(F_{m_{1}},F_{m_{2}}),\tilde{M}\rangle$.
For a strategy profile $\sigma'$ of such a renegotiation game $\langle\Gamma(G_{1},G_{2}),\tilde{M}\rangle$,
define the \emph{post-communication} expected payoffs for a player
$i$ of type $u$ by 
\[
\pi_{u}^{i}\left(\sigma'|G_{2}\right)=\mathbb{E}_{v\sim G_{2}}\left[\pi_{u,v}^{i}\left(\sigma'\right)\right]\equiv\int_{v=0}^{1}\pi_{u,v}^{i}\left(\sigma'\right)g_{2}\left(v\right)dv.
\]
Our definitions of weak and strong communication proofness remain the same in the setup with adding index $i$ to denote the player (i.e., $\pi_{u}^{i}(\cdot|G_{2})$ instead of $\pi_{u}(\cdot,\cdot|G_{2}),$ and with the adapted notation of $\sigma=(\sigma_1,\sigma_2)$ denoting a strategy profile.}

\textbf{Adapted Results \,} Our main result remains the same in the
setup of asymmetric coordination games. The proof, which is analogous
to the proof of Theorem \ref{thm:uniqueness}, is omitted for brevity. 
\begin{thm}
\label{Thm:n-players-1}Let $\sigma$ be a strategy
profile of $\langle\Gamma\left(F_{1},F_{2}\right),M\rangle$. The following statements are equivalent:
\begin{enumerate}
\item $\sigma$ is mutual-preference consistent, coordinated,
and has binary communication. 
\item $\sigma$ is a strongly communication-proof equilibrium
strategy.
\item $\sigma$ is a weakly communication-proof equilibrium
strategy. 
\end{enumerate}
\end{thm}
Proposition \ref{prop:Pareto-optimal}--\ref{prop-maximizing-coordinated}
and Corollary \ref{cor:strictly-improving-no-communication} can be adapted to the present setup analogously. It is straightforward to see that the asymmetric equilibrium
with the left-tendency profile $\left(1,0\right)$ (resp., $\left(0,1\right)$)
that always coordinates on the action preferred by Player 1 (resp.,
Player 2) is ex-ante Pareto efficient. This is in contrast to the symmetric
case, in which sometimes none of the symmetric communication-proof
equilibria are ex-ante Pareto efficient.

\section{Infrequent Non-Coordination Preferences\label{sec:Extreme-Types-with}}

{In this section, we show that an essentially unique strategy that satisfies the three key properties is strongly communication-proof in a setup in which a minority of the types have non-coordination preferences, that is, for which one of the actions is dominant. The distribution of these non-coordination types determines the unique level of left tendency that satisfies communication proofness.}

{Let $a<0$ and $b>1$. We extend the set of types to be the interval
$\left[a,b\right]$ (instead of being $[0,1]$ in the baseline model). Observe that action $L$ ($R$) is a dominant
action for any type $u<0$ ($u>1$) as coordinating on $R$ ($L$)
yields to such a type a negative payoff of $u<0$ ($1-u<0$). We call
types with a dominant action (i.e., $u<0$ or $u>1$) \emph{extreme},
and types that do not have a strictly dominant action (i.e., $u\in\left[0,1\right]$)
\emph{moderate}. We assume that the cumulative distribution of types
$F$ is continuous (atomless) and has full support in the interval
$\left[a,b\right]$.}

{We further assume that the extreme types are a minority both among
the agents who prefer action $R$ and among the agents who prefer
$L$, i.e., $F\left(0\right)<\sfrac{1}{2}F\left(\sfrac{1}{2}\right)\mbox{ and }1-F\left(1\right)<\sfrac{1}{2}\left(1-F\left(\sfrac{1}{2}\right)\right)$.
Next, we adapt the definitions of coordination and binary communication
to the current setup. The original definition of coordination is too
strong in the current setup, as, clearly, when extreme types with
different dominant actions meet they must miscoordinate. Thus, we
present a milder notion. A strategy is \emph{weakly coordinated} if
whenever two moderate types meet they never miscoordinate. Note that
the definition does not impose any restriction on what happens when
an extreme type meets a moderate type.}

The original definition of binariness is too weak in the current setup.
This is because coordinated strategies must allow for some miscoordination
between extreme types. This implies that an agent cares not only
about the average probability of the opponent playing left (i.e.,
$\beta^{\sigma}\left(m\right)$), but also about the total probability
of miscoordination. Thus, we strengthen binariness by requiring that
there exist two distributions of messages, which are used by all types
below $\sfrac{1}{2}$ and all types above $\sfrac{1}{2}$, respectively.
Formally, a strategy $\sigma=\left(\mu,\xi\right)$ has \emph{strongly
binary communication} if $\mu\left(u\right)=\mu\left(u'\right)$ if
either $u,u'\leq\sfrac{1}{2}$ or $u,u'>\sfrac{1}{2}$. It is easy
to see that the strategies $\sigma_{L},\sigma_{R},\sigma_{C}$ defined
in Section \ref{sec:Equilibrium-Strategies} all satisfy strongly
binary communication. Moreover, one can show, for any $\alpha\in\left[0,1\right]$,
that if there exists a strategy $\sigma$ that is coordinated, mutual-preference
consistent, and has binary communication with left tendency $\alpha$,
then there also exists strategy $\tilde{\sigma}$ with the same properties
that has strongly binary communication.

{Our next result shows that there exists, essentially, a unique communication-proof
equilibrium strategy that is coordinated, mutual-preference consistent,
and has strongly binary communication. 
\begin{prop}
\label{prop:uniquerpcoord} In a coordination game with communication
and with dominant action types, a strategy $\sigma$ that is coordinated,
mutual-preference consistent, and has strongly binary communication
is a {strongly} communication-proof equilibrium strategy
if and only if it has a left tendency of $\alpha=\frac{F\left(0\right)}{F\left(0\right)+\left(1-F\left(1\right)\right)}$.
\end{prop}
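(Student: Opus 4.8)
The plan is to adapt the two directions of Theorem~\ref{thm:uniqueness} to the setup with extreme types, carefully tracking where the presence of dominant-action types forces the left tendency to take the specific value $\alpha=\frac{F(0)}{F(0)+(1-F(1))}$. First I would establish the ``if'' direction: suppose $\sigma$ is coordinated, mutual-preference consistent, has strongly binary communication, and has left tendency $\alpha$ equal to the stated fraction. I must check (a) that $\sigma$ is a Bayesian Nash equilibrium, and (b) that it is not CP trumped after any message pair. For (a), the second-stage incentives are essentially as in Theorem~\ref{thm:uniqueness}: after any pair of messages the induced game has updated beliefs with ordinal preferences common knowledge (mutual-preference consistency plus strongly binary communication pins down the conditional type distributions to be either all types $\le\sfrac12$ or all types $>\sfrac12$ for each message), so playing the prescribed coordinated action is a best reply. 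The message-stage incentive is the crux: an agent of type $u$ cares about the probability $\beta^\sigma(m)$ that the opponent plays $L$ and, because extreme types with the opposite preference will miscoordinate with her, also about the probability of miscoordination. With strongly binary communication there are only two message-distributions in play, so the only question is whether a type $u\le\sfrac12$ strictly prefers the ``left'' distribution and $u>\sfrac12$ the ``right'' one. I would compute the expected payoff of a type-$u$ agent (for, say, $u\le 0$, for $0<u\le\sfrac12$, and for $u>\sfrac12$) under each of the two message-distributions, using that a type $u\le\sfrac12$ who sends the left message coordinates on $L$ with probability $F(\sfrac12)+(1-F(\sfrac12))\alpha$ and on $R$ with probability $(1-F(\sfrac12))(1-\alpha)$, and then miscoordinates (payoff $0$) on the remaining mass --- and symmetrically for the right message. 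Comparing these, a type $u\le\sfrac12$ prefers the left message iff the coefficient structure works out in its favor, and the key point is that extreme leftists ($u<0$) must find the left message a \emph{dominant} choice --- which, since $L$ is their dominant action, holds automatically --- and the binding constraint is the moderate leftist near $u=\sfrac12$ (and symmetrically near $u=\sfrac12^+$ among rightists). The value $\alpha=\frac{F(0)}{F(0)+(1-F(1))}$ is exactly the one that makes the marginal moderate type on each side indifferent/weakly-prefer the ``correct'' message while respecting the equilibrium constraint imposed by the extreme types.

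For the not-CP-trumped part of the ``if'' direction, I would argue exactly as in the ``$1\Rightarrow2$'' part of Theorem~\ref{thm:uniqueness}: after any message pair the conditional game has common-knowledge ordinal preferences, so either both players get their first-best (when preferences agree) and no Pareto improvement is available, or exactly one player gets her first-best outcome, and improving the other player's payoff would require the first to abandon her first-best, so no CP-trumping strategy profile exists. The presence of extreme types does not change this logic because strongly binary communication still collapses the second-stage information to ordinal preferences.

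The ``only if'' direction is the main obstacle. I would follow the three-lemma structure of the ``$3\Rightarrow1$'' proof, but I must additionally pin down $\alpha$. The analogues of Lemmas~\ref{lemma:nomiscoordination}, \ref{lemma:binary}, \ref{lemma:mpc} give that a strongly communication-proof equilibrium in this class is weakly coordinated, mutual-preference consistent, and (by hypothesis here) has strongly binary communication, hence is described by some left tendency $\alpha\in[0,1]$; the subtlety is that weak coordination rather than full coordination is what survives, because extreme types of opposite preferences necessarily miscoordinate. The new work is showing $\alpha$ must equal $\frac{F(0)}{F(0)+(1-F(1))}$. Here is the intended mechanism: if $\alpha$ is too large (more than the stated value), then consider the induced game after the pair of messages indicating ``left'' and ``right''; the extreme leftists get their dominant outcome regardless, but the moderate rightists are being asked to coordinate on $L$ too often, and one can exhibit a CP-trumping equilibrium (a strategy with a slightly smaller left tendency, or a reshuffling of the fallback lottery) that weakly improves all moderate rightists and all extreme types and strictly improves some --- using that the extreme types' payoffs are unaffected because their actions are dominant and already optimal. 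Symmetrically, $\alpha$ too small is trumped by a strategy favoring $L$ more. The only value immune to both perturbations is the one balancing the mass $F(0)$ of extreme leftists against the mass $1-F(1)$ of extreme rightists, because a CP-trumping perturbation must not hurt \emph{any} type including the extreme ones, and the extreme types' indifference-free (dominant) behavior means the only slack in the Pareto comparison comes from the moderate types, whose preferred direction of perturbation flips precisely at $\alpha=\frac{F(0)}{F(0)+(1-F(1))}$. I expect the delicate step to be verifying that at this exact $\alpha$ no CP-trumping profile exists at all (not merely that small perturbations fail), which requires the same ``one player already at her first-best'' argument as above, now applied conditional on the opponent being a moderate type and noting that when the opponent may be extreme the miscoordination is unavoidable in \emph{any} equilibrium of the induced game, so it cannot be the source of a Pareto improvement either. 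I would also record that the qualifier ``essentially unique'' refers to the usual freedom in how the fallback joint lottery with the prescribed $\alpha$ is implemented, which is payoff-irrelevant.
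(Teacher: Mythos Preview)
Your ``if'' direction is broadly right and matches the paper: once $\alpha$ equals the stated value, the message-stage indifference at $u=\sfrac12$ holds, the action function is a best reply (you should note that the minority assumptions $F(0)<\sfrac12 F(\sfrac12)$ and $1-F(1)<\sfrac12(1-F(\sfrac12))$ are needed precisely here, to ensure a moderate type facing an opponent of the opposite ordinal preference still finds it optimal to follow the lottery outcome despite the possibility that the opponent is extreme and will defect), and the not-CP-trumped argument is exactly as you sketch.

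Your ``only if'' direction, however, takes the wrong route. You propose to show that for $\alpha$ different from the stated value the strategy is CP trumped after the mixed message pair, by perturbing the left tendency. This cannot work, and in fact contradicts your own ``if'' argument: after any \emph{specific} message pair $(m,m')$ the action profile is already determined (either both play $L$ or both play $R$), and one side is at her first-best outcome among moderates, so no CP-trumping profile exists---regardless of $\alpha$. The left tendency $\alpha$ is a property of the \emph{ex-ante} message distribution, not of the post-communication continuation, so it cannot be ``perturbed'' at the renegotiation stage.

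The paper's argument is much simpler and is purely an \emph{equilibrium} constraint, not a communication-proofness one. Compute the expected payoff to type $u=\sfrac12$ from sending a left versus a right message (accounting for miscoordination against extreme opponents of the opposite preference). These two payoffs are equal if and only if $\alpha(1-F(1))=(1-\alpha)F(0)$, i.e., $\alpha=F(0)/(F(0)+1-F(1))$. If $\alpha$ differs from this value, type $\sfrac12$ strictly prefers one message; by continuity a positive measure of types just below or just above $\sfrac12$ then strictly prefer the ``wrong'' message, so $\sigma$ fails to be a Bayesian Nash equilibrium and a fortiori fails to be communication-proof. That is the entire ``only if'' proof.
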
}

{The formal proof is presented in Appendix }\ref{app:domtypes}. The key argument is that, for any equilibrium that satisfies the three key properties, the agent of type $u=\sfrac{1}{2}$ has to be indifferent between signalling $u\geq\sfrac{1}{2}$ and signalling $u\leq\sfrac{1}{2}$. For this to be true, we must have a strategy that counterbalances the differences between the frequency of extreme $L$-dominant types ($F(0)>0$) and the frequency  of extreme $R$-dominant types ($1-F(1)>0$). To see this, consider an adaptation of $\sigma_{L}$ to this setting by having extreme types follow their dominant actions  regardless of the sent messages. Note that $\sigma_{L}$ is no longer an equilibrium with extreme types.

Observe that a moderate type sending message
$m_{R}$ leads to coordination with probability one (sometimes on
$R$ and sometimes on $L$ depending on the opponent's message). A moderate type sending message $m_{L}$, in contrast, leads to coordination
(on $L$ only) with probability $F\left(1\right)<1$. This implies
that agents of type $u<\sfrac{1}{2}$ sufficiently close to $\sfrac{1}{2}$
strictly prefer sending message $m_{R}$ to sending message $m_{L}$
(as the former induces a higher probability of coordination). In turn, this implies that $\sigma_{L}$ is not an equilibrium. By contrast, the proof shows that for a strategy that satisfies the above three properties and that has a left tendency of $\alpha$, the probability of miscoordination is the same for all messages.

Appendix \ref{app:domtypes} also shows that a left tendency $\alpha$
communication-proof strategy that is coordinated and has strongly
binary communication can be implemented whenever $\alpha$ is a rational
number and the set of messages $M$ is sufficiently large. Irrational
$\alpha$-s can be approximately implemented by $\epsilon$-equilibria.

Observe that in the symmetric case ($F\left(0\right)=1-F\left(1\right)$),
strategy $\sigma_{C}$ is the essentially unique strongly communication-proof strategy with the above two properties. Further observe that in the
asymmetric case, the moderate types gain if the extreme types with
the same preferred outcome are more frequent than the extreme types
of the opposite preferred outcome. Specifically, suppose there are more
extreme ``leftists'' than extreme ``rightists'' (i.e., $F\left(0\right)>1-F\left(1\right)$). Then the essentially unique strongly communication-proof strategy with properties of coordination and strongly binary communication induces higher probability to coordinate on action $L$ (rather than on action $R$) whenever two moderate agents with different preferred outcomes meet.

\section{{Conclusion and} Discussion\label{sec:Related-Literature}}

{In this paper we adapt \citeauthor{blume1995communication}'s (\citeyear{blume1995communication}) notion of communication-proofness from sender-receiver games to games in which all players have incomplete information, all can communicate, and all can choose actions.\footnote{
Our notion is also related to notions of renegotiation-proofness that have been applied to repeated games (e.g., \citealp{farrell1989renegotiation,benoit1993renegotiation}), and to mechanisms and contracts in the presence of asymmetric information (e.g., \citealp{forges1994posterior,neeman2013ex,maestri2017dynamic,strulovici2017contract}).} We argue that this refinement seems appropriate both for sophisticated strategic agents who communicate until reaching a mutually beneficial solution, and for situations in which the agents' behavior is governed by a long-run evolutionary process.  We show that the refinement of communication-proofness selects a small subset of equilibria in two-action coordination games. In all of these equilibria the players never miscoordinate, they play their jointly preferred outcome whenever there is one, and they communicate only what is their preferred outcome without revealing how strongly they prefer this outcome over other coordinated outcomes. The behavior induced by communication-proof equilibria fits the stylized empirical facts, and has some desirable efficiency properties.}

Starting with the secret handshake argument provided in \citet{robson1990efficiency} (see also the earlier related notion of ``green beard effect'' in \citealp{hamilton1964genetical,dawkins1976selfish}), there is a sizable literature on the evolutionary analysis of costless pre-play communication before players engage in a complete information coordination game (as surveyed in Footnote \ref{footnote-evol-papers}). Suppose that a complete information coordination game has two Pareto-rankable equilibria. Then the Pareto-inferior equilibrium is not evolutionarily stable as it can be invaded by mutants who use a previously unused message as a secret handshake: if their opponent does not use the same handshake they simply play the Pareto-inferior equilibrium (as do all incumbents), but if their opponent also uses the secret handshake both sides play the Pareto-superior equilibrium. Our notion of communication-proofness extends the secret handshake argument to games with incomplete information by requiring that a communication-proof equilibrium should not to be Pareto-dominated by another equilibrium after any observed message profile.\footnote{Another closely related solution concept is \citeauthor{swinkels1992evolutionary}'s \citeyearpar{swinkels1992evolutionary} notion of robustness to equilibrium entrants. In a recent paper, \citet{newton2017shared} provides an evolutionary foundation for players developing the ability to renegotiate into a Pareto-better outcome (``collaboration'' in the terminology of \citeauthor{newton2017shared}).}

One argument that can be presented against the notion of communication-proofness is that non-communication-proof equilibria can be sustained by the following off-the-equilibrium path behavior: if any player proposes a joint deviation, then the equilibrium specifies that the opponent rejects the offer and that both players shift their behavior to playing an equilibrium that is bad for the proposer. This kind of off-the-equilibrium path proposer punishment would indeed deter players from suggesting joint deviations.\footnote{These kinds of proposer-punishing mechanisms are
explored in solution concepts of renegotiation-proofness that explicitly specify a structured renegotiation protocol, such as \citet{busch1995perfect}, \citet{santos2000alternating}, and \cite{Safronov-Strulovici}.}

Recall that we give the notion of communication-proofness two different interpretations: either we think of communication-proof equilibria as the plausible final outcomes of the deliberations of two rational and communicating agents, or we think of these equilibria as the stable outcomes of a long-run learning or evolutionary process.

Under each of these two interpretations one can counter the above proposer-punishing argument. Under the two rational deliberating agents interpretation, one can argue that agents may just have to be careful and subtle in the way they phrase their proposal. Suppose both agents face a situation (after initial messages are sent) in which they are about to play a Pareto-inferior action profile (relative to some possible available equilibrium in the induced game). They should then both realize that their proposer-punishing scheme, which prevents them from renegotiation, is not in their joint best interest and be able to overcome this.

Another related literature deals with stable equilibria in coordination games with private values, but without pre-play communication. \citet{sandholm2007evolution} (extending earlier results of \citealp{fudenberg1993learning,ellison2000learning}) shows that mixed Nash equilibria of the game with complete information can be purified in the sense of \citet{harsanyi1973games} in an evolutionarily stable way.\footnote{See also \citet{neary2017heterogeneity} who study coordination games without communication played on a graph, and provide sufficient conditions for heterogeneous equilibria with miscoordination to be stable.} Finally, two related papers analyze stag-hunt games with private values. \citet{baliga2004arms} show that introducing pre-play communication induces a new equilibrium in which the Pareto-dominant action profile is played with high probability. \citet{jelnov-et-al-2018} show that in some cases a small probability of another interaction can substantially affect the set of equilibrium outcomes in stag-hunt games with private values.


\appendix

\section{Formal Proofs\label{sec:Proofs}}

\subsection{Undominated Action Strategies\label{subsec:Any-Best-Reply-Strategy-is-cutoff}}

In what follows we show that our restriction to threshold action functions is without loss of generality, in the sense that each generalized strategy is dominated by a threshold strategy.

Let $\Gamma(F,G)$ be a coordination game without communication (possibly played after a pair of messages is observed in the original game $\langle\Gamma,M\rangle$). A generalized strategy is a measurable function $\eta:U\to\Delta\left(\{L,R\}\right)$ that describes a mixed action as a function of the player's type. A generalized strategy in $\Gamma(F,G)$ corresponds to a generalized action function $\xi:U\times M\times M\to\triangle\{L,R\}$, given a specific pair of observed messages $(m,m')$, i.e., $\eta\left(u\right)\equiv\xi\left(u,m,m'\right)$.

A pair of generalized strategies $\eta,\tilde{\eta}$ are almost surely realization equivalent (abbr., \emph{equivalent}), denoted by $\eta\approx\tilde{\eta}$, if they induce the same behavior with probability one, i.e., if
\[
\mathbb{E}_{u\sim F}\left[\eta\left(u\right)\neq\tilde{\eta}\left(u\right)\right]\equiv\int_{u\in U}f\left(u\right)\boldsymbol{1}_{\left\{ \eta\left(u\right)\neq\tilde{\eta}\left(u\right)\right\} }du=0.
\]
It is immediate that two equivalent generalized strategies always induce the same (ex-ante) payoff, i.e., that $\pi\left(\eta,\eta'\right)=\pi\left(\tilde{\eta},\eta'\right)$ for each generalized strategy $\eta'$.

A generalized strategy is a \emph{cutoff strategy} if there exists a type $x\in\left[0,1\right]$ such that $\eta(u)=L$ for each $u<x$ and $\eta(u)=R$ for each $u>x$. A generalized strategy $\eta$ is \emph{strictly dominated} by generalized strategy $\tilde{\eta}$ if $\pi\left(\eta,\eta'\right)<\pi\left(\tilde{\eta},\eta'\right)$ for any opponent's generalized strategy $\eta'$.

The following result shows that any generalized strategy is either equivalent to a cutoff strategy, or it is strictly dominated by a cutoff strategy.
\begin{lem}
\label{lem:dominated-by-cutoff-straegy} Let $\eta$ be a generalized strategy. Then there exists a cutoff strategy $\tilde{\eta}$, such that either $\eta$ is equivalent to $\tilde{\eta}$, or $\eta$ is strictly dominated by $\tilde{\eta}$.
\end{lem}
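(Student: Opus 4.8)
The plan is to compute, for a fixed but arbitrary opponent strategy $\eta'$, the interim gain to a type-$u$ player from choosing $L$ rather than $R$, and to show this gain is a (weakly) decreasing function of $u$ that crosses zero at most once. Concretely, fix the opponent's generalized strategy $\eta'$ in $\Gamma(F,G)$ and let $p = \int_{v\in U} g(v)\,\eta'(v)[L]\,dv$ be the probability the opponent plays $L$. A type-$u$ player who plays $L$ gets $(1-u)p$, and a type-$u$ player who plays $R$ gets $u(1-p)$; so the net benefit of $L$ over $R$ is
\[
\delta_p(u) = (1-u)p - u(1-p) = p - u.
\]
This is strictly decreasing in $u$ and does not depend on $\eta'$ except through the scalar $p\in[0,1]$. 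Hence the set of types for which $L$ is a strict best reply is exactly $\{u : u < p\}$, the set for which $R$ is a strict best reply is $\{u: u>p\}$, and the type $u=p$ is indifferent.

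Next I would use this to define the candidate cutoff. Given $\eta$, let $A_L = \{u \in U : \eta(u)[L] > 0\}$ and $A_R = \{u \in U : \eta(u)[L] < 1\}$ (types that put positive weight on $L$, resp.\ on $R$). Set $x = \inf A_R$ if $A_R \neq \emptyset$ and $x = 1$ otherwise (or symmetrically $x = \sup A_L$ — one should check these agree up to a null set when $\eta$ is to be equivalent to a cutoff strategy), and let $\tilde\eta$ be the cutoff strategy with cutoff $x$. Now distinguish two cases. If, up to an $F$-null set, $\eta(u) = L$ for $u < x$ and $\eta(u) = R$ for $u > x$, then $\eta \approx \tilde\eta$ and we are done. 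Otherwise there is a positive-$F$-measure set $S$ of types on which $\eta$ "makes a mistake": either $u < x$ but $\eta(u)$ puts positive weight on $R$, or $u > x$ but $\eta(u)$ puts positive weight on $L$, or there is positive mass of indifferent-looking randomization that is nonetheless strictly suboptimal. On $S$, replacing $\eta(u)$ by the cutoff prescription strictly increases $\delta_p(u)\cdot(\text{weight shifted})$ for \emph{every} value of $p$ — because $\delta_p(u) = p-u$ has the same sign as "$u$ below the cutoff region," uniformly in $p$ — while on $U\setminus S$ the two strategies agree. Integrating the (strictly positive on $S$, zero elsewhere) pointwise payoff difference against $f$ shows $\pi(\eta,\eta') < \pi(\tilde\eta,\eta')$ for every opponent strategy $\eta'$, i.e.\ $\tilde\eta$ strictly dominates $\eta$.

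The main obstacle is bookkeeping at the cutoff type and handling indifference correctly: a type exactly at $u=p$ is indifferent against that particular $p$, but the definition of strict domination quantifies over \emph{all} opponent strategies $\eta'$, hence over all $p\in[0,1]$, and a single type cannot be indifferent for all $p$. One must therefore be careful that the "mistake set" $S$ is defined so that every $u\in S$ is strictly suboptimal for a \emph{fixed} sign pattern (its type lies strictly on one side of the chosen cutoff $x$), so that the gain from switching is bounded below by a positive quantity that survives integration, uniformly in $p$. The atomlessness of $F$ is what guarantees that the single indifferent type contributes nothing and that the cutoff $x$ can be chosen cleanly; once $S$ is set up this way, the rest is the routine integration sketched above. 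I would also remark that the statement is a dichotomy, not an "or both," only in the sense that if $\eta$ is not equivalent to \emph{its own} canonical cutoff then it is strictly dominated — an $\eta$ could of course be strictly dominated by many cutoff strategies, but exhibiting one suffices.
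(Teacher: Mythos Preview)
Your argument contains a genuine gap at the central step. The claim that ``$\delta_p(u)=p-u$ has the same sign as `$u$ below the cutoff region,' uniformly in $p$'' is false: the sign of $p-u$ depends on whether $u<p$ or $u>p$, and $p$ ranges over all of $[0,1]$ as $\eta'$ varies. Your choice of cutoff $x=\inf A_R$ therefore does not yield a dominating strategy. Concretely, take $\eta$ that plays $R$ on $[0,0.2]$ and $L$ on $[0.2,1]$. Then $A_R=[0,0.2]$, so $x=0$ and your $\tilde\eta$ plays $R$ for all types. Against an opponent with $p=1$ (always $L$), $\tilde\eta$ earns $0$ while $\eta$ earns $\int_{0.2}^1(1-u)f(u)\,du>0$; hence $\tilde\eta$ does \emph{not} dominate $\eta$. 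More generally, with your cutoff the mistake set $S$ lies entirely in $(x,1]$, and switching those types from $L$ to $R$ gains $u-p$ per unit weight---which is negative whenever $p>u$.

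The missing idea, which is exactly what the paper does, is to choose $x$ so that $F(x)=\mathbb{E}_{u\sim F}[\eta_u(L)]$, i.e.\ so that $\tilde\eta$ has the \emph{same average probability of playing $L$} as $\eta$. Then
\[
\pi(\tilde\eta,\eta')-\pi(\eta,\eta')=\int f(u)\bigl(\tilde\eta_u(L)-\eta_u(L)\bigr)(p-u)\,du
=\int f(u)\bigl(\tilde\eta_u(L)-\eta_u(L)\bigr)(x-u)\,du,
\]
since $\int f(u)(\tilde\eta_u(L)-\eta_u(L))\,du=0$ lets you replace $p$ by $x$ for free. The second integrand is nonnegative for every $u$ (both factors are nonnegative for $u<x$ and nonpositive for $u>x$) and strictly positive on a set of positive measure whenever $\eta\not\approx\tilde\eta$. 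The point is that this difference is \emph{independent of $p$}, which is what delivers strict domination uniformly over all opponents; your cutoff does not have this property.
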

\begin{proof}
If $\mathbb{E}_{u\sim F}\left[\eta_{u}(L)\right]=1$ (resp., $\mathbb{E}_{u\sim F}\left[\eta_{u}(L)\right]=0$), then $\eta$ is equivalent to the cutoff strategy of always playing $L$ (resp., $R$). Thus, suppose that $\mathbb{E}_{u\sim F}\left[\eta_{u}(L)\right]\in\left(0,1\right)$. Let $x\in\left(0,1\right)$ be such that $F(x)=\mathbb{E}_{u\sim F}\left[\eta_{u}(L)\right]=\int_{u}\eta_{u}(L)f(u)du$. Let $\tilde{\eta}$ then be the cutoff strategy with cutoff $x$, i.e., $\tilde{\eta}_{u}(L)=1$ if $u\leq x$ and $\tilde{\eta}_{u}(L)=0$ if $u > x$. Assume that $\eta$ and $\tilde{\eta}$ are not equivalent, i.e., $\eta\not\approx\tilde{\eta}$. Let $\eta'$ be an arbitrary generalized strategy of the opponent. By construction, strategies $\eta$ and $\tilde{\eta}$ induce the same average probability of choosing $L$. Strategies $\tilde{\eta}$ and $\eta$ differ in that $\tilde{\eta}$ induces lower types to choose $L$ with higher probability, and higher types to choose $L$ with lower probability, i.e., $\eta_{u}(L)\leq\tilde{\eta}_{u}(L)$ for any type $u\leq x$ and $\eta_{u}(L)\geq\tilde{\eta}_{u}(L)$ for any type $u>x$. Since $\eta\not\approx\tilde{\eta}$ and $\mathbb{E}_{u\sim F}\left[\eta_{u}(L)\right]\in\left(0,1\right)$, it follows that the inequalities are strict for a positive measure of types, i.e.,
\[
0<\int_{u<x}f\left(u\right)\boldsymbol{1}_{\left\{ \eta\left(u\right)<\tilde{\eta}\left(u\right)\right\} }du\mbox{ and }0<\int_{u>x}f\left(u\right)\boldsymbol{1}_{\left\{ \eta\left(u\right)>\tilde{\eta}\left(u\right)\right\} }du.
\]
The fact that lower types 
always gain more (less) from choosing $L$ (\emph{R}) relative to higher types, with a strict inequality unless the opponent always plays $R$ ($L$), implies that $\pi\left(\eta,\eta'\right)<\pi\left(\tilde{\eta},\eta'\right)$.
\end{proof}

\subsection{Proof of Theorem \ref{thm:uniqueness}\label{subsec:uniqueness}}

We first prove the ``$1\Rightarrow2$'' part. Suppose that $\sigma=(\mu,\xi)\in\Sigma$ is mutual-preference consistent, coordinated, and has binary communication. As $\sigma$ is mutual-preference consistent it must satisfy $\mbox{supp}(F_{m})\subseteq[0,\sfrac{1}{2}]$ or $\mbox{supp}(F_{m})\subseteq[\sfrac{1}{2},1]$ for any message $m\in\mbox{supp}(\bar{\mu})$. Consider any $m,m'\in\mbox{supp}(\bar{\mu})$. There are three cases to consider. Suppose first that $\mbox{supp}(F_{m}),\mbox{supp}(F_{m'})\subseteq[0,\sfrac{1}{2}]$. Then as $\sigma$ is mutual-preference consistent we have that $\xi(m,m')=\xi(m',m)=L$. Thus $\xi$ describes best-reply behavior after this message pair. Moreover this behavior is the best possible outcome for any type in $[0,\sfrac{1}{2}]$ and thus for any type in $\mbox{supp}(F_{m})$ and $\mbox{supp}(F_{m'})$. The second case of $\mbox{supp}(F_{m}),\mbox{supp}(F_{m'})\subseteq[\sfrac{1}{2},1]$ is analogous.

Suppose, finally, that, w.l.o.g., $\mbox{supp}(F_{m})\subseteq[0,\sfrac{1}{2}]$ and $\mbox{supp}(F_{m'})\subseteq[\sfrac{1}{2},1]$. As $\sigma$ is \emph{coordinated} we have that $\xi(m,m')=\xi(m',m)=L$ or $\xi(m,m')=\xi(m',m)=R$. Action function $\xi$, therefore, again describes best-reply behavior. Moreover, one player always obtains her most preferred outcome. In order for a new strategy profile to improve the opponent's outcome, this new profile must require the former player to deviate from her most preferred outcome. Thus, no equilibrium $\sigma'$ in the game $\langle\Gamma(F_{m},F_{m'}),M\rangle$ Pareto dominates $\sigma$ after this message pair. This shows that action function $\xi$ is a best response to $\mu$ and to itself given $\mu$ and that, moreover, it cannot be {CP trumped}. It remains to show that the message function $\mu$ is optimal when the opponent chooses $\sigma=(\mu,\xi)$.

Consider type $u\in[0,\sfrac{1}{2}]$ and consider this type's choice of message. As $\sigma$ has binary communication and is coordinated, different messages $m\in M$ can only trigger different probabilities of coordinating on $L$ with a highest likelihood of such coordination for any message $m\in\mbox{supp}(\mu_{u})$. Therefore, type $u$ is indifferent between any message $m\in\mbox{supp}(\mu_{u})$ and weakly prefers sending any message $m\in\mbox{supp}(\mu_{u})$ to sending any message $m'\not\in\mbox{supp}(\mu_{u})$. An analogous statement holds for types $u\in[\sfrac{1}{2},1]$. This concludes the proof of the ``$1\Rightarrow2$'' part of
the theorem.

We prove the ``$3\Rightarrow1$'' part in three lemmas, one for each of the three properties.
\begin{lem}
\label{lemma:nomiscoordination} Every {weakly }communication-proof equilibrium strategy $\sigma=\left(\mu,\xi\right)$ is coordinated.
\end{lem}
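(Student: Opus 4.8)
The plan is to argue by contradiction. Suppose $\sigma=(\mu,\xi)$ is a weakly communication-proof equilibrium that is \emph{not} coordinated, so there is a message pair $m,m'\in\mbox{supp}(\bar\mu)$ for which $\xi(m,m')=\xi(m',m)\in\{L,R\}$ fails. Let $p$ be the probability that a player who sent $m$ plays $L$ after the pair $(m,m')$, and $q$ the probability that a player who sent $m'$ plays $L$. Since $\sigma$ is an equilibrium, best reply in the induced game $\langle\Gamma(F_{m},F_{m'}),M\rangle$ makes the player who sent $m$ play $L$ exactly on types $u\le q$ and the player who sent $m'$ play $L$ exactly on types $v\le p$; hence $p=F_{m}(q)$ and $q=F_{m'}(p)$, and ``not coordinated after $(m,m')$'' is equivalent to $(p,q)\notin\{(0,0),(1,1)\}$, i.e.\ to a strictly positive probability of miscoordination. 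The engine of the argument is that, by the ``$1\Rightarrow2$'' direction of Theorem~\ref{thm:uniqueness} applied to the (possibly asymmetric) renegotiation game $\langle\Gamma(F_{m},F_{m'}),M\rangle$ --- whose proof uses no property of the underlying type distribution and extends to asymmetric games --- each of $(\sigma_{L},\sigma_{L})$, $(\sigma_{R},\sigma_{R})$ and $(\sigma_{C},\sigma_{C})$ is, as a strategy in that game, not CP trumped by any profile with respect to $(F_{m},F_{m'})$. So it suffices to exhibit, after the pair $(m,m')$, one of these three profiles that CP trumps $(\sigma,\sigma)$ with respect to $(F,F)$ and $(m,m')$: that would contradict weak communication-proofness.

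I would split into three cases according to the positions of the cutoffs $q$ and $p$ relative to $\sfrac12$. \emph{Case~1: $q<\sfrac12$ and $p<\sfrac12$.} First note that then $p,q>0$ (if $q=0$ then $p=F_{m}(0)=0$, giving the excluded profile $(0,0)$, and symmetrically). I claim $(\sigma_{R},\sigma_{R})$ CP trumps. Under $\sigma_{R}$ every type $u>\sfrac12$ obtains its preferred action $R$ with certainty, strictly better than the payoff $u(1-q)<u$ it got under $\sigma$; and every type $u\le\sfrac12$ coordinates on $L$ with probability $F_{m'}(\sfrac12)\ge F_{m'}(p)=q$ (using $p<\sfrac12$) --- i.e.\ with at least the $L$-coordination probability it already had under $\sigma$ --- and on $R$ otherwise. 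Writing out the interim payoffs and using $q\le\sfrac12$ (so the threshold type $u=\sfrac12$ is indifferent while every $u<\sfrac12$ strictly prefers a higher probability of the opponent playing $L$) shows that no type in $\mbox{supp}(F_{m})\cup\mbox{supp}(F_{m'})$ is worse off and every type $u>0$ is strictly better off. \emph{Case~2: $q>\sfrac12$ and $p>\sfrac12$} is the mirror image, handled by $(\sigma_{L},\sigma_{L})$.

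\emph{Case~3: $q\le\sfrac12\le p$} (after possibly interchanging the roles of $m$ and $m'$; every situation with $p=\sfrac12$ or $q=\sfrac12$ is subsumed here). Then $p=F_{m}(q)\le F_{m}(\sfrac12)$ and $q=F_{m'}(p)\ge F_{m'}(\sfrac12)$, so under $\sigma$ the two players coordinate with probability $pq+(1-p)(1-q)\le\sfrac12$ (using $q\le\sfrac12\le p$). I claim $(\sigma_{C},\sigma_{C})$ CP trumps: under $\sigma_{C}$ coordination is certain (each player gets its preferred action whenever the opponent shares it, and each action with probability $\sfrac12$ otherwise). Comparing interim payoffs type by type --- over the intervals $u\le q$, $q<u\le\sfrac12$, $u>\sfrac12$ for the player who sent $m$, and analogously for the player who sent $m'$ --- and using the chain $F_{m'}(\sfrac12)\le q\le\sfrac12\le p\le F_{m}(\sfrac12)$, shows every type is weakly better off under $\sigma_{C}$; moreover the set of types $\{u\le q\}$ of the player who sent $m$, which under $\sigma$ receives at most $(1-u)q\le\sfrac12$ whereas $\sigma_{C}$ gives it at least $\sfrac12$, has positive $F_{m}$-measure $p\ge\sfrac12$, so strict improvement occurs on a positive-measure set.

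In each case the exhibited profile CP trumps $(\sigma,\sigma)$ after $(m,m')$ while being --- as a strategy in $\langle\Gamma(F_{m},F_{m'}),M\rangle$ --- mutual-preference consistent, coordinated and of binary communication, hence itself not CP trumped; this contradicts the weak communication-proofness of $\sigma$ and proves the lemma. I expect the main obstacle to be the bookkeeping in Case~3 --- checking the Pareto improvement of $\sigma_{C}$ over all type intervals, with the coordination bound $\le\sfrac12$ as the crucial input --- together with the care needed when a conditional distribution $F_{m}$ or $F_{m'}$ is supported on a strict subinterval, so that the ``cutoffs'' are not literally pinned down by $\sigma$: the way around this is to phrase everything in terms of the $L$-probabilities $p$ and $q$ rather than cutoff values, and to make sure each claimed strict improvement sits on a set of types of positive measure.
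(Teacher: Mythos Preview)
Your proposal is essentially the paper's own proof. The paper sets $x=\xi(m,m')$, $x'=\xi(m',m)$ and splits into the same three cases ($x,x'\le\sfrac12$; $x,x'\ge\sfrac12$; $x\le\sfrac12\le x'$), exhibiting $\sigma_R$, $\sigma_L$, $\sigma_C$ respectively as CP-trumping profiles that are themselves not CP-trumped. Your $(q,p)$ are exactly the paper's $(x,x')$ once the equilibrium condition is used (the cutoff equals the opponent's $L$-probability), so the case split and the computations coincide; the paper is just slightly more explicit in writing out the type-by-type payoff comparisons in Case~3.
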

\begin{proof}
We need to show that for any message pair $m,m'\in\mbox{supp}\left(\bar{\mu}\right)$,
\[
\mbox{ either }\xi(m,m')\geq\sup\left\{ u\mid\mu_{u}(m)>0\right\} \mbox{ or } \xi(m,m')\leq\inf\left\{ u\mid\mu_{u}(m)>0\right\} .
\]
Let $m,m'\in\mbox{supp}\left(\bar{\mu}\right)$ and assume to the contrary that
\[
\inf\left\{ u\mid\mu_{u}(m)>0\right\} <\xi(m,m')<\sup\left\{ u\mid\mu_{u}(m)>0\right\} .
\]
As $\sigma$ is an equilibrium, we have $\inf\left\{ u\mid\mu_{u}(m')>0\right\} <\xi(m',m)<\sup\left\{ u\mid\mu_{u}(m')>0\right\}$ because otherwise the sender of $m'$ would play $L$ with probability one or $R$ with probability one, in which case the best reply of the sender of message $m$ would be to play $L$ (or $R$) regardless of her type.

Let $x=\xi(m,m')$ and $x'=\xi(m',m)$. We now show that the equilibrium $(x,x')$ of the game without coordination $\Gamma\left(F_{m},F_{m'}\right)$ is CP-trumped by either $\sigma_{L}$, $\sigma_{R}$, or $\sigma_{C}$. There are three cases to be considered. Case 1: Suppose that $x,x'\le\sfrac{1}{2}$. We now show that in this case the equilibrium $(x,x')$ is CP-trumped by $\sigma_{R}$. Consider the player who sent message $m$.

Case 1a: Consider a type $u\le x$. Then we have
\[
(1-u)F_{m'}({\textstyle {\sfrac{1}{2}})+u\left(1-F_{m'}({\textstyle {\sfrac{1}{2}})}\right)\ge(1-u)F_{m'}(x'),}
\]
where the left-hand side is type a $u$ agent's payoff under strategy profile $\sigma_{R}$ and the right-hand side the payoff under strategy profile $(x,x')$. The inequality follows from the fact that $u\left(1-F_{m'}(\sfrac{1}{2})\right)\ge0$, and $F_{m'}(\sfrac{1}{2})\ge F_{m'}(x')$ follows from the fact that $F_{m'}$ is nondecreasing (as it is a cumulative distribution function). This inequality is strict for all $u$ except for $u=0$ in the case where $x'=\sfrac{1}{2}$.

Case 1b: Now consider a type $u$ with $x<u\le\sfrac{1}{2}$. Then we have
\[
(1-u)F_{m'}({\textstyle {\sfrac{1}{2}})+u\left(1-F_{m'}({\textstyle {\sfrac{1}{2}})}\right)>u\left(1-F_{m'}(x')\right),}
\]
where the left-hand side is a type $u$ agent's payoff under strategy profile $\sigma_{R}$ and the right-hand side is the payoff under strategy profile $(x,x')$. The inequality follows from the fact that by $u\le\sfrac{1}{2}$ we have that $1-u\ge u$, and therefore $(1-u)F_{m'}(\sfrac{1}{2})+u\left(1-F_{m'}(\sfrac{1}{2})\right)\ge u$.

Case 1c: Finally, consider a type $u>\sfrac{1}{2}$. Then we have $u>u\left(1-F_{m'}(x')\right),$ where the left-hand side is a type $u$ agent's payoff under strategy profile $\sigma_{R}$ and the right-hand side is the payoff under strategy profile $(x,x')$.

The analysis for the player who sent message $m'$ is analogous.

Case 2: Suppose that $x,x'\ge\sfrac{1}{2}$. The analysis is analogous to Case 1 if we replace $\sigma_{R}$ with $\sigma_{L}$.

Case 3: Suppose, w.l.o.g. for the remaining cases, that $x\le\sfrac{1}{2}\le x'$. The equilibrium $(x,x')$ in this case is Pareto-dominated by $\sigma_{C}$. To see this, consider the player who sent message $m$.

Case 3a: Consider a type $u\le x$. Then we have
\[
(1-u)\left[F_{m'}(\sfrac{1}{2})+\sfrac{1}{2}\left(1-F_{m'}(\sfrac{1}{2})\right)\right] + u\sfrac{1}{2}\left(1-F_{m'}(\sfrac{1}{2})\right)>(1-u)F_{m'}(x'),
\]
where the left-hand side is a type $u$ agent's payoff under strategy profile $\sigma_{C}$ and the right-hand side the payoff under strategy profile $(x,x')$. The inequality follows from the fact that we have $F_{m'}(x')=x\leq\sfrac{1}{2}$ due to $(x,x')$ being an equilibrium.

Case 3b: Now consider a type $u$ with $x<u\le\sfrac{1}{2}$. Then we have
\[
(1-u)\left[F_{m'}(\sfrac{1}{2}) + \sfrac{1}{2}\left(1-F_{m'}(\sfrac{1}{2}\right)\right] + u\sfrac{1}{2}\left(1-F_{m'}(\sfrac{1}{2})\right) > u\left(1-F_{m'}(x')\right),
\]
where the left-hand side is a type $u$ agent's payoff under strategy profile $\sigma_{C}$ and the right-hand side the payoff under strategy profile $(x,x')$. The inequality follows from the fact that by $u\le\sfrac{1}{2}$ we have $1-u\ge u$ and thus $(1-u)\left[F_{m'}(\sfrac{1}{2})+\sfrac{1}{2}\left(1-F_{m'}(\sfrac{1}{2})\right)\right] + u\sfrac{1}{2} \left(1-F_{m'}(\sfrac{1}{2})\right)\ge u$.

Case 3c: Finally, consider a type $u>\sfrac{1}{2}$. Then we have
\[
u\left[\left(1-F_{m'}(\sfrac{1}{2})\right) + \sfrac{1}{2} F_{m'}(\sfrac{1}{2}) \right] + (1-u) \sfrac{1}{2}F_{m'}(\sfrac{1}{2}) > u\left(1-F_{m'}(x')\right),
\]
where the left-hand side is a type $u$ agent's payoff under strategy
profile $\sigma_{C}$ and the right-hand side is the payoff under
strategy profile $(x,x')$. The inequality follows from the fact that
we have $F_{m'}(\sfrac{1}{2})>0$ and $F_{m'}(\sfrac{1}{2})\le F_{m'}(x')$.

The analysis for the player who sent message $m'$ is analogous.
\end{proof}
\begin{lem}
\label{lemma:binary} Every {weakly} communication-proof equilibrium strategy $\sigma$ has binary communication.
\end{lem}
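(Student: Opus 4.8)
The plan is to leverage Lemma~\ref{lemma:nomiscoordination}: since $\sigma=(\mu,\xi)$ is weakly communication-proof it is coordinated, so after every used message pair $m,m'\in\mbox{supp}(\bar\mu)$ the players coordinate on one action $\xi(m,m')=\xi(m',m)\in\{L,R\}$; and being weakly communication-proof it is in particular a Bayesian Nash equilibrium. I would first record two consequences of coordination. (i) For a used message $m$, the quantity $\beta^\sigma(m)$ is just the total probability mass of opponent messages after which the coordinated action is $L$, namely $\beta^\sigma(m)=\sum_{m'\in\mbox{supp}(\bar\mu):\,\xi(m',m)=L}\bar\mu(m')$. (ii) Hence any type $u$ that sends $m$ with positive probability obtains, in equilibrium, the interim payoff $\pi_u(\sigma,\sigma)=(1-u)\beta^\sigma(m)+u\bigl(1-\beta^\sigma(m)\bigr)=u+(1-2u)\beta^\sigma(m)$, since after the coordinated action is settled it plays its part of it. The elementary fact driving everything is that $u\mapsto u+(1-2u)\beta$ is strictly increasing in $\beta$ when $u<\sfrac{1}{2}$ and strictly decreasing when $u>\sfrac{1}{2}$.

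Next I would set $\overline\beta^\sigma:=\max_{m'\in\mbox{supp}(\bar\mu)}\beta^\sigma(m')$ and $\underline\beta^\sigma:=\min_{m'\in\mbox{supp}(\bar\mu)}\beta^\sigma(m')$ (both attained, as $M$ is finite) and show that every used message sent with positive probability by some type $u<\sfrac{1}{2}$ has $\beta^\sigma(m)=\overline\beta^\sigma$, and every used message sent with positive probability by some type $u>\sfrac{1}{2}$ has $\beta^\sigma(m)=\underline\beta^\sigma$. The argument: if $\mu_u(m)>0$ with $u<\sfrac{1}{2}$ and some used $m^*$ had $\beta^\sigma(m^*)>\beta^\sigma(m)$, then the deviation ``send $m^*$, then play the coordinated action'' --- a legitimate cutoff strategy in $\Sigma$ precisely because $\sigma$ is coordinated on pairs of used messages --- yields $u+(1-2u)\beta^\sigma(m^*)>u+(1-2u)\beta^\sigma(m)=\pi_u(\sigma,\sigma)$, contradicting equilibrium; so $\beta^\sigma(m)$ is maximal among used messages. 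The case $u>\sfrac{1}{2}$ is symmetric. Since $\bar\mu(m)>0$ forces $\mu_u(m)>0$ on a positive-measure set of types, and $F$ is atomless, every used message is sent with positive probability by a type strictly on one side of $\sfrac{1}{2}$, so every used $m$ satisfies $\beta^\sigma(m)\in\{\underline\beta^\sigma,\overline\beta^\sigma\}\subseteq[\underline\beta^\sigma,\overline\beta^\sigma]$.

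It remains to control $\beta^\sigma(m)$ for the \emph{unused} messages $m\notin\mbox{supp}(\bar\mu)$, which the definition of binary communication still requires (with full support) to lie in $[\underline\beta^\sigma,\overline\beta^\sigma]$. Here ``play the coordinated action'' need not be well defined (coordination was established only on pairs of used messages), so I would instead use the crude deviations ``send $m$, then always play $L$'', whose payoff is exactly $(1-u)\beta^\sigma(m)$, and ``send $m$, then always play $R$'', whose payoff is $u\bigl(1-\beta^\sigma(m)\bigr)$. If $\beta^\sigma(m)>\overline\beta^\sigma$, then $(1-u)\beta^\sigma(m)>u+(1-2u)\overline\beta^\sigma$ holds at $u=0$ and hence, by continuity, on a whole interval $(0,\delta)$; as almost every type $u<\sfrac{1}{2}$ has equilibrium payoff exactly $u+(1-2u)\overline\beta^\sigma$ (by the previous step), some such type strictly gains by deviating, a contradiction. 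Symmetrically, $\beta^\sigma(m)<\underline\beta^\sigma$ is excluded using types near $1$ and the ``always play $R$'' deviation. Together with the previous paragraph this gives $\beta^\sigma(m)\in[\underline\beta^\sigma,\overline\beta^\sigma]$ for all $m\in M$, and messages sent by types below (resp.\ above) $\sfrac{1}{2}$ attain $\overline\beta^\sigma$ (resp.\ $\underline\beta^\sigma$) --- exactly binary communication.

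The main obstacle is the last step: not the inequality itself but making ``some type deviates'' rigorous despite measure-zero exceptional sets (the clean payoff formula $\pi_u(\sigma,\sigma)=u+(1-2u)\overline\beta^\sigma$ for $u<\sfrac{1}{2}$ holds only almost everywhere), which is resolved by observing that the deviation is strictly profitable for an entire open set of types and that such a set has positive measure. The remaining steps are one-line best-response comparisons once coordination and the monotonicity of $u+(1-2u)\beta$ in $\beta$ are in hand.
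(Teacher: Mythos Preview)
Your proposal is correct and follows essentially the same route as the paper: invoke coordination (Lemma~\ref{lemma:nomiscoordination}) to reduce the interim payoff of sending a used message $m$ to $u+(1-2u)\beta^{\sigma}(m)$, use monotonicity in $\beta$ to force each type's equilibrium messages to attain the appropriate extremum over used messages, and then handle unused messages via full support and the crude ``always $L$'' / ``always $R$'' deviations. You are, if anything, slightly more explicit than the paper in noting that ``play the coordinated action'' is only available on used message pairs and in addressing the measure-zero caveat.
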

\begin{proof}
Let $\sigma$ be a weakly communication-proof equilibrium strategy. Recall that
\[
\beta^{\sigma}(m)=\int_{u=0}^{1}\sum_{m'\in M}\mu_{u}(m')\boldsymbol{1}_{\{u\le\xi(m,m')\}}f(u)du.
\]
As $\sigma$ is coordinated by Lemma \ref{lemma:nomiscoordination}, the payoff to a type $u$ from sending $m\in \mbox{supp}(\overline{\mu})$ is
\[
(1-u)\beta^{\sigma}(m)+u\left(1-\beta^{\sigma}(m)\right).
\]
For a type $u<\sfrac{1}{2}$ the problem of choosing a message to maximize her payoffs is thus equivalent to choosing a message that maximizes $\beta^{\sigma}(m)$. We thus must have that there is a $\overline{\beta}^{\sigma}\in[0,1]$ such that for all $u<\sfrac{1}{2}$ and all $m\in\mbox{supp}(\mu_{u})$, we have $\beta^{\sigma}(m)=\overline{\beta}^{\sigma}$. Analogously, we must have a $\underline{\beta}^{\sigma}\in[0,1]$ such that for all $u>\sfrac{1}{2}$ and all $m\in\mbox{supp}(\mu_{u})$, we have $\alpha^{\sigma}(m)=\overline{\beta}^{\sigma}$. Clearly also $\underline{\beta}^{\sigma}\le\overline{\beta}^{\sigma}$. To extend the argument to unused messages $m\not\in \mbox{supp}(\overline{\mu})$ we rely on the full support assumption. Assume to the contrary that there is a message $m\not\in \mbox{supp}(\overline{\mu})$ with $\beta^{\sigma}(m)>\overline{\beta}^{\sigma}$ (resp., $\beta^{\sigma}(m)<\underline{\beta}^{\sigma}$). Then any sufficiently high (resp., low) type $u$ would strictly earn by deviating to sending message $m$ and playing $L$ (resp., $R$), which contradicts the supposition that $\sigma$ is an equilibrium strategy.
\end{proof}
\begin{lem}
\label{lemma:mpc} Every {weakly} communication-proof equilibrium strategy $\sigma$ is mutual-preference consistent.
\end{lem}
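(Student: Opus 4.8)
The plan is to argue by contradiction, combining Lemmas~\ref{lemma:nomiscoordination} and~\ref{lemma:binary} with the already-established ``$1\Rightarrow2$'' direction applied to $\sigma_R$. By the two previous lemmas we may take $\sigma=(\mu,\xi)$ to be coordinated and to have binary communication. Suppose it is not mutual-preference consistent. By the left--right symmetry of $\Gamma$ we may then fix types $u,u'<\sfrac{1}{2}$ and messages $m\in\mbox{supp}(\mu_{u})$, $m'\in\mbox{supp}(\mu_{u'})$ such that the pair $(m,m')$ is realized with positive probability and, since $\sigma$ is coordinated, $\xi(m,m')=\xi(m',m)=R$; thus after $(m,m')$ both players coordinate on $R$, and the post-communication payoff of $(\sigma,\sigma)$ is $u''$ for the $m$-sender of type $u''$ and $v''$ for the $m'$-sender of type $v''$.

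The heart of the argument is to exhibit $(\sigma_R,\sigma_R)$ as a profile that CP trumps $(\sigma,\sigma)$ with respect to $(F,F)$ and the message pair $(m,m')$. First, $(\sigma_R,\sigma_R)\in\mathcal{E}(F_{m},F_{m'})$: a strategy that is coordinated and has binary communication is a best reply to itself for any prior, and one checks directly that under $\sigma_R$ no type gains by a deviation in the message or in the action stage. Second, I would compare payoffs after $(m,m')$: under $(\sigma_R,\sigma_R)$ the $m$-sender of type $u''\le\sfrac{1}{2}$ gets $F_{m'}(\sfrac{1}{2})(1-u'')+(1-F_{m'}(\sfrac{1}{2}))u''=u''+F_{m'}(\sfrac{1}{2})(1-2u'')$, which is at least $u''$ and strictly more when $u''<\sfrac{1}{2}$ (using $F_{m'}(\sfrac{1}{2})>0$, since the type $u'<\sfrac{1}{2}$ lies in the support of the atomless conditional $F_{m'}$), while an $m$-sender of type $u''>\sfrac{1}{2}$ still gets exactly $u''$; the $m'$-sender is symmetric. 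Hence $(\sigma_R,\sigma_R)$ weakly improves the post-communication payoff of $(\sigma,\sigma)$ for every type in $\mbox{supp}(F_{m})\cup\mbox{supp}(F_{m'})$, strictly for $u$ and $u'$, so it CP trumps $(\sigma,\sigma)$ and $\sigma$ is not strongly communication-proof.

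To rule out weak communication-proofness as well, I would invoke that $\sigma_R$ satisfies all three key properties: the ``$1\Rightarrow2$'' argument (which concerns only the structure of such a strategy and applies for any prior, since after each of $\sigma_R$'s message pairs one player --- or both, when they agree --- already receives her most-preferred outcome) shows that $(\sigma_R,\sigma_R)$ is not CP trumped with respect to $(F_{m},F_{m'})$. So the trumping profile $(\sigma_R,\sigma_R)$ cannot itself be CP trumped, which contradicts the definition of weak communication-proofness for $\sigma$. Therefore $\sigma$ is mutual-preference consistent.

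The step I expect to be most delicate is the bookkeeping around supports and the ``$\xi(m,m')=R$'' abuse of notation: one must ensure the failure of mutual-preference consistency can be pinned to a \emph{positive-probability} message pair (so that renegotiation after it is well defined) and that the payoff inequalities hold for \emph{all} types in the conditional supports, not merely almost everywhere --- in particular for the threshold type $u''=\xi(m,m')$. The degenerate sub-case, where binary communication is vacuous ($\underline{\beta}^{\sigma}=\overline{\beta}^{\sigma}$, so $F_{m}$ or $F_{m'}$ may carry mass above $\sfrac{1}{2}$), is exactly why the trumping strategy is $\sigma_R$ rather than $\sigma_L$: $\sigma_R$ leaves every type above $\sfrac{1}{2}$ equally well off while strictly helping those below $\sfrac{1}{2}$, whereas $\sigma_L$ would make the former worse off.
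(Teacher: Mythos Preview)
Your argument is correct and follows essentially the same route as the paper: assume coordination via Lemma~\ref{lemma:nomiscoordination}, locate a positive-probability message pair $(m,m')$ sent by two types below $\sfrac12$ that leads to $R$, and exhibit $\sigma_R$ as a CP-trumping profile that is itself not CP trumped (by the $1\Rightarrow2$ direction). Your write-up is in fact more explicit than the paper's on the payoff comparison and on why the trumping profile survives further renegotiation, and your remark that $\sigma_R$ (rather than $\sigma_L$) is needed precisely because types above $\sfrac12$ may lie in the conditional supports is a useful clarification the paper leaves implicit.
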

\begin{proof}
By Lemma \ref{lemma:nomiscoordination} a weakly communication-proof equilibrium strategy $\sigma=\left(\mu,\xi\right)$ is coordinated. Suppose that it is not mutual-preference consistent. Then there is either a pair $(m,m')$ such that there are types $u,v<\sfrac{1}{2}$ with $m\in\mbox{supp}(\mu_{u})$ and $m'\in\mbox{supp}(\mu_{v})$ such that play after $(m,m')$ is coordinated on $R$, or a pair $(m,m')$ such that there are types $u,v>\sfrac{1}{2}$ with $m\in\mbox{supp}(\mu_{u})$ and $m'\in\mbox{supp}(\mu_{v})$ such that play after $(m,m')$ is coordinated on $L$. In the former (resp., latter) case strategy $\sigma$ is CP-trumped by $\sigma_{R}$ (resp., $\sigma_{L}$) in the game $\left\langle \Gamma(F_{m},F_{m'}),\{m_{L},m_{R}\}\right\rangle $ because  $\sigma_{R}$ (resp., $\sigma_{L}$) does not affect the payoff of all types $u\geq\sfrac{1}{2}$ (resp., $u\leq\sfrac{1}{2}$), and it strictly improves the payoff to all types $u<\sfrac{1}{2}$ (resp., $u>\sfrac{1}{2}$).
\end{proof}

\subsection{Proofs of Section \ref{sec:Efficiency} (On Efficiency)}

\label{app:proofsefficiency}
\begin{proof}[Proof of Proposition \ref{prop:Pareto-optimal}]
By Theorem \ref{thm:uniqueness} and the discussion of the one-dimensional set of strategies satisfying the key properties in Section \ref{sec:Equilibrium-Strategies} a communication-proof equilibrium strategy $\sigma$'s  payoff is determined by its left tendency $\alpha\equiv\alpha^{\sigma}\in[0,1]$. This payoff is given by
\[
\pi_{u}(\sigma,\sigma)=(1-u)\left[F(\sfrac{1}{2})+\alpha\left(1-F(\sfrac{1}{2})\right)\right] + u(1-\alpha)\left[1-F(\sfrac{1}{2})\right],
\]
for each $u\in(0,\sfrac{1}{2}]$, and it is given by
\[
\pi_{u}(\sigma,\sigma)=(1-u)\alpha F(\sfrac{1}{2})+u\left[\left(1-F( \sfrac{1}{2})\right)+F( \sfrac{1}{2})(1-\alpha)\right].
\]
for each type $u\in(\sfrac{1}{2},1]$. The payoff to a type $u$ from  given social choice function $\phi$ is given by
\[
\pi_{u}\left(\phi\right) = \left(1-u\right) \mathbb{E}_{v}\phi_{u,v}\left(L,L\right) + u\mathbb{E}_{v}\phi_{u,v}\left(R,R\right).
\]

Now suppose that $\phi$ interim Pareto dominates $\sigma$. Then $\pi_{u}(\phi)\ge\pi_{u}(\sigma,\sigma)$ for all $u\in[0,1]$ with a strict inequality for a positive measure of $u$. As $\pi_{u}(\sigma,\sigma)$ is a convex combination of two payoffs, this implies that:
\begin{equation}
\mathbb{E}_{v}\phi_{u,v}\left(L,L\right)\ge F(\sfrac{1}{2}) + \alpha\left(1-F(\sfrac{1}{2})\right)\mbox{ for any } u \le \sfrac{1}{2},\mbox{ and} \label{eq-E_v}
\end{equation}
\begin{equation}
\mathbb{E}_{v}\phi_{u,v}\left(R,R\right) \ge \left(1-F(\sfrac{1}{2})\right)+F(\sfrac{1}{2})(1-\alpha)\mbox{ for any } u>\sfrac{1}{2},\label{eq-E_v2}
\end{equation}
with at least one of the inequalities holding strictly for a positive measure of types. Thus,
\[
\mathbb{E}_{v}\phi_{u,v}\left(L,L\right) = F(\sfrac{1}{2})\mathbb{E}_{\{v\le\sfrac{1}{2}\}} \phi_{u,v}\left(L,L\right) + \left(1-F(\sfrac{1}{2})\right) \mathbb{E}_{\{v>\sfrac{1}{2}\}}\phi_{u,v}\left(L,L\right),
\]
where, for instance, $\mathbb{E}_{\{v>\sfrac{1}{2}\}}$ denotes the expectation conditional on $v>\sfrac{1}{2}$. Substituting this last equality in Eq. (\ref{eq-E_v}) yields the following inequality
\[
F(\sfrac{1}{2})\mathbb{E}_{\{v\le\sfrac{1}{2}\}} \phi_{u,v}\left(L,L\right) + \left(1-F(\sfrac{1}{2})\right)\mathbb{E}_{\{v>\sfrac{1}{2}\}}\phi_{u,v}\left(L,L\right)\ge F( \sfrac{1}{2}) + \alpha\left(1-F(\sfrac{1}{2})\right)
\]
for any $u\le\sfrac{1}{2}$. The fact that $\mathbb{E}_{\{v\le\sfrac{1}{2}\}}\phi_{u,v}\left(L,L\right) \le 1$ implies that $\mathbb{E}_{\{v>\sfrac{1}{2}\}}\phi_{u,v}\left(L,L\right)\ge\alpha$ for any $u\le\sfrac{1}{2}$. An analogous argument (applied to Equation (\ref{eq-E_v2})) implies that $\mathbb{E}_{\{v<\sfrac{1}{2}\}}\phi_{u,v}\left(R,R\right)\ge1-\alpha$, for any $u>\sfrac{1}{2}$, with at least one of these inequalities holding strictly for a positive measure of types.
This implies that
\[
\mathbb{E}_{\{u<\sfrac{1}{2}\}}\mathbb{E}_{\{v>\sfrac{1}{2}\}}\phi_{u,v}\left(L,L\right)\ge\alpha\mbox{ and }\mathbb{E}_{\{u>\sfrac{1}{2}\}}\mathbb{E}_{\{v<\sfrac{1}{2}\}}\phi_{u,v}\left(R,R\right)\ge1-\alpha,
\]
with at least one of the two inequalities holding strictly. By the symmetry of $\phi$ we have $\phi_{u,v}(R,R)=\phi_{v,u}(R,R)$ and thus
\[
\mathbb{E}_{\{u<\sfrac{1}{2}\}}\mathbb{E}_{\{v>\sfrac{1}{2}\}}\phi_{u,v}\left(L,L\right)+\mathbb{E}_{\{u<\sfrac{1}{2}\}}\mathbb{E}_{\{v>\sfrac{1}{2}\}}\phi_{u,v}\left(R,R\right)>1,
\]
which contradicts $\phi_{u,v}$ being a social choice function.
\end{proof}
The proof of Proposition \ref{prop-maximizing-coordinated} uses the following lemma (which is of independent interest).

\begin{lem}
\label{lemma-maximizing-ex-ante-payoff} Let $\sigma\in\mathcal{E}$ be a coordinated equilibrium strategy. Then there is a communication-proof strategy $\sigma'$ such that either $\sigma$ and $\sigma'$ are interim payoff equivalent or $\sigma'$ interim Pareto dominates $\sigma$.
\end{lem}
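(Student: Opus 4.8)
The plan is to show that a coordinated equilibrium is always (weakly) interim dominated by the unique communication-proof strategy sharing its ``left tendency,'' and that the two coincide in payoffs exactly when the coordinated equilibrium already satisfies mutual-preference consistency. First I would observe that any coordinated equilibrium strategy $\sigma=(\mu,\xi)$ automatically has binary communication: the proof of Lemma~\ref{lemma:binary} uses only that $\sigma$ is an equilibrium and is coordinated, so that the interim payoff of type $u$ from a used message $m$ equals $(1-u)\beta^{\sigma}(m)+u\bigl(1-\beta^{\sigma}(m)\bigr)=u+(1-2u)\beta^{\sigma}(m)$, a quantity type $u<\tfrac12$ maximizes and type $u>\tfrac12$ minimizes over her available messages. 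Hence there are numbers $\underline{\beta}^{\sigma}\le\overline{\beta}^{\sigma}$ with
\[
\pi_{u}(\sigma,\sigma)=u+(1-2u)\,\overline{\beta}^{\sigma}\ \ \text{for }u<\tfrac12,\qquad \pi_{u}(\sigma,\sigma)=u+(1-2u)\,\underline{\beta}^{\sigma}\ \ \text{for }u>\tfrac12 .
\]

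Next I would isolate the left tendency. Let $\alpha$ be the probability that two agents, each following $\sigma$, coordinate on $L$ conditional on their realized types lying on opposite sides of $\tfrac12$. Because $\sigma$ is coordinated, $\xi(m,m')=\xi(m',m)\in\{L,R\}$ for every used pair $(m,m')$, so the action played after a message pair is a \emph{symmetric} function of the two messages; relabelling the two players then shows that the ``type $u<\tfrac12$ meets type $v>\tfrac12$'' coordination-on-$L$ probability and the ``type $u>\tfrac12$ meets type $v<\tfrac12$'' probability coincide, and both equal $\alpha$. Conditioning the opponent's type on being below or above $\tfrac12$ yields
\[
\overline{\beta}^{\sigma}=F(\tfrac12)\,q_{LL}+\bigl(1-F(\tfrac12)\bigr)\alpha,\qquad \underline{\beta}^{\sigma}=F(\tfrac12)\,\alpha+\bigl(1-F(\tfrac12)\bigr)q_{RR},
\]
where $q_{LL}\in[0,1]$ is the probability two agents both preferring $L$ coordinate on $L$, and $q_{RR}\in[0,1]$ the probability two agents both preferring $R$ coordinate on $L$. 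In particular $\overline{\beta}^{\sigma}\le F(\tfrac12)+\bigl(1-F(\tfrac12)\bigr)\alpha$ and $\underline{\beta}^{\sigma}\ge F(\tfrac12)\alpha$.

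Then I would carry out the comparison. Let $\sigma'$ be a communication-proof strategy with left tendency $\alpha$ (its existence is recorded in the discussion of the one-dimensional family of strategies satisfying the three key properties in Section~\ref{sec:Equilibrium-Strategies}; should this $\alpha$ fail to be implementable with a single round over $M$, one appeals instead to the multi-round model of Appendix~\ref{subsec:Multiple-Rounds-of}, where every $\alpha\in[0,1]$ is implementable and our solution concepts are unchanged). By the formulas for $\underline{\beta}^{\sigma'}$ and $\overline{\beta}^{\sigma'}$ given there, $\overline{\beta}^{\sigma'}=F(\tfrac12)+\bigl(1-F(\tfrac12)\bigr)\alpha\ge\overline{\beta}^{\sigma}$ and $\underline{\beta}^{\sigma'}=F(\tfrac12)\alpha\le\underline{\beta}^{\sigma}$. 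Since $b\mapsto u+(1-2u)b$ is increasing for $u<\tfrac12$ and decreasing for $u>\tfrac12$, these two inequalities give $\pi_{u}(\sigma',\sigma')\ge\pi_{u}(\sigma,\sigma)$ for every $u\in[0,1]$. If $q_{LL}=1$ and $q_{RR}=0$ for almost every type pair, then $\overline{\beta}^{\sigma}=\overline{\beta}^{\sigma'}$ and $\underline{\beta}^{\sigma}=\underline{\beta}^{\sigma'}$, so all types obtain identical payoffs and $\sigma,\sigma'$ are interim payoff equivalent (indeed $\sigma$ is then itself mutual-preference consistent, hence communication-proof by Theorem~\ref{thm:uniqueness}); otherwise the inequality is strict for all of $(0,\tfrac12)$ or for all of $(\tfrac12,1)$, a set of positive measure, so $\sigma'$ interim Pareto dominates $\sigma$.

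The step I expect to be the crux is the second one: verifying that the symmetry $\xi(m,m')=\xi(m',m)$ is precisely what forces the \emph{same} number $\alpha$ to appear in the upper bound on $\overline{\beta}^{\sigma}$ and in the lower bound on $\underline{\beta}^{\sigma}$. This is what makes the two threshold comparisons point in opposite directions (one direction for types below $\tfrac12$, the other for types above), which is in turn what allows a single communication-proof strategy to weakly beat $\sigma$ for all types simultaneously rather than helping one half of the type space at the expense of the other. A secondary, more routine point is the existence of a communication-proof strategy with exactly the prescribed left tendency, which is handled by the implementation arguments cited above. With Lemma~\ref{lemma-maximizing-ex-ante-payoff} in hand, Proposition~\ref{prop-maximizing-coordinated} follows by noting additionally that $\pi(\sigma',\sigma')$ is the convex combination $\alpha\,\pi(\sigma_{L},\sigma_{L})+(1-\alpha)\,\pi(\sigma_{R},\sigma_{R})$.
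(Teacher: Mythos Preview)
Your argument is correct and close in spirit to the paper's, but it is organized differently in a way worth noting. The paper computes the same two numbers you call $\overline{\beta}^{\sigma}$ and $\underline{\beta}^{\sigma}$ (there denoted $\bar p,\underline p$) and then does a three-way case split: if $\bar p\le F(\tfrac12)$ it compares to $\sigma_R$, if $\underline p\ge F(\tfrac12)$ to $\sigma_L$, and only in the remaining case does it pick a communication-proof strategy whose left tendency $\alpha$ is chosen so that $F(\tfrac12)+(1-F(\tfrac12))\alpha=\bar p$; it then asserts $\underline p\ge \alpha F(\tfrac12)$ without further comment. You instead take $\alpha$ to be the \emph{actual} cross-probability of coordinating on $L$ when the two types straddle $\tfrac12$, and your decomposition $\overline{\beta}^{\sigma}=F(\tfrac12)q_{LL}+(1-F(\tfrac12))\alpha$, $\underline{\beta}^{\sigma}=F(\tfrac12)\alpha+(1-F(\tfrac12))q_{RR}$ together with the symmetry $\xi(m,m')=\xi(m',m)$ is precisely what justifies the paper's otherwise unexplained inequality in its third case. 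The upshot is that your route avoids the case split and makes transparent why a \emph{single} $\alpha$ controls both the upper bound on $\overline{\beta}^{\sigma}$ and the lower bound on $\underline{\beta}^{\sigma}$; the paper's route has the mild advantage that in two of its three cases the dominating strategy is $\sigma_L$ or $\sigma_R$, so no implementability question arises there. Both approaches share the same residual technicality about realizing an arbitrary (possibly irrational) $\alpha$ as the left tendency of a strategy in $\langle\Gamma,M\rangle$, and your appeal to the multi-round model or to the fact that the lemma is only used for the ex-ante payoff bound in Proposition~\ref{prop-maximizing-coordinated} is an acceptable way to handle it.
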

\begin{proof}
Let $\sigma=(\mu,\xi)\in\mathcal{E}$ be coordinated. For each message $m\in M$, let $p_{m}\in\left[0,1\right]$ be the probability that the players coordinate on $L$, conditional on the agent sending message $m$:
\[
p_{m}=\sum_{m'\in M}\mu\left(\bar{m}'\right)\boldsymbol{1}_{\{\xi\left(m,m'\right)=L\}}.
\]
As $\sigma$ is coordinated, it follows that $1-p_{m}$ is the probability that the players coordinate on $R$, conditional on the agent sending message $m$.

Let $\bar{p}=\max_{m\in M}p_{m}$ be the maximal probability, and let $\underline{p}=\min_{m\in M}p_{m}$ be the minimal probability. By definition, $\underline{p}\le\bar{p}$. As $\sigma$ is an equilibrium strategy, $\underline{p}<\bar{p}$ implies that all types $u<\sfrac{1}{2}$ send a message inducing probability $\bar{p}$ and all types $u>\sfrac{1}{2}$ send a message inducing probability $\underline{p}$. Therefore, the expected payoff of a type $u\le\sfrac{1}{2}$ is given by $\pi_{u}\left(\sigma,\sigma\right)=\bar{p}\left(1-u\right)+\left(1-\bar{p}\right)u$, and the expected payoff of any type $u>\sfrac{1}{2}$ is equal to $\pi_{u}\left(\sigma,\sigma\right)=\underline{p}\left(1-u\right)+\left(1-\underline{p}\right)u.$ This is also true if $\underline{p}=\bar{p}$. Note that for types $u<\sfrac{1}{2}$, the expected payoff strictly increases in $\bar{p}$ and for types $u>\sfrac{1}{2}$ the type's expected payoff strictly decreases in $\underline{p}$.

We consider three cases. Suppose first that $\underline{p}\le\bar{p}\le F(\sfrac{1}{2})$. Then let $\sigma'=\sigma_{R}$. This strategy is also coordinated and its induced payoffs can be written in the same form as those for strategy $\sigma$ with $\underline{p}'=0$ and $\bar{p}'=F(\sfrac{1}{2})$. Thus, we get that $\pi_{u}\left(\sigma',\sigma'\right)\ge\pi_{u}\left(\sigma,\sigma\right)$ for every $u\in[0,1]$. This implies that $\sigma$ is either interim (pre-communication) payoff equivalent to or Pareto-dominated by $\sigma'=\sigma_{R}$.

The second case where $F(\sfrac{1}{2})\le\underline{p}\le\bar{p}$ is analogous to the first one, with $\sigma'=\sigma_{L}$. In the final case $\underline{p}<F(\sfrac{1}{2})<\bar{p}$. Let $\alpha\in[0,1]$ be such that $F(\sfrac{1}{2})+(1-F(\sfrac{1}{2}))\alpha=\bar{p}$ and let $\sigma'$ be a communication-proof strategy with left tendency $\alpha$. Then $\underline{p}\ge\alpha F(\sfrac{1}{2})$ and by construction $\sigma$ is either interim (pre-communication) payoff equivalent to or Pareto dominated by $\sigma'$.
\end{proof}
\begin{proof}[Proof of Proposition \ref{prop-maximizing-coordinated}]
By Lemma \ref{lemma-maximizing-ex-ante-payoff} we have that every coordinated equilibrium strategy $\sigma$ is interim (pre-communication) Pareto-dominated by some communication-proof strategy with some left tendency $\alpha\in[0,1]$ denoted by $\sigma_{\alpha}$. We thus have that $\pi\left(\sigma,\sigma\right)\leq\pi\left(\sigma_{\alpha},\sigma_{\alpha}\right)$.

The ex-ante expected payoff of to a $u$ type under strategy $\sigma_{\alpha}$ is given by
\[
\pi_{u}\left(\sigma_{\alpha},\sigma_{\alpha}\right)=\left(1-u\right)\left[F(\sfrac{1}{2}) + \alpha\left(1-F(\sfrac{1}{2})\right)\right] + u(1-\alpha)\left(1-F(\sfrac{1}{2})\right)\mbox{ for }u\le\sfrac{1}{2}\mbox{ and }
\]
\[
\pi_{u}\left(\sigma_{\alpha},\sigma_{\alpha}\right)=\left(1-u\right)\alpha F(\sfrac{1}{2}) + u\left[1-F(\sfrac{1}{2}) + (1-\alpha)F(\sfrac{1}{2})\right]\mbox{ for }u>\sfrac{1}{2}.
\]
It is straightforward to verify that $\pi_{u}\left(\sigma_{\alpha},\sigma_{\alpha}\right)=\alpha\pi_{u}\left(\sigma_{1},\sigma_{1}\right)+(1-\alpha)\pi_{u}\left(\sigma_{0},\sigma_{0}\right).$ for every $u$.

As $\sigma_{1}=\sigma_{L}$ and $\sigma_{0}=\sigma_{R}$ and as for all $u\in[0,1]$ $\pi_{u}\left(\sigma_{\alpha},\sigma_{\alpha}\right)$ is the same convex combination of $\pi_{u}\left(\sigma_{L},\sigma_{L}\right)$ and $\pi_{u}\left(\sigma_{R},\sigma_{R}\right)$, we have $\pi\left(\sigma_{\alpha},\sigma_{\alpha}\right)=\alpha\pi\left(\sigma_{1},\sigma_{1}\right)+(1-\alpha)\pi\left(\sigma_{0},\sigma_{0}\right)$, which implies that $\pi\left(\sigma,\sigma\right)\leq\pi\left(\sigma_{\alpha},\sigma_{\alpha}\right)\leq\max\left\{ \pi\left(\sigma_{L},\sigma_{L}\right),\pi\left(\sigma_{R},\sigma_{R}\right)\right\} $.
\end{proof}

\subsection{Proof of Theorem \ref{Thm:multidimensional}} \label{subsec:Proof-of-Theorem-multi}

The proof of Theorem \ref{Thm:multidimensional} mimics the proof of Theorem \ref{thm:uniqueness} except that Lemma \ref{lemma:nomiscoordination} has to be adapted somewhat as follows (this is the only place where one uses the assumption of {uniform} coordination preferences).
\begin{lem}
\label{lemma:nomiscoordinationmd}
Assume that the atomless distribution of types have {uniform} coordination preferences. Let $\sigma=\left(\mu,\xi\right)$ be a weakly communication-proof equilibrium strategy. Then it is coordinated.
\end{lem}
\begin{proof}
We need to show that for any message pair $m,m'\in\mbox{supp}\left(\bar{\mu}\right)$,
\[
\mbox{either }\xi(m,m')\geq\sup\left\{ \varphi_{u}\mid\mu_{u}(m)>0\right\} \mbox{ or }\xi(m,m')\leq\inf\left\{ \varphi_{u}\mid\mu_{u}(m)>0\right\}.
\]
Let $m,m'\in\mbox{supp}\left(\bar{\mu}\right)$ and assume to the contrary that
\[
\inf\left\{ \varphi_{u}\mid\mu_{u}(m)>0\right\} <\xi(m,m')<\sup\left\{ \varphi_{u}\mid\mu_{u}(m)>0\right\}.
\]
As $\sigma$ is an equilibrium, we must have $\inf\left\{ \varphi_{u}\mid\mu_{u}(m')>0\right\} <\xi(m',m)<\sup\left\{ \varphi_{u}\mid\mu_{u}(m')>0\right\}$. (Otherwise the $m'$ message sender would play $L$ with probability one or $R$ with probability one, in which case the $m$ message sender's best response would be to play $L$ (or $R$) regardless of her type). Let $x=\xi(m,m')$ and $x'=\xi(m',m)$. In what follows we will show that the equilibrium $(x,x')$ of the game without communication $\Gamma\left(F_{m},F_{m'}\right)$ is Pareto-dominated by either $\sigma_{L}$, $\sigma_{R}$, or $\sigma_{C}$ (all based on $\varphi_{u}$ instead of $u$).

There are three cases to be considered. Case 1: Suppose that $x,x'\le\sfrac{1}{2}$. We now show that in this case the equilibrium $(x,x')$ is Pareto-dominated by $\sigma_{R}$. Consider the player who sent message $m$.

Case 1a: Consider a type $u$ with $\varphi_{u}\le x$. Then we have
\[
u_{LL}F_{m'}(x')+\left(1-F_{m'}(x')\right)u_{LR}\le u_{LL}F_{m'}(\sfrac{1}{2}) + u_{LR}\left(1-F_{m'}(\sfrac{1}{2})\right) \le u_{LL}F_{m'}(\sfrac{1}{2}) + u_{RR}\left(1-F_{m'}(\sfrac{1}{2})\right),
\]
where the first expression is the type $u$ agent's payoff under strategy profile $(x,x')$ and the last expression is her payoff under strategy profile $\sigma_{R}$. The first inequality follows from $u_{LL}\ge u_{LR}$ and $F_{m'}(\sfrac{1}{2})\ge F_{m'}(x')$ by the fact that $F_{m'}$ is nondecreasing (as it is a CDF), and the second inequality follows from $u_{RR}\ge u_{LR}$. This inequality is strict when $u_{LL}>u_{LR}$ and $F_{m'}(\sfrac{1}{2})>F_{m'}(x')$ or when $u_{RR}>u_{LR}$.

Case 1b: Now consider a type $u$ with $x<\varphi_{u}\le\sfrac{1}{2}$. Then we have
\[
u_{RL}F_{m'}(x')+u_{RR}\left(1-F_{m'}(x')\right)\le u_{LL}F_{m'}(x')+u_{RR}\left(1-F_{m'}(x')\right)\le u_{LL}F_{m'}(\sfrac{1}{2}) + u_{RR}\left(1-F_{m'}(\sfrac{1}{2})\right),
\]
where the first expression is the type $u$ agent's payoff under strategy profile $(x,x')$ and the last expression is her payoff under strategy profile $\sigma_{R}$. The first inequality follows from $u_{LL}\ge u_{RL}$ and the second one from $F_{m'}(\sfrac{1}{2})\ge F_{m'}(x')$ and $u_{LL}\ge u_{RR}$. Note also that the second inequality follows from the assumption of {uniform} coordination preferences and $\varphi_{u}\le\sfrac{1}{2}$. This inequality is strict when $u_{LL}>u_{RL}$ or when $F_{m'}(\sfrac{1}{2})>F_{m'}(x')$ and $u_{LL}>u_{RR}$.

Case 1c: Finally, consider a type $u$ with $\varphi_{u}>\sfrac{1}{2}$. Then we have $u_{RR}>u_{RL}F_{m'}(x')+u_{RR}\left(1-F_{m'}(x')\right)$, where the right-hand side is the type $u$ agent's payoff under $(x,x')$ and the left-hand side is her payoff under $\sigma_{R}$. The inequality follows from the observation that $u_{RR}>u_{RL}$ because $u_{RR}>u_{LL}$ by the assumption of {uniform} coordination preferences, and $u_{LL}\ge u_{RL}$ by the fact that it is a coordination game.

The analysis for the player who sent message $m'$ is analogous.

Case 2: Suppose that $x,x'\ge\sfrac{1}{2}$. The analysis is analogous to Case 1 if we replace $\sigma_{R}$ with $\sigma_{L}$.

Case 3: Suppose, without loss of generality for the remaining cases, that $x\le\sfrac{1}{2}\le x'$. We show that the equilibrium $(x,x')$ in this case is Pareto-dominated by $\sigma_{C}$. Consider the player who sent message $m$.

Case 3a: Consider a type $u$ such that $\varphi_{u}\le x$. Then we have
\[
u_{LL}\left[F_{m'}(\sfrac{1}{2}) + \sfrac{1}{2}\left(1-F_{m'}(\sfrac{1}{2})\right)\right] + u_{RR}\sfrac{1}{2}\left(1-F_{m'}(\sfrac{1}{2})\right) > u_{LL}F_{m'}(x')+u_{LR}\left(1-F_{m'}(x')\right),
\]
where the right-hand side is the type $u$ agent's payoff under strategy profile $(x,x')$ and the left-hand side is her payoff under strategy profile $\sigma_{C}$. The inequality follows from the observation that $u_{RR}\ge u_{LR}$ and $F_{m'}(x')\le\sfrac{1}{2}$ by the fact that $F_{m'}(x')=x$ when $(x,x')$ is an equilibrium.

Case 3b: Now consider a type $u$ with $x<\varphi_{u}\le\sfrac{1}{2}$. Then we have
$$u_{RL}F_{m'}(x')+u_{RR}\left(1-F_{m'}(x')\right)\le u_{LL}F_{m'}(x')+u_{RR}\left(1-F_{m'}(x')\right)\le $$
$$u_{LL}\left[\sfrac{1}{2} + \sfrac{1}{2}F_{m'}(x')\right] + u_{RR}\sfrac{1}{2}\left(1-F_{m'}(\sfrac{1}{2})\right),$$
where the first expression is the type u agent's payoff under strategy profile $(x,x')$ and the last expression is her payoff under strategy profile $\sigma_{C}$. The first inequality follows from $u_{LL}\ge u_{RL}$ and the second one from $u_{LL}\ge u_{RR}$ by the assumption of {uniform} coordination preferences given $\varphi_{u}\le\sfrac{1}{2}$ and $F_{m'}(x')=x$ by $(x,x')$ being an equilibrium and $x<\sfrac{1}{2}$. The inequality is strict if $u_{LL}>u_{RL}$ or $u_{LL}>u_{RR}$.

Case 3c: Finally, consider a type $u$ with $\varphi_{u}>\sfrac{1}{2}$. Then we have
\begin{eqnarray*}
u_{RL}F_{m'}(x')+u_{RR}\left(1-F_{m'}(x')\right) & < & u_{RL} \sfrac{1}{2}F_{m'}(\sfrac{1}{2}) + u_{RR}\left[\left(1-F_{m'}(\sfrac{1}{2})\right) + \sfrac{1}{2}F_{m'} \sfrac{1}{2})\right]\\
 & \le & u_{LL}\sfrac{1}{2}F_{m'}(\sfrac{1}{2}) + u_{RR}\left[\left(1-F_{m'}(\sfrac{1}{2})\right) + \sfrac{1}{2}F_{m'}(\sfrac{1}{2})\right],
\end{eqnarray*}
where the first expression is a $u$ type's payoff under strategy profile $(x,x')$ and the last expression is her payoff under strategy profile $\sigma_{C}$. The first inequality follows from $u_{RR}>u_{LL}\ge u_{RL}$ by the assumption of {uniform} coordination preferences and from $\left(1-F_{m'}(\sfrac{1}{2})\right)\ge\left(1-F_{m'}(x')\right)$ as $F_{m'}$ is nondecreasing.

The analysis for the player who sent message $m'$ is analogous.
\end{proof}

\subsection{Proof of Proposition \ref{prop:counterexample}} \label{app:counterexproof}

We prove Proposition \ref{prop:counterexample} through a series of claims. Note that, given the equilibrium in question, after message pairs $(m_L,m_L)$ and $(m_R,m_R)$ no Pareto improvement is possible. It remains to be shown that, while there is a Pareto-improving equilibrium (with new communication) after message pair $(m_L,m_R)$, all Pareto-improving equilibria after message pair $(m_L,m_R)$ are themselves CP-trumped. The following claims refer to the situation after observed message pair $(m_L,m_R)$.

\begin{claim} \label{lem:alleq}
Suppose a further message pair leads to updated beliefs of $\alpha,1-\alpha$ of the L type being $L_{1}$ or $L_{2}$, respectively, and $\beta,1-\beta$ of the R type being $R_{1}$ or $R_{2}$, respectively. The following table provides the full list of Bayes Nash equilibria in the updated coordination game without (further) communication:
\[
\begin{array}{cc|ccc}
L_{1},L_{2} & R_{1},R_{2} & \mbox{payoffs } L_{1},L_{2};R_{1},R_{2} & \alpha & \beta \\ \hline
L,L & L,L & 2,2;1,1 & \in [0,1] & \in [0,1] \\
R,R & R,R & 1,1;2,2 & \in [0,1] & \in [0,1] \\
mix,R & mix,L & \sfrac23,\sfrac23;\sfrac23,\sfrac23  & \ge \sfrac23 & \ge \sfrac23 \\
mix,R & R,mix & \sfrac23,\sfrac23;\sfrac{16}{9},\sfrac19 & \ge \sfrac19 & \le \sfrac23 \\
L,mix & mix,L & \sfrac{16}{9},\sfrac19;\sfrac23,\sfrac23 & \le \sfrac23 & \ge \sfrac19 \\
L,mix & R,mix & \sfrac{16}{9},\sfrac19;\sfrac{16}{9},\sfrac19 & \le \sfrac19 & \le \sfrac19 \\
L,R & R,L & 2(1-\beta),\beta;2(1-\alpha),\alpha & \in [\sfrac19,\sfrac23] & \in [\sfrac19,\sfrac23]
\end{array}
\]
The last two columns provide the range of $\alpha$ and $\beta$ under which the various strategy profiles are equilibria.
\end{claim}
\begin{proof}
The proof follows straightforwardly from the observations that a probability of opponent playing action L (R) of $\sfrac13$ makes type $L_1$ ($R_1$) indifferent between actions L and R, while a probability of opponent playing action L (R) of $\sfrac89$ makes type $L_2$ ($R_2$) indifferent between actions L and R.
\end{proof}

\begin{claim} \label{lem:undomeq}
Of the equilibria provided in Claim \ref{lem:alleq} equilibria that are not given in the following table are CP trumped by other equilibria.

\[
\begin{array}{cc|cccc}
L_{1},L_{2} & R_{1},R_{2} & \mbox{payoffs } L_{1},L_{2};R_{1},R_{2} & \alpha & \beta & \alpha + \beta \\ \hline
L,L & L,L & 2,2;1,1 & \in [0,1] & \in [0,1] & \\
R,R & R,R & 1,1;2,2 & \in [0,1] & \in [0,1] & \\
L,mix & R,mix & \sfrac{16}{9},\sfrac19;\sfrac{16}{9},\sfrac19 & \le \sfrac19 & \le \sfrac19 & \\
L,R & R,L & 2(1-\beta),\beta;2(1-\alpha),\alpha & \in [\sfrac19,\sfrac{7}{18}] & \in [\sfrac19,\sfrac{7}{18}] & \le \sfrac12
\end{array}
\]
\end{claim}
\begin{proof}
All mixed equilibria except ((L,mix),(R,mix)) are Pareto dominated by either (L,L) or (R,R), see Claim \ref{lem:alleq}. Equilibrium ((L,R),(R,L)) (when $\alpha,\beta$ are outside the domain given in the table above) is dominated by a convex combination of (L,L) and (R,R) (it is dominated by (L,L) if $\alpha \geq \sfrac12$, dominated by (R,R) if $\beta \geq \sfrac12$, and dominated by a joint lottery that yields (L,L) with probability $1-2\beta$ and (R,R) with the remaining probability $2\beta$ if $\alpha,\beta< \sfrac12 <\alpha+\beta$).
\end{proof}

\begin{claim} \label{lem:L1beatsL2}
In any equilibrium of this game with or without additional communication, type $L_{1}$ $(R_{1})$ receives a payoff that is at least as high as that of type $L_{2}$ $(R_{2})$.
\end{claim}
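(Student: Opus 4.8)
The plan is to prove the inequality by a \emph{mimicking} argument: type $L_{1}$ can always secure at least type $L_{2}$'s equilibrium payoff by copying $L_{2}$'s communication behavior and then playing a best response in the action stage (and symmetrically for $R_{1}$ versus $R_{2}$). The only structural input needed is a comparison of the two payoff matrices.

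First I would record the decisive fact. Comparing $u_{L_{1}}$ and $u_{L_{2}}$, the two matrices agree in every entry except that $u_{L_{1}}(L,R)=0>-15=u_{L_{2}}(L,R)$. Hence, for any probability $p$ with which the opponent plays $L$, the best-reply value of $L_{1}$ is $\max\{2p,\,1-p\}$ and that of $L_{2}$ is $\max\{2p-15(1-p),\,1-p\}$; since $2p\ge 2p-15(1-p)$ and the ``play $R$'' value $1-p$ is common to both types, the $L_{1}$-value weakly dominates the $L_{2}$-value for every $p$. The analogous statement holds for $R_{1}$ versus $R_{2}$, which differ only in $u_{R_{1}}(R,L)=0>-15=u_{R_{2}}(R,L)$.

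Next I would formalize the deviation. Fix any Bayes Nash equilibrium $(\tau,\tau')$ of the game faced after the message pair $(m_{L},m_{R})$, allowing any number of additional communication rounds, and let type $L_{1}$ deviate by copying the (possibly randomized) message protocol that $\tau$ assigns to type $L_{2}$ and then best-responding at every terminal message history. Because the opponent conditions only on observed messages and her own type, and because the two players' types are drawn independently, copying $L_{2}$'s messages leaves the joint distribution of (realized message history, opponent's action) exactly as it is when the L-player is genuinely $L_{2}$: for every terminal history $h$ the reaching probability $q_{h}$ and the conditional probability $p_{h}$ that the opponent plays $L$ are unchanged. Type $L_{2}$'s equilibrium payoff equals $\sum_{h}q_{h}\,v_{h}$, where at every positive-probability $h$ the action value $v_{h}$ equals $\max\{2p_{h}-15(1-p_{h}),\,1-p_{h}\}$ by optimality of equilibrium action choices; the mimicking deviation yields $L_{1}$ the payoff $\sum_{h}q_{h}\max\{2p_{h},\,1-p_{h}\}$, which by the first step is at least $L_{2}$'s equilibrium payoff. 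Since $L_{1}$'s equilibrium payoff is at least her payoff from any deviation, we get $\pi_{L_{1}}\ge\pi_{L_{2}}$; the $R$-case is symmetric, and the no-further-communication case is the degenerate instance with empty message protocols.

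The one point that requires care — and the only non-routine step — is the assertion that $L_{1}$'s mimicry does not alter the opponent's behavior or her information: one must note that $\tau'$ is a fixed function of the observed messages, that independence of the two players' types means conditioning on a message history does not couple them, and therefore the conditional distribution of the opponent's action given any message history is the same whether the L-player is the true $L_{2}$ or an $L_{1}$ imitating her. Given that, the desired inequality drops out of the entrywise payoff comparison in the first paragraph.
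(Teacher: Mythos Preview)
Your proposal is correct and follows essentially the same approach as the paper's proof: both rest on the observation that the payoff matrix of $L_{1}$ weakly dominates that of $L_{2}$ entry by entry, and then use an imitation argument to conclude that $L_{1}$ can secure at least $L_{2}$'s equilibrium payoff. The paper states this in two lines (``$L_{1}$ can imitate $L_{2}$ and get at least the same payoff''), whereas you carefully unpack the mechanics of the deviation, verify that the opponent's behavior is unaffected by the mimicry, and compute the resulting payoffs explicitly; your added step of best-responding at the action stage is a harmless refinement, since copying $L_{2}$'s actions outright would already suffice by entrywise dominance.
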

\begin{proof}
The stage game payoff matrix for $L_{1}$ weakly exceeds that of $L_{2}$. Suppose there is an equilibrium in which $L_{2}$ expects a strictly higher payoff that $L_{1}$. Then $L_{1}$ can imitate $L_{2}$ and get at least the same payoff, a contradiction.
\end{proof}

\begin{claim} \label{lem:L1coord}
Consider an equilibrium of this game with communication that CP trumps the considered equilibrium ((L,R),(R,L)) after $(m_R,m_L)$ and that is not itself CP trumped by another strategy. If there is a message $m$ that only type $L_{1}$ $(R_{1})$ sends, then play after this message must be fully coordinated (against all opponent messages).
\end{claim}
\begin{proof}
A message $m$ that only $L_{1}$ sends reveals $L_{1}$ (leads to an updated belief that $m$ sender is of type $L_{1}$ with probability $\alpha=1$). Then only coordinated equilibria are possible as undominated equilibria - see Claim \ref{lem:undomeq} above for cases with $\alpha=1$. An analogous argument can be made for type $R_1$.
\end{proof}

\begin{claim} \label{lem:L1miscoord}
In any equilibrium of this game with communication that CP trumps the considered equilibrium ((L,R),(R,L)) and that is not itself CP trumped by another strategy, there can be no message sent with positive probability by type $L_{1}$ $(R_{1})$ that leads to coordinated play against all opponent messages.
\end{claim}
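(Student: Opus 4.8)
The plan is to suppose, towards a contradiction, that $(\tau,\tau')$ is a stable (not CP‑trumped) equilibrium that CP trumps $((L,R),(R,L))$ and that some message $m\in\mathrm{supp}(\mu_{L_1})$ leads to coordinated play against every opponent message. The first step is to invoke Claim~\ref{lem:undomeq}, applied recursively: a continuation of $(\tau,\tau')$ that itself uses further communication cannot be CP trumped either, so its payoff profile lies on the Pareto frontier of the equilibrium payoffs of the induced game, which (by Claims~\ref{lem:alleq}--\ref{lem:undomeq}) is spanned by the undominated profiles. Among those only the two pure profiles $(L,L)$ and $(R,R)$ are coordinated; hence, conditional on $L_1$ sending $m$, play against any opponent type is a jointly controlled mixture of $(L,L)$ (payoff $2$ to each $L$‑type, $1$ to each $R$‑type) and $(R,R)$ (payoff $1$ to each $L$‑type, $2$ to each $R$‑type). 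Writing $q$ for the resulting probability of $(L,L)$, type $L_1$ gets $1+q$ conditional on $m$, and since $m$ is played with positive probability $\pi_{L_1}=1+q$. As $(\tau,\tau')$ CP trumps the considered equilibrium, $\pi_{L_1}\ge 16/9$, so $q\ge 7/9$.

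\textbf{The copying argument.} Because $u_{L_1}$ and $u_{L_2}$ agree on the diagonal cells $(L,L)$ and $(R,R)$, a type‑$L_2$ player can send $m$ and replicate $L_1$'s full continuation play; the opponent cannot tell the difference, so the induced outcome distribution is unchanged and $L_2$ collects exactly $\pi_{L_1}$. Thus $\pi_{L_2}\ge\pi_{L_1}$, and with Claim~\ref{lem:L1beatsL2} this forces $\pi_{L_1}=\pi_{L_2}\ge 16/9$. The same device, applied now between the two $R$‑types inside the coordinated continuation following $m$ (again diagonal‑equal payoffs, so each $R$‑type can mimic the other and inherit its payoff), forces the two $R$‑types to face the same continuation probabilities there, so each $R$‑type's payoff against $L_1$ equals $3-\pi_{L_1}\le 11/9$. (Some care is needed to handle the cases in which $L_1$ randomizes over several messages or $L_2$ also sends $m$; one checks the conclusions survive.) Feeding $3-\pi_{L_1}\le 11/9$ into $\pi_{R_1}=\tfrac19(R_1\text{ vs }L_1)+\tfrac89(R_1\text{ vs }L_2)\ge 16/9$ then gives $R_1\text{ vs }L_2\ge 133/72$.

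\textbf{The two cases.} I would then split on whether the $(L_2,R_1)$ interaction of $(\tau,\tau')$ involves miscoordination. If it does not, then exactly as above a type‑$R_2$ player can replicate $R_1$ everywhere and obtain $\pi_{R_1}$, so $\pi_{R_1}=\pi_{R_2}\ge 16/9$; combining $\pi_{L_1}=\pi_{L_2}\ge 16/9$ and $\pi_{R_1}=\pi_{R_2}\ge 16/9$, the expected sum of the two players' payoffs is at least $32/9>3$, which is impossible since in every interaction that sum is at most $3$ (equal to $3$ exactly when the players coordinate). If the $(L_2,R_1)$ interaction does involve miscoordination, it must put positive weight $w_3$ on the mixed profile $((L,\mathrm{mix}),(R,\mathrm{mix}))$: there $R_1$ gets $16/9$, but a copying $R_2$ — being the ``fearful'' type with indifference threshold $\varphi_{R_2}=1/9$, whose best reply to the opponent's $\tfrac23$ chance of $L$ is to play $L$ for an expected payoff of only $\tfrac23$ — earns $\tfrac{80}{81}w_3$ less than $\pi_{R_1}$. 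Equilibrium therefore forces $\pi_{R_2}\ge\pi_{R_1}-\tfrac{80}{81}w_3$. On the other hand, the welfare accounting together with $\pi_{L_1}=\pi_{L_2}\ge 16/9$ and $\pi_{R_1}\ge 16/9$ gives $\pi_{R_2}\le\tfrac{83}{72}-\tfrac{10}{81}w_3$; the two bounds force $w_3\ge \tfrac{81}{112}$. But $((L,\mathrm{mix}),(R,\mathrm{mix}))$ yields $R_1$ only $16/9<\tfrac{133}{72}$, so weight $w_3$ on it forces $R_1\text{ vs }L_2\le 2-\tfrac29 w_3<\tfrac{133}{72}$, contradicting the lower bound derived above. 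This completes the $L_1$ case, and the $R_1$ case is symmetric (interchange the two players and the labels $L$ and $R$, using Claims~\ref{lem:alleq}--\ref{lem:L1beatsL2} symmetrically).

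\textbf{Main obstacle.} I expect the hard part to be the bookkeeping in the second case: tracking the mixture weights, the mimicking loss $\tfrac{80}{81}w_3$, and the interim‑versus‑realized payoffs in the mixed profile, and then combining the several resulting linear inequalities so that they are genuinely in conflict; the degenerate possibilities (randomization over messages, pooling of the ``fearful'' type onto the coordinating message, the $R$‑side not revealing types within a continuation) must be verified rather than waved away.
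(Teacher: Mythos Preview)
Your copying argument correctly yields $\pi_{L_1}=\pi_{L_2}\ge\tfrac{16}{9}$, and your Case~1 is essentially right. But you then take a long and error-prone detour: you miss a one-line bootstrapping step that makes your entire Case~2 vacuous. Once $\pi_{L_1}=\pi_{L_2}$, consider any message $m'$ sent by $L_2$. If play after $m'$ were \emph{not} fully coordinated against some opponent message, then $L_1$ could send $m'$ and imitate $L_2$'s continuation; since $u_{L_1}$ weakly dominates $u_{L_2}$ cell-by-cell and strictly on the miscoordination cell $(L,R)$, $L_1$ would obtain \emph{strictly} more than $\pi_{L_2}=\pi_{L_1}$, contradicting equilibrium. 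Hence every $L_2$-message leads to coordinated play; messages sent only by $L_1$ are coordinated by Claim~\ref{lem:L1coord}. So \emph{all} $L$-side messages are coordinated, meaning every encounter is coordinated. Then $R_1$ and $R_2$ can mimic each other at no cost (diagonal payoffs coincide), so $\pi_{R_1}=\pi_{R_2}$, and the welfare identity $\pi_{L_1}+\pi_{R_1}=3$ forces $\pi_{R_1}\le\tfrac{11}{9}<\tfrac{16}{9}$, the desired contradiction. This is exactly the paper's route.

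Your Case~2, besides being unnecessary, contains concrete errors. In the $((L,\mathrm{mix}),(R,\mathrm{mix}))$ profile the $L$-side plays $L$ with probability $\tfrac19$ (this is what makes $R_2$ indifferent, since $\varphi_{R_2}=\tfrac19$), not $\tfrac23$; consequently an $R_2$ who copies $R_1$'s action $R$ there earns $\tfrac19$, not $\tfrac23$, and the per-unit loss relative to $R_1$ is $\tfrac{15}{9}$, not $\tfrac{80}{81}$. Your chain of inequalities built on these numbers does not stand as written. You also defer---without argument---the case in which $L_1$ randomizes over several messages, but that is precisely where the bootstrapping step above is needed: without it your Case~1 claim that ``$R_2$ can replicate $R_1$ everywhere'' is unjustified, because $(L_1,R_1)$ encounters through non-$m$ messages need not be coordinated.
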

\begin{proof} Suppose there is a message $m$ that type $L_{1}$ sends with positive probability that leads to coordinated play (for all opponent messages). Then to Pareto-dominate the original equilibrium $L_{1}$ must expect a payoff of at least $\sfrac{16}{9}$.
Then type $L_{2}$ could imitate $L_{1}$ and also obtain the same payoff that $L_{1}$ obtains (because when play is coordinated both types receive the same payoff). Any other message $m'$ that type $L_{2}$ sends must also provide the same payoff. Suppose play after message $m'$ is not fully coordinated. Then type $L_{1}$ can send message $m'$ and imitate $L_{2}$'s behavior and receive a strictly higher payoff than $L_{2}$ does.
Thus, message $m'$ must also lead to fully coordinated play against all opponent messages.
Any message $m''$ sent by $L_{1}$ and not $L_{2}$ must lead to fully coordinated play as well by Claim \ref{lem:L1coord}. Thus, all messages sent with positive probability must lead to fully coordinated play. Given this, both types $L_{1}$ and $L_{2}$ receive a payoff greater or equal to $\sfrac{16}{9}$. But then $R_{1}$ can only obtain a payoff of at most $3-\sfrac{16}{9} = \sfrac{11}{9}$ (with $3$ being the maximal total payoff in any encounter), and the new equilibrium is no Pareto-improvement, a contradiction.
\end{proof}

\begin{claim}\label{lem:L1noreveal}
In any equilibrium of this game with communication that CP trumps the considered equilibrium ((L,R),(R,L)) and that is not itself CP trumped by another strategy, any message sent with positive probability by type $L_{1}$ $(R_{1})$ must also be sent by $L_{2}$ $(R_{2})$.
\end{claim}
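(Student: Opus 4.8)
The plan is to obtain Claim \ref{lem:L1noreveal} as an immediate consequence of the two preceding claims. Work with a fixed equilibrium of the game with (renegotiation) communication that CP trumps the considered equilibrium $((L,R),(R,L))$ and that is not itself CP trumped. Argue by contradiction: suppose there is a message $m$ that lies in the support of type $L_1$'s message function but not in the support of type $L_2$'s message function. Since the $L$-player has only the two possible types $L_1$ and $L_2$, this says precisely that $m$ is a message that only type $L_1$ sends, so observing $m$ reveals the sender to be of type $L_1$.

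With this observation in hand, the first step is to invoke Claim \ref{lem:L1coord}: because $m$ is sent only by $L_1$, play in the induced coordination game following $m$ must be fully coordinated against every message of the opponent. The second step is to invoke Claim \ref{lem:L1miscoord}, whose conclusion is exactly the negation of this for any message used with positive probability by $L_1$ --- no message in the support of $L_1$'s message function can lead to fully coordinated play against all opponent messages. Since $m$ is such a message, we reach a contradiction, and hence every message in the support of $L_1$'s message function also lies in the support of $L_2$'s. The parenthetical statement for $R_1$ and $R_2$ follows by the symmetric argument, interchanging the roles of the $L$- and $R$-players.

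I do not expect any genuine obstacle: all the substantive work has already been done in Claims \ref{lem:L1coord} and \ref{lem:L1miscoord}, and the present claim is their direct corollary. The only points requiring a moment's care are (i) the translation, using that the $L$-player has exactly two types, between ``$L_2$ does not send $m$'' and the hypothesis ``$m$ is sent only by $L_1$'' of Claim \ref{lem:L1coord}, and (ii) keeping the ``with positive probability'' quantifier consistent throughout, so that $m$ indeed lies in the support of $L_1$'s message function in the precise sense needed for Claim \ref{lem:L1miscoord} to apply.
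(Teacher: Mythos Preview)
Your proposal is correct and follows essentially the same approach as the paper: assume a message $m$ is sent by $L_1$ but not $L_2$, observe that this reveals $L_1$, apply Claim~\ref{lem:L1coord} to conclude coordinated play after $m$, and then derive a contradiction with Claim~\ref{lem:L1miscoord}. The paper's version is simply more terse.
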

\begin{proof} Suppose not and there is a message $m$ that reveals $L_{1}$. Then by Lemma \ref{lem:L1coord} this message must lead to coordinated play which contradicts Lemma \ref{lem:L1miscoord}.
\end{proof}

\begin{claim}\label{lem:onekindofmiscoord}
In any equilibrium of this game with communication that Pareto-dominates the considered equilibrium (L,R),(R,L) and that is not itself CP trumped by another strategy, there must be a message-pair $(m,m')$ sent with positive probability (by both types, respectively) that leads to an (L,R),(R,L) equilibrium with updated beliefs of $\alpha=P(L_{1}|m) \in [\sfrac19,\sfrac12]$ and $\beta=P(R_{1}|m) \in [\sfrac19,\sfrac12]$ that also satisfy $\alpha+\beta \le \sfrac12$.
\end{claim}
\begin{proof} By Claim \ref{lem:L1miscoord} every message sent must induce miscoordination against at least some opponent message. By Claim \ref{lem:L1noreveal} every message $L_{1}$ sends $L_{2}$ also sends. Thus, there must be a message $m$ that leads to an updated belief that $L_{1}$ sent this message of weakly more than $\sfrac19$ (analogously, $m'$ for R types). The only possible miscoordinated (and undominated) equilibrium given $(m,m')$ is given by (L,R),(R,L) - see Table above. We must have $\alpha=P(L_{1}|m) \in [\sfrac19,\sfrac12]$ and $\beta=P(R_{1}|m) \in [\sfrac19,\sfrac12]$ that also satisfy $\alpha+\beta \le \sfrac12$. Otherwise (L,R),(R,L) is Pareto dominated, a contradiction.
\end{proof}

\begin{claim}\label{lem:dominated}
Consider the stage game with $\mu=P(L_{1}|m) \in [\sfrac19,\sfrac12]$ and $\nu=P(R_{1}|m) \in [\sfrac19,\sfrac12]$ that also satisfy $\mu+\nu \le \sfrac12$. Then there is a strategy that CP trumps the equilibrium ((L,R),(R,L)).
\end{claim}
\begin{proof} Consider the following strategy. Types $L_{1}$ and $R_{1}$ send message $m_1$ with probability $1$. Types $L_{2}$ and $R_{2}$ send message $m_1$ with probability $\sfrac13$ and $m_2$ with probability $\sfrac23$. Continuation play is given by the following table:
\[
\begin{array}{c|cc}
 & m_1 & m_2 \\ \hline
m_1 & ((L,R),(R,L)) & ((L,L),(L,L)) \\
m_2 & ((R,R),(R,R)) & \sfrac12 ((L,L),(L,L)) + \sfrac12 ((R,R),(R,R))
\end{array}
\]
with strategies given for $((L_{1},L_{2})$,$(R_{1},R_{2}))$ in this sequence. Importantly ((L,R),(R,L)) is an equilibrium after $(m_1,m_1)$ because $\sfrac19 \le \alpha=P(L_{1}|m) = \sfrac{\nu}{\nu+(1-\nu)\sfrac13} \le \sfrac23$ (which is true for $\nu \le \sfrac25$) as $\nu \le \sfrac12-\sfrac19 = \sfrac{7}{18}<\sfrac25$ and analogously $\sfrac19 \beta \le \sfrac23$. In this equilibrium $L_{1}$ (resp., $R_{1}$) type have a payoff of $2(1-\nu)$ (resp., $2(1-\mu)$), which is the same payoff they get in the original (L,R),(R,L) equilibrium. Types $L_{2}$ and $R_{2}$ receive a payoff of more than $1$, which is more than they receive in the original ((L,R),(R,L)) equilibrium. Thus the new strategy CP trumps the original one.
\end{proof}

Claims \ref{lem:onekindofmiscoord}-\ref{lem:dominated} combine to prove that any strategy that CP trumps ((L,R),(R,L)) in the given game is itself CP trumped. This proves that ((L,R),(R,L)) is weakly communication proof. By Lemma \ref{lem:dominated} we also have that ((L,R),(R,L)) is not strongly communication proof. This proves Proposition \ref{prop:counterexample}.

Note, however, that any strategy that is coordinated and mutual-preference
consistent and has binary communication is strongly communication-proof
also in the general setting, and that only the ``$3\Rightarrow1$''
part of the main result fails without the assumption of {uniform}
coordination preferences. One can show that any communication-proof
equilibrium strategy must satisfy mutual-preference consistency, but,
possibly, need not satisfy the other two properties (namely, coordination
and binary communication).

\subsection{Proof of Proposition \ref{prop:uniquerpcoord}}

\label{app:domtypes} 
\begin{proof}[Proof of Proposition \ref{prop:uniquerpcoord}]
{Any strategy $\sigma$ that is coordinated, mutual-preference consistent,
and has strongly binary communication can be characterized by its
left tendency as follows.
Under such a mutual-preference consistent strategy players indicate
whether their type is below or above $\sfrac{1}{2}$. This means that
there are two disjoint sets of messages, $M_{L}$ and $M_{R}$, such
that players of type $u\le\sfrac{1}{2}$ send a message in $M_{L}$
and players of type $u>\sfrac{1}{2}$ send a message in $M_{R}$.
Also whenever two players both send messages in $M_{L}$ they then
play $L$ and if both send messages in $M_{R}$ they both play $R$.
The left tendency $\alpha=\alpha^{\sigma}$ then describes how moderate
players coordinate if one of them sends a message from $M_{L}$ and
the other sends a message from $M_{R}$. Specifically, $\alpha$
is the average probability that two such moderate players coordinate on $L$ (through
random message selection within the respective sets of messages),
while $1-\alpha$ is then the remaining probability that they coordinate
on $R$.}

{To prove the ``only if'' part, consider an arbitrary left tendency
of $\alpha\in[0,1]$. Then consider a player of type $\sfrac{1}{2}$
who needs to be indifferent between sending a message in $M_{L}$
and sending a message in $M_{R}$ for this strategy to be an equilibrium
strategy (such an indifference implies that any type below (resp., above) $\sfrac{1}{2}$ 
prefers to send a message in $M_{L}$ (resp., $M_{R}$)). If she sends a message in $M_{L}$ she coordinates on $L$
whenever either her opponent sends a message in $M_{L}$ (which happens
with probability $F(\sfrac{1}{2})$), or her moderate opponent sends
a message in $M_{R}$ (which happens with probability $F(1)-F(\sfrac{1}{2})$)
and the joint lottery yields the outcome $L$ (which happens with
probability $\alpha$). By contrast, she coordinates on $R$ whenever
her opponent sends a message in $M_{R}$ and the joint lottery yields
the outcome $R$ (which happens with probability $1-\alpha$). Therefore,
her expected payoff from sending a message in $M_{L}$ is given by
${\textstyle \frac{1}{2}F\left({\textstyle \frac{1}{2}}\right)+{\textstyle \frac{1}{2}\alpha\left(F(1)-F\left({\textstyle \frac{1}{2}}\right)\right)+{\textstyle \frac{1}{2}(1-\alpha)\left(1-F\left({\textstyle \frac{1}{2}}\right)\right).}}}$
Similarly, her expected payoff from sending a message in $M_{R}$
is ${\textstyle \frac{1}{2}\left(1-F\left({\textstyle \frac{1}{2}}\right)\right)}+{\textstyle \frac{1}{2}\alpha F\left({\textstyle \frac{1}{2}}\right)+{\textstyle \frac{1}{2}(1-\alpha)\left(F\left({\textstyle \frac{1}{2}}\right)-F\left(0\right)\right).}}$
Simple calculation shows that her expected payoff from sending a message in $M_{L}$ is
equal to her expected payoff from sending a message in $M_{R}$ iff
$\alpha=F(0)/\left(F(0)+(1-F(1))\right)$, as required.}

{To prove the ``if'' direction, we need to show that a strategy that satisfies the three adapted key properties and has a left tendency of $\alpha=F(0)/\left(F(0)+(1-F(1))\right)$
is both an equilibrium and a communication-proof strategy. To prove
the latter condition the same arguments as in the relevant parts of
the proof of the ``if'' direction of Theorem \ref{thm:uniqueness}
apply directly. It remains to show that such a strategy is an equilibrium
strategy. We have already shown that the message function is a best
reply to itself and the action function. All that remains to prove
is that the action function is a best reply to the given strategy.
It is easy to see that playing $L$ is the optimal strategy when both
players send a message in $M_{L}$ and thus are of type $u<\sfrac{1}{2}$.
In doing so, they coordinate on their most preferred outcome with
probability one. Similarly, playing $R$ after two messages in $M_{R}$
is clearly optimal.} 

{Now suppose that one player sends a message in
$M_{L}$ and the other player sends a message in $M_{R}$. There are
two possibilities. Either they are now supposed to both play $L$
(unless they are an extreme $R$ type) or they are now supposed to
both play $R$ (unless they are an extreme $L$ type). Consider first
the person who sends a message in $M_{L}$ and therefore be of type
$u<\sfrac{1}{2}$. Suppose that the two players are expected to coordinate
on $R$. Since her opponent sent a message in $M_{R}$, our $M_{L}$
sender expects $R$ with a probability of one (as all $M_{R}$ senders
are of type $u>\sfrac{1}{2}$, which excludes $L$-dominant action
types). But then our $M_{L}$ sender of any type $u>0$ has a strict
incentive to play $R$ as well. Now suppose that the two players are
expected to coordinate on $L$. Then our $M_{L}$ sender expects her
opponent to play $L$ with a probability of $\left(F(1)-F(\sfrac{1}{2})\right)/\left(1-F(\sfrac{1}{2})\right)$,
which is the conditional probability of an $R$-type to be moderate,
which by assumption is greater than or equal to $\sfrac{1}{2}$. Playing
$L$ in this case is therefore optimal for all $M_{L}$ senders. That
the $M_{R}$ sender has the correct incentives in her choice of action
after any mixed-message pair (one in $M_{R}$ and one in $M_{L}$)
is proven analogously and requires the assumption that $F(0)/F(\sfrac{1}{2})\le\sfrac{1}{2}$.} 
\end{proof}
{We now show when one can implement a coordinated, mutual-preference
consistent strategy with binary communication with the required left
tendency of $\alpha=F(0)/\left(F(0)+(1-F(1))\right)$. This implementation
requires two things. First, $\alpha$ needs to be a rational number,
and second, the message space needs to be sufficiently large.}\footnote{{The method for implementing a binary joint lottery of $\alpha$ and
$1-\alpha$ is based on} \citet{maschler1968repeated} {and relies on
$\alpha$ being a rational number. In order to deal with irrational
$\alpha$-s one needs either to slightly weaken the result to show
that there exists a communication-proof $\epsilon$-equilibrium strategy
(in which each type of each player gains at most $\epsilon$ from
deviating) for any $\epsilon>0$, or to allow an infinite set of messages
or a continuous ``sunspot.''}} {Let $\alpha=k/n$, and assume that $\left|M\right|\geq2n$.
Denote $2n$ distinct messages as $\left\{ m_{L,1},\ldots,m_{L,n},m_{R,1},\ldots,m_{R,n}\right\} \in M$,
where we interpret sending message $m_{L,i}$ (resp., $m_{R,i}$) as expressing a preference
for $L$ (resp., $R$) and choosing at random the number $i$ from the set of numbers
$\left\{ 1,\ldots,n\right\} $ in the joint lottery described below.
We arbitrarily interpret any message $m\in M\setminus\left\{ m_{L,1},\ldots,m_{L,n},m_{R,1},\ldots,m_{R,n}\right\} $
as equivalent to $m_{L,1}$. Given message $m\in M,$ let $i\left(m\right)$
denote its associated random number, e.g., $i\left(m_{L,j}\right)=j$.
Let $M_{R}=\left\{ m_{R,1},\ldots,m_{R,n}\right\} $ and $M_{L}=M\setminus M_{R}$.
Then $\sigma_{\alpha}=\left(\mu_{\alpha},\xi_{\alpha}\right)$ can
be defined as follows:} 
\[
\mu_{\alpha}\left(u\right)=\begin{cases}
\frac{1}{n}m_{L,1}+\ldots+\frac{1}{n}m_{L,n} & u\leq\frac{1}{2}\\
\frac{1}{n}m_{R,1}+\ldots+\frac{1}{n}m_{R,n} & u>\frac{1}{2},
\end{cases}\,\,\,\,\xi_{\alpha}\left(m,m'\right)=\begin{cases}
0 & \left(m,m'\right)\in M_{R}\times M_{R}\\
0 & \left(m,m'\right)\not\in M_{L}\times M_{L}\\
 & \mbox{ and }(i\left(m\right)+i\left(m'\right)\mod n)\,>k\\
1 & \mbox{otherwise}.
\end{cases}
\]

{Thus, $\mu_{\alpha}$ induces each agent to reveal whether her preferred
outcome is $L$ or $R$, and to uniformly choose a number between
$1$ and $n$. If both agents share the same
preferred outcome they play it. Otherwise, moderate types coordinate
on $L$ if the sum of their random numbers modulo $n$ is at most
$k$, and coordinate on $R$ otherwise. Extreme types play their strictly
dominant action.}
\section{More on Properties of Strategies} \label{app:properties}

In this appendix we demonstrate that no single one of the three properties (mutual-preference consistency, coordination, and binary communication) is implied by the other two. Clearly a strategy that has binary communication and is coordinated must be an equilibrium. No other combination of two of the three properties implies that a strategy is an equilibrium. Finally, we also define what it means for a strategy to be ordinal preference-revealing and show that this is implied by it being mutual-preference consistent.

Consider the following strategy $\sigma=(\mu,\xi)$ in the game with communication with a message set $M$ that contains at least three elements. Let $m_{L}^{1},m_{L}^{2},m_{R}\in M$, let

\[
\mu(u)=\left\{ \begin{array}{cc}
m_{L}^{1} & \mbox{ if }u\le\sfrac{1}{4}\\
m_{L}^{2} & \mbox{ if }\sfrac{1}{4}<u\le\sfrac{1}{2}\\
m_{R} & \mbox{ if }u>\sfrac{1}{2}
\end{array}\right.,
\]
and let $\xi$ be such that $\xi(m_{L}^{i},m_{L}^{j})=L$ for all $i,j\in\{1,2\}$, $\xi(m_{R},m_{R})=R$, $\xi(m_{L}^{1},m_{R})=\xi(m_{R},m_{L}^{1})=R$, and $\xi(m_{L}^{2},m_{R})=\xi(m_{R},m_{L}^{2})=L$. This strategy is mutual-preference consistent and coordinated but does not have binary communication. It is not an equilibrium as types $u\le\sfrac{1}{4}$ would strictly prefer to send message $m_{L}^{2}$.

Consider the following strategy $\sigma=(\mu,\xi)$ in the game with communication with a message set $M$ that contains at least two elements. Let $m_{L},m_{R}\in M$, let $\mu(u)=m_L$ if $u\le\sfrac{1}{2}$ and $\mu(u)=m_R$ if $u>\sfrac{1}{2}$.
Let $\xi$ be such that $\xi(m_{L},m_{L})=L$, $\xi(m_{R},m_{R})=R$, $\xi(m_{L},m_{R})=\sfrac{1}{4}$, and $\xi(m_{R},m_{L})=\sfrac{3}{4}$. This strategy is mutual-preference consistent, has binary communication, but is not coordinated. For almost all type distributions $F$ this is not an equilibrium: it is only an equilibrium if $F$ satisfies $$\left(F(\sfrac{3}{4})-F(\sfrac{1}{2})\right)/\left(1-F(\sfrac{1}{2})\right)=\sfrac{1}{4}
\mbox{ and } F(\sfrac{1}{4})/F(\sfrac{1}{2})=\sfrac{3}{4}.$$
Finally, for a coordinated strategy with binary communication that is not mutual-preference consistent, consider the equilibrium that always leads to coordination on $L$ for any pair of messages.

Note also that an equilibrium does not necessarily satisfy any of the three properties. The interior cutoff babbling equilibria mentioned in Section \ref{sec:Equilibrium-Strategies} are not coordinated and not mutual-preference consistent. The equilibrium of Example \ref{exa:high-ex-ante-miscoordination} does not have binary communication.

Call a strategy $\sigma=(\mu,\xi)\in\Sigma$ \emph{ordinal preference-revealing} if there exist two nonempty, disjoint, and exhaustive subsets of $\mbox{supp}(\bar{\mu})$ denoted by $M_{L}$ and $M_{R}$ (i.e., $\mbox{supp}(\bar{\mu})=\dot{M_{L}\bigcup M_{R}}$) such that if $u<\sfrac{1}{2}$, then $\mu_{u}(m)=0$ for each $m\in M_{R}$, and if $u>\sfrac{1}{2}$, then $\mu_{u}(m)=0$ for each $m\in M_{L}$. With an ordinal preference-revealing strategy a player indicates her ordinal preferences. A strategy $\sigma$ that is mutual-preference consistent is also ordinal preference-revealing (but not vice versa). Suppose not. Then there is a message $m$ and two types $u<\sfrac{1}{2}$ and $v>\sfrac{1}{2}$ such that $\mu_{u}(m),\mu_{v}(m)>0$. But then no matter how we specify $\xi(m,m)$ we get either that if two types $u$ meet they do not coordinate on $L$ with probability one or if two types $v$ meet they do not coordinate on $R$ with probability one.

\section{Multiple Rounds of Communication\label{subsec:Multiple-Rounds-of}}

Consider a variant of the coordination game with communication in which players have $T\ge1$ of rounds of communication. In each such round players simultaneously send messages from the set $M$. Players observe messages after each round and can, thus, condition their message choice and then their final action choice on the history of observed message pairs up to the point in time where they take their message or action decision. Renegotiation then possibly takes place once at the end of this communication phase but before the final action choices are made. Let $\mathcal{M}=\bigcup_{t=0}^{T-1}\left(M\times M\right)^{\text{t}}$, where $\left(M\times M\right)^{0}=\emptyset$.

A (pure) \emph{message protocol} is a function $\mathfrak{m}:\mathcal{M}\rightarrow M$ that describes the message sent by an agent as a deterministic function of the message profiles observed in the previous rounds of communication. Let $\mathfrak{M}$ be the set of all message protocols. A strategy $\sigma=(\mu,\xi)$ is a pair where $\mu:U\to\Delta(\mathfrak{M})$ denotes the \emph{message function}, prescribing a (possibly random) message protocol for each type, and $\xi:\left(M\times M\right)^{T}\to U$ denotes the \emph{action function} by means of describing the cutoff (the highest possible value of $u$) for the two players to choose action $L$ after observing the final message history. Renegotiation is modeled, as in the main text, as a possibility for the two players to play an equilibrium of a new game with another round of communication after all messages are sent, possibly using a different message set.

Next, we adapt the notion of binary communication to fit multiple rounds of communication. For any message protocol $\mathfrak{m}\in\mathfrak{M}$, let $\beta^{\sigma}(\mathfrak{m})$ denote the expected probability of a player's opponent playing $L$ conditional on the player following message protocol $\mathfrak{m}\in\mathfrak{M}$ and the opponent following strategy $\sigma=(\mu,\xi)\in\Sigma$. We say that strategy $\sigma$ has \emph{binary communication} if there are two numbers $0\le\underline{\beta}^{\sigma}\le\overline{\beta}^{\sigma}\le1$ such that for all message protocols $\mathfrak{m}\in\mathfrak{M}$ we have $\beta^{\sigma}(\mathfrak{m})\in[\underline{\beta}^{\sigma},\overline{\beta}^{\sigma}]$, for all message protocols $\mathfrak{m}\in\mathfrak{M}$ such that there is a type $u<\sfrac{1}{2}$ with $\mu_{u}(\mathfrak{m})>0$ we have $\beta^{\sigma}(\mathfrak{m})=\overline{\beta}^{\sigma}$, and for all message protocols $\mathfrak{m}\in\mathfrak{M}$ such that there is a type $u>\sfrac{1}{2}$ with $\mu_{u}(\mathfrak{m})>0$ we have $\beta^{\sigma}(\mathfrak{m})=\underline{\beta}^{\sigma}$. That is, binary communication implies that players use just two kinds of message protocols: any message protocol used by types $u<\sfrac{1}{2}$ induces the consequence of maximizing the probability of the opponent to play $L$, and any message protocol used by types $u>\sfrac{1}{2}$ induces the opposite consequence of maximizing the probability of the opponent to play $R$.

Theorem \ref{thm:uniqueness}, together with Propositions \ref{prop:Pareto-optimal}-\ref{prop-maximizing-coordinated}, holds in this setting with minor adaptations to the proof (omitted for brevity). Thus, regardless of the length of the pre-play communication, agents can reveal only their preferred outcome (but not the strength of their preference), and, regardless of having access to additional rounds of communication, they cannot improve the ex-ante expected payoff relative to the payoff induced by a single round of communication with a binary message.

\bibliographystyle{chicago}
\bibliography{CBC-bibil}

\end{document}